\newtheorem{theorem}{Theorem}[section]
\newtheorem{lemma}[theorem]{Lemma}
\newtheorem{proposition}{Proposition} % 若需要独立于 section，去掉 [section]
\theoremstyle{definition}
\numberwithin{equation}{section}
\begin{document}
	\date{}
	\title{ \bf\large{Effect of protection zone on the dynamics of  a diffusion-advection population-toxicant model}\footnote{This work was partially supported by grants from National Science Foundation of China (12371503, 12071382), Natural Science Foundation of
Chongqing (CSTB2022NSCQ-MSX0284, CSTB2024NSCQ-MSX0992).}}
	\author{Jing Gao,\ \ Xiaoli Wang,\footnote{Corresponding Author. Email: wxl711@swu.edu.cn }\ \ Guohong Zhang
		\\
		{\small School of Mathematics and Statistics, Southwest
			University, }}
	\maketitle
	\begin{abstract}
		{ This paper develops and analyzes a diffusion-advection model coupling population dynamics with toxicant transport, incorporating a boundary protection zone. For both upstream and downstream protection zone configurations, we investigate the combined influence of protected zones and key ecological factors on population persistence or extinction. Employing monotone dynamical system theory and eigenvalue analysis, we establish the global dynamics of the population-toxicant coexistence equilibrium. Furthermore, we characterize the parameter dependence governing the stability of the toxicant-only steady state, specifically examining the protected zone length, toxicant effect coefficient on population growth, per-unit contaminant discharge rate, toxicant input rate, diffusion/advection rates, and population natural growth rate. Finally, numerical simulations reveal the complex interplay between the protection zone and toxicant advection rate in significantly shaping population persistence domains.}

		\noindent{\emph{Keywords}}: Reaction-diffusion-advection; Protection zone; Population persistence; Monotone dynamical system; Toxicant
	\end{abstract}

%\newpage

\section {Introduction}

%\subsection{Related works about water pollution}
Aquatic ecosystems constitute vital natural environments essential for human survival. However, rapid industrialization and economic growth have continuously intensified environmental pressures. Large quantities of toxic substances, including untreated industrial wastewater, chemical detergents, and plastic fragments, are discharged directly into rivers and lakes, posing significant threats to aquatic organisms and severely disrupting ecosystem balance. As water pollution worsens globally, its remediation has become a critical issue of widespread concern.

Ecologists and mathematicians have employed mathematical modeling to investigate the long-term sustainability of aquatic communities and ecosystems \cite{hayashi2009,jager2010,spromberg2006}. Various models, including ordinary differential equation  models \cite{freedman1991,hallam1983,huang2015impact} and matrix population models \cite{erickson2014,hayashi2009,spromberg2005modeling},  have been developed.
However, these models often inadequately account for the significant influence of hydrological and physical properties of water bodies on population dynamics and toxicant behavior. In river ecosystems, once introduced, toxicants disperse via both diffusion and advective transport. Furthermore, given that the longitudinal scale of rivers typically far exceeds their cross-sectional dimensions, toxicants rapidly achieve cross-sectional homogeneity upon entering the flow. Reflecting this characteristic, advection-diffusion-reaction models in the horizontal dimension  are commonly adopted for theoretical studies
\cite{simon2015modeling,waghmare2017,zhang2017pollutant}. While such models for contaminated rivers effectively predict the spatiotemporal evolution of pollutants and their environmental impacts, they have paid less attention to the potential role of toxicants in actively shaping population dynamics.

 To specifically study the effects of environmental toxicants on populations in polluted rivers, Zhou and Huang \cite{huang2022} develop the following model  describing population-toxicant interactions in an advective environment:
\begin{equation}\label{eq0}
	\begin{cases}
		u_t=d_1u_{xx}-a_1u_x+u\left( r-cu-mw \right),&x\in \left( 0,L \right), t>0,
		\\
		w_t=d_2w_{xx}-a_2w_x+h(x)-qw-puw,&x\in \left(0,L \right), t>0,
		\\
		d_1u_x\left( 0,t \right) -a_1u\left( 0,t \right) =u_x\left( L,t \right) =0,&t>0,
		\\
		d_2w_x\left( 0,t \right) -a_2w\left( 0,t \right) =w_x\left( L,t \right) =0,&t>0,
		\\
		u\left( x,0 \right) =u_0\left( x \right) \geq,\not\equiv 0,
		\\
		w\left( x,0 \right) =w_0\left( x \right) \geq,\not\equiv 0.
	\end{cases}
	\end{equation}
Here $u(x, t)$ and $w(x, t)$ represent the population density
and the toxicant concentration at location $x$ and time $t$, respectively, and the function $h(x)$ is the rate of the exogenous
input of the toxicant into the river. In \eqref{eq0}, $u(x, t)$ and $w(x, t)$ satisfy a Robin condition at the upstream boundary $x=0$ and a Neumann condition at the downstream boundary $x= L$. For the interpretations of $d_1, d_2, r, c, m, p, q, L >0$ and $a_1, a_2\geq0$, one can refer to \cite{huang2022}. Based on eigenvalue analysis, \cite{huang2022} derived sufficient conditions for population persistence or extinction and numerically investigated the effects of factors like toxicant input, flow velocity, and diffusion/advection characteristics on population persistence and spatial distribution.
% The effects of several factors (toxicant input,
%flow velocity, the diffusion and advection characteristics of the population and the toxicant) on
%the persistence and spatial distribution of the population are also numerically investigated.
%Recently, Zhou and Huang \cite{huang2022} developed a reaction-diffusion-advection model to describe the interaction between toxicants and populations in advection-dominated polluted rivers. Their study aims to investigate the influence of toxicant diffusion mechanisms on population dynamics in polluted water systems. Specifically, they analyzed the existence and stability of steady states—namely, population extinction and population persistence—under appropriate parameter values of the toxicant impact parameter $m$.
With a change of variables and assume that $h(x)\equiv h>0$, model \eqref{eq0} can be rewritten as:
 \begin{equation}\label{eq00}
	\begin{cases}
		u_t=d_1u_{xx}-a_1u_x+u\left( r-u-mw \right),&x\in \left( 0,L \right), t>0,
		\\
		w_t=d_2w_{xx}-a_2w_x+h-qw-puw,&x\in \left(0,L \right), t>0,
		\\
		d_1u_x\left( 0,t \right) -a_1u\left( 0,t \right) =u_x\left( L,t \right) =0,&t>0,
		\\
		d_2w_x\left( 0,t \right) -a_2w\left( 0,t \right) =w_x\left( L,t \right) =0,&t>0,
		\\
		u\left( x,0 \right) =u_0\left( x \right) \geq,\not\equiv 0,
		\\
		w\left( x,0 \right) =w_0\left( x \right) \geq,\not\equiv 0.
	\end{cases}
	\end{equation}
 Wang \cite{wangqi2023} extensively analyzed the dependence of the toxicant-only semi-trivial steady state of model \eqref{eq00} on multiple parameters by classifying cases based on the presence of advection in the species, toxicant, or both.
 Xing et al. \cite{xing2024}  extended  model \eqref{eq0} by incorporating the assumption that populations tend to move away from high toxicant concentrations, introducing a toxicant-taxis term. Their study investigated the existence of positive steady states and the influence of chemotaxis and population advection velocity on the stability of the pure-toxicant steady state, offering novel insights into toxicant-population dynamics.

 % This study investigates the existence of positive steady states and examines how chemotaxis and population advection velocity influence the stability of pure toxicant steady states. These advancements provide a novel perspective for understanding the complex interplay between toxicants and population dynamics.
%\subsection{Related works about protection zones}

%Establishing protected zones represents a key strategy for enhancing the survival prospects of aquatic organisms in polluted rivers. This involves engineered pollutant remediation and targeted detoxification within a designated river section. The objectives are to ensure unimpeded growth for species within the zone and unrestricted diffusion of organisms therein. Crucially, key scientific questions emerge: How does implementing such protected zones influence population persistence or extinction dynamics in contaminated rivers? Furthermore, how do ecological factors modulate population responses under protected area regimes?

To enhance the survival prospects of aquatic organisms in polluted rivers or streams, establishing protected zones represents a key strategy. This involves engineered pollutant remediation and targeted detoxification within a designated river section. The objectives are twofold: to ensure that species inhabiting this zone can grow unimpeded by toxicant, and to maintain unrestricted diffusion of aquatic organisms within the protected area.
  Critical questions then arise: How does the implementation of such protected zones influence population persistence or extinction dynamics within contaminated riverine ecosystems? Furthermore, how do ecological factors modulate population responses under protected area regimes?
   To address these questions, inspired by related works on protection zones in predator-prey and competition models \cite{cui2014strong,du2008diffusive,du2006diffusive,he2017protection,oeda2011effect,oeda2017steady,duiliu2024,wang2013effect,zaw2022dynamics}, we extend the framework of model \eqref{eq00} to investigate a population-toxicant model incorporating a protected zone. Empirical evidence suggests that establishing a single large protected area is generally more economical and practical than creating multiple smaller zones of equivalent total size
   \cite{du2006diffusive}. Consequently, this study focuses on a population-toxicant model featuring a single protected zone. Specifically, we examine the ecological implications of situating this protected zone at two distinct locations: upstream and downstream.

A spatiotemporal model to assess the impact of toxicants on aquatic populations in a polluted river with \textit{a  upstream protected zone} under advective conditions is
%This study investigates the spatiotemporal dynamics of populations exposed to toxicants in a polluted river featuring an upstream protected area, within an advective environment.
%The spatiotemporal model of the impact of toxicants on populations in a polluted river with a boundary upstream protection zone in an advection environment is
\begin{equation}\label{eq1.1}
	\begin{cases}
		u_t=d_1u_{xx}-a_1u_x+u\left( r-u-m\chi _{\left( x_1,L \right]}w \right),&x\in \left( 0,L \right) ,t>0,
		\\
		w_t=d_2w_{xx}-a_2w_x+h-qw-puw,&x\in \left( x_1,L \right) ,t>0,
		\\
		d_1u_x\left( 0,t \right) -a_1u\left( 0,t \right) =u_x\left( L,t \right) =0,&t>0,
		\\
		d_2w_x\left( x_1,t \right) -a_2w\left( x_1,t \right) =w_x\left( L,t \right) =0,&t>0,
		\\
		u\left( x,0 \right) =u_0\left( x \right) \geq,\not\equiv 0,&0<x<L,
		\\
		w\left( x,0 \right) =w_0\left( x \right) \geq,\not\equiv 0,&x_1<x<L.

	\end{cases}
	\end{equation}
	Here, $\chi _{\left( x_1,L \right]}$ is defined as
	\begin{equation*}
		\chi _{\left( x_1,L \right]}=
		\begin{cases}
			0,&0 \le x\le x_1,\\
			1,&x_1<x\le L.
		\end{cases}
	\end{equation*}
  This means that $(0, x_1)$, imposed  a Robin condition  on the toxicant at  $x_1$, represents a  upstream protected zone where aquatic species are unaffected by toxicant.

  %On the other hand, the spatiotemporal model of the impact of toxicants on populations in a polluted river with a boundary downstream protection zone in an advection environment is
  Alternatively, a spatiotemporal model of populations exposed to toxicants in a polluted river featuring \textit{a downstream protected zone} within an advective environment is
  \begin{equation}\label{eq1.2}
  	\begin{cases}
  		u_t=d_1u_{xx}-a_1u_x+u\left( r-u-m\chi _{\left[ 0,x_2 \right)}w \right),&x\in \left( 0,L \right) ,t>0,
  		\\
  		w_t=d_2w_{xx}-a_2w_x+h-qw-puw,&x\in \left( 0,x_2 \right) ,t>0,
  		\\
  		d_1u_x\left( 0,t \right) -a_1u\left( 0,t \right) =u_x\left( L,t \right) =0,&t>0,
  		\\
  		d_2w_x\left( 0,t \right) -a_2w\left( 0,t \right) =d_2w_x\left( x_2,t \right) -a_2w\left( x_2,t \right) =0 ,&t>0,
  		\\
  		u\left( x,0 \right) =u_0\left( x \right) \geq,\not\equiv 0,&0<x<L,
  		\\
  		w\left( x,0 \right) =w_0\left( x \right) \geq,\not\equiv 0,&0<x<x_2.
  	\end{cases}
  \end{equation}
  Here, $\chi _{\left[ 0,x_2 \right)}$ is defined as
  \begin{equation*}
  	\chi _{\left[ 0,x_2 \right)}=
  	\begin{cases}
  		1,&0 \le x < x_2,\\
  		0,&x_2\le x\le L.
  	\end{cases}
  \end{equation*}
  Then $(x_2, L)$, imposed a Robin condition  on the toxicant at $x_2$, represents a downstream protected zone where aquatic species are unaffected by toxicant.

This paper is organized as follows. Section 2 presents the main results for models \eqref{eq1.1} and \eqref{eq1.2}. The proofs of these results are provided in Section 3. Section 4 numerically analyzes the combined effects of toxicant advection and protected zone size interactions on population persistence, as well as the distributions of the population and toxicant. Finally, Section 5 summarizes our conclusions and discusses future work.

%This paper is organized as follows. Section 2 contains the main results of model \eqref{eq1.1} and \eqref{eq1.2}. The proofs of the main results are given in  Section 3. In Section 4, we numerically analyze the joint effects of toxin advection and reserve size interactions on population persistence and the distribution of populations and toxins. In Section 5, we summarize our conclusions and discuss our future work.
%\subsection{Our models}

\section{Main results}

Systems \eqref{eq1.1} and \eqref{eq1.2} admit a unique semi-trivial steady state, denoted by $(0,\tilde{w})$ and $(0,\bar{w})$, respectively, where $\tilde{w}$ and $\bar{w}$ will be defined  in Lemma \ref{lemma2.1}. Similarly to \cite{wangqi2023,huang2022}, the following proposition on the global dynamics of systems~\eqref{eq1.1} and \eqref{eq1.2} is true.
\begin{proposition}\label{propA}
\begin{description}
  \item[$(i)$] If system~\eqref{eq1.1} (resp. \eqref{eq1.2}) has no positive steady states, then the semi-trivial steady state $(0,\tilde{w})$ (resp. $(0,\bar{w})$) is globally asymptotically stable among all nonnegative and nontrivial initial conditions.
  \item[$(ii)$] If the semi-trivial steady state  $(0,\tilde{w})$ (resp. $(0,\bar{w})$)  is linearly unstable, then system~\eqref{eq1.1} (resp.\eqref{eq1.2}) has at least one local stable coexistence steady state; Furthermore, if every coexistence
	steady state is linearly stable, then there is a unique coexistence steady state which is globally asymptotically stable among all nonnegative and nontrivial initial conditions.
\end{description}
\end{proposition}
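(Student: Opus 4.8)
The plan is to treat both systems as \emph{competitive} reaction--diffusion--advection systems and to invoke the theory of monotone dynamical systems, exactly as in \cite{huang2022,wangqi2023}. The starting observation is that the reaction terms satisfy $\partial_w[u(r-u-m\chi w)]=-m\chi u\le 0$ and $\partial_u[h-qw-puw]=-pw\le 0$ on the nonnegative cone, so the nonlinearity is quasimonotone for the competitive order $(u_1,w_1)\preceq(u_2,w_2)\iff u_1\le u_2,\ w_1\ge w_2$ induced by the cone $K=\{(\phi,\psi):\phi\ge 0,\ \psi\le 0\}$. Together with the order-preserving Robin/Neumann boundary conditions, the parabolic comparison principle shows that the solution semiflow $T(t)$ is monotone with respect to $\preceq$; after verifying irreducibility of the linearization I would upgrade this to a strongly order-preserving semiflow. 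I would also record the standard a priori bounds ($u$ controlled by the logistic structure, $w$ by the input--degradation balance $h-qw$) to conclude that $T(t)$ is point dissipative and admits a compact global attractor, so every orbit is eventually trapped in an order interval.

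Next I would identify the two \emph{extremal} equilibria that organize the dynamics. Since $\{u\equiv 0\}$ is invariant and carries the single globally attracting toxicant profile $\tilde w$ (resp. $\bar w$), the state $E_-=(0,\tilde w)$ is the minimal equilibrium for $\preceq$: smallest population and, because any $u>0$ depletes the toxicant through $-puw$, largest toxicant. A maximal equilibrium $E_+=(u^*,w^*)$ is obtained by starting the flow from a large constant super-solution $(U,0)$ (largest $u$, smallest $w$, hence $\preceq$-maximal) and using monotone convergence of the resulting $\preceq$-decreasing orbit. Every nontrivial orbit is then squeezed, $T(t)(0,W)\preceq T(t)(u_0,w_0)\preceq T(t)(U,0)$, between the flow from $E_-$ below and from $E_+$ above. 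Moreover, by the strong maximum principle any equilibrium with $u\not\equiv 0$ satisfies $u>0$ and $w=h/(q+pu)>0$, so ``positive steady state'' is synonymous with ``coexistence equilibrium,'' and $E_-$ is the only boundary equilibrium.

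For part $(i)$, the absence of positive steady states forces $E_-=E_+$: both monotone flows bounding the squeeze converge to $(0,\tilde w)$, and hence so does every nontrivial orbit, giving global asymptotic stability. For part $(ii)$, linear instability of $E_-$ means the principal eigenvalue of the linearization transverse to $\{u\equiv 0\}$ is positive with a positive eigenfunction $\phi$; I would use $\epsilon\phi$ to build a $\preceq$-sub-equilibrium strictly above $E_-$, whose $\preceq$-increasing orbit converges to a coexistence equilibrium $E_m$, and the general convergence theory for monotone semiflows guarantees that an extremal such coexistence state is locally asymptotically stable. Under the hypothesis that \emph{every} coexistence equilibrium is linearly stable, uniqueness follows from the Dancer--Hess principle that two ordered stable equilibria must enclose an intermediate non-stable equilibrium: this contradicts the hypothesis unless the minimal and maximal coexistence states coincide. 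Writing $E_*$ for this unique state, instability of $E_-$ yields uniform persistence, so a nontrivial orbit eventually lies $\preceq$-strictly above a sub-equilibrium near $E_-$ and below $E_+$; squeezing between the two monotone flows, both converging to $E_*$, delivers global asymptotic stability among nontrivial data.

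The step I expect to be the genuine obstacle is establishing the \emph{strong} monotonicity (irreducibility) of the semiflow in the present geometry: the state variables live on mismatched intervals ($u$ on $(0,L)$, $w$ on $(x_1,L)$ or $(0,x_2)$), and the indicator $\chi$ switches the coupling off inside the protected zone, so the off-diagonal coupling $-m\chi u$ degenerates there even when $u>0$. I would handle this by working on the product ordered Banach space adapted to the two domains and arguing that diffusion transmits positivity across the interface, so that the linearized operator (cooperative after the sign flip $w\mapsto -w$) remains irreducible over the whole configuration; the remaining ingredients --- the comparison principle, the construction of ordered sub/super-solutions, and the Dancer--Hess connecting-orbit lemma --- are then routine adaptations of \cite{huang2022,wangqi2023}.
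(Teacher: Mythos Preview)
Your proposal is correct and takes essentially the same route as the paper. The paper does not actually prove Proposition~A; it merely asserts that the result holds ``similarly to \cite{wangqi2023,huang2022}'' and, in the Preliminaries, sets up exactly the competitive order you describe via the cones $K_i=X^+\times(-X_i^+)$ on the product spaces $C([0,L])\times C([x_1,L])$ (resp.\ $C([0,x_2])$). Your outline --- strong monotonicity of the semiflow in this order, dissipativity, the single boundary equilibrium $E_-=(0,\tilde w)$, monotone convergence from ordered sub/super-solutions, and the Dancer--Hess connecting-orbit argument to force uniqueness when every coexistence state is linearly stable --- is precisely the machinery of those references. The paper's Lemma~\ref{l1} already carries out the sub/super-solution bracketing $(0,\tilde w)\le_{K_1}(u,w)\le_{K_1}(\tilde u,\underaccent{\sim}{w})$ that you propose, with $(\tilde u,\underaccent{\sim}{w})$ playing the role of your $E_+$-generating super-solution. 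Your identification of the one genuinely new technical point --- irreducibility of the linearization despite the mismatched domains and the vanishing of the coupling $-m\chi u$ inside the protection zone --- is apt, and your resolution (diffusion of $u$ across the interface $x=x_1$ transmits strict positivity, so the cooperative system obtained after $w\mapsto -w$ is still irreducible on the product configuration) is the right one. One cosmetic slip: the formula $w=h/(q+pu)$ you quote is the spatially homogeneous balance, not the steady state of the advection--diffusion problem; the conclusion $w>0$ of course still follows from $h>0$ and the maximum principle.
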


Define
 \begin{equation*}
 \begin{aligned}
 &L^*=2d_1\frac{\pi-\arctan\frac{\sqrt{4d_1r-a_1^2}}{a_1}}{\sqrt{4d_1r-a_1^2}},\\
 	&L_1^*=
 	\begin{cases}
 	2d_1\frac{\arctan\frac{a_1\sqrt{4d_1r-a_1^2}}{2d_1r-a_1^2}}{\sqrt{4d_1r-a_1^2}},\quad&0<a_1\le \sqrt{2d_1r},\\
 	2d_1\frac{\pi+\arctan\frac{a_1\sqrt{4d_1r-a_1^2}}{2d_1r-a_1^2}}{\sqrt{4d_1r-a_1^2}},\quad&\sqrt{2d_1r}<a_1< \sqrt{4d_1r}.
 	\end{cases}
 \end{aligned}
 \end{equation*}
Then $L^*>L_1^*$ \cite{lou2015evolution}.

 First, we give our main results on the dynamics of model \eqref{eq1.1}.

 \begin{theorem}\label{th1.1}
 	If $a_1\ge\sqrt{4d_1r}$ or $0< a_1<\sqrt{4d_1r}$ and $L<L^*_1$, then $(0,\tilde{w})$ is globally asymptotically stable. If $0< a_1<\sqrt{4d_1r}$ and $L>L_1^*$, then there exist critical values $L^*, m^*(x_1), q^*(x_1), h^*(x_1)$ such that the following statements hold:
 \begin{description}
   \item[$(i)$] If $x_1\ge L^*$ $(\text{large protection zone})$, then $(0,\tilde{w})$ is unstable for any $m,q,h>0$;
   \item[$(ii)$] If $x_1< L^*$ $(\text{small protection zone})$, then the following conclusions are true:
   \begin{description}
     \item[$(ii.1)$] If $m>m^*(x_1)$, then $(0,\tilde{w})$ is linearly stable; if $0<m<m^*(x_1)$, then $(0,\tilde{w})$ is unstable, where $m^*(x_1)$ is strictly increasing with respect to $x_1\in [0, L^*)$;
     \item[$(ii.2)$] If $0<q<q^*(x_1)$, then $(0,\tilde{w})$ is linearly stable; if $q>q^*(x_1)$, then $(0,\tilde{w})$ is unstable, where $q^*(x_1)$ is strictly decreasing with respect to $x_1\in [0, L^*)$;
     \item[$(ii.3)$] If  $h>h^*(x_1)$, then $(0,\tilde{w})$ is linearly stable; if $0<h<h^*(x_1)$, then $(0,\tilde{w})$ is unstable, where $h^*(x_1)$ is strictly increasing with respect to $x_1\in [0, L^*)$.
   \end{description}
 \end{description}
 \end{theorem}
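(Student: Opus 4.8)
The plan is to reduce the linear stability of $(0,\tilde w)$ to the sign of the principal eigenvalue $\lambda_1=\lambda_1(x_1,m,q,h)$ of the linearized $u$-operator, namely
\[
d_1\phi''-a_1\phi'+\big(r-m\chi_{(x_1,L]}\tilde w\big)\phi=\lambda\phi,\quad d_1\phi'(0)-a_1\phi(0)=0,\ \phi'(L)=0,
\]
with $(0,\tilde w)$ linearly stable iff $\lambda_1<0$. Multiplying by the weight $e^{-a_1x/d_1}$ renders the operator self-adjoint and yields the Rayleigh characterization
\[
\lambda_1=\sup_{0\neq\phi\in H^1(0,L)}\frac{-a_1\phi(0)^2-d_1\int_0^L e^{-a_1x/d_1}(\phi')^2\,dx+\int_0^L (r-m\chi_{(x_1,L]}\tilde w)e^{-a_1x/d_1}\phi^2\,dx}{\int_0^L e^{-a_1x/d_1}\phi^2\,dx}.
\]
A direct computation with the constant potential $r$ identifies $L_1^*$ as the critical length at which this problem on $(0,L)$ (with $m=0$) has $\lambda_1=0$, and $L^*$ as the critical length of the companion Robin--Dirichlet problem $d_1\phi''-a_1\phi'+r\phi=\mu\phi$ on $(0,x_1)$ with $d_1\phi'(0)-a_1\phi(0)=0$, $\phi(x_1)=0$; writing $\mu_1(x_1)$ for its principal eigenvalue, $\mu_1$ is increasing in $x_1$ and $\mu_1(x_1)\gtrless 0$ iff $x_1\gtrless L^*$.

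For the globally stable regime, since $w\ge 0$ the reaction obeys $u(r-u-m\chi_{(x_1,L]}w)\le u(r-u)$, so $u$ is a subsolution of the logistic equation $\bar u_t=d_1\bar u_{xx}-a_1\bar u_x+\bar u(r-\bar u)$ under the same boundary and initial data. When $a_1\ge\sqrt{4d_1r}$, or $0<a_1<\sqrt{4d_1r}$ with $L<L_1^*$, the principal eigenvalue of the logistic linearization at $0$ is negative, so $\bar u\to 0$; by comparison $u\to 0$, and the $w$-equation then forces $w\to\tilde w$, giving global asymptotic stability. For the large protection zone $x_1\ge L^*$, I would insert the Robin--Dirichlet principal eigenfunction on $(0,x_1)$, extended by zero to $(x_1,L)$, as a test function: on its support the potential equals $r$, so the Rayleigh quotient evaluates to exactly $\mu_1(x_1)\ge 0$, whence $\lambda_1\ge\mu_1(x_1)\ge 0$. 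Equality $\lambda_1=0$ would make this function, which vanishes on an open set, a principal eigenfunction of the full problem, contradicting positivity; hence $\lambda_1>0$ and $(0,\tilde w)$ is unstable for all $m,q,h>0$.

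For the small protection zone $x_1<L^*$, monotonicity of the Rayleigh quotient in the potential shows at once that $\lambda_1$ is strictly decreasing in $m$, strictly increasing in $q$, and strictly decreasing in $h$, once the signs $\partial_q\tilde w<0$ and $\partial_h\tilde w>0$ are checked; these follow by differentiating the linear toxicant problem in $q$ and in $h$ and applying the maximum principle to the resulting boundary value problems. I would then locate the thresholds via endpoint limits: as $m\to 0$ (resp. $h\to 0$, $q\to\infty$) the potential tends to $r$ and $\lambda_1\to\lambda_1^0>0$ because $L>L_1^*$; as $m\to\infty$ (resp. $h\to\infty$) the potential tends to $-\infty$ on $(x_1,L)$ and $\lambda_1\to\mu_1(x_1)<0$ because $x_1<L^*$. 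Strict monotonicity and the intermediate value theorem then produce unique crossings $m^*(x_1),h^*(x_1)$ with the stated dichotomies; the crossing $q^*(x_1)$ is obtained analogously from the limits $q\to\infty$ (where $\tilde w\to 0$, $\lambda_1\to\lambda_1^0>0$) and $q\to 0$ (where the toxicant load grows, without bound if $a_2=0$, driving $\lambda_1<0$).

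The main obstacle is the monotonicity of the thresholds in $x_1$. Applying the implicit function theorem to $\lambda_1(x_1,\cdot)=0$, each sign reduces to showing $\partial_{x_1}\lambda_1>0$ at fixed $m,q,h$, and the difficulty is that $\tilde w$ itself depends on $x_1$ through its domain $(x_1,L)$: enlarging the protection zone both raises the potential from $r-m\tilde w$ to $r$ on the newly protected interval and perturbs $\tilde w$ on the remaining toxic interval. I would resolve this by differentiating the toxicant problem with respect to the parameter $x_1$ (which enters both as left endpoint and as coefficient), finding that $v=\partial_{x_1}\tilde w$ solves $d_2v''-a_2v'-qv=0$ on $(x_1,L)$ with $v'(L)=0$ and upstream data $d_2v'(x_1)-a_2v(x_1)=h-q\tilde w(x_1)$. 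A first maximum-principle argument gives $\tilde w(x_1)<h/q$, so this data is positive, and a second maximum-principle argument then yields $v<0$ throughout. Thus shrinking the toxic zone lowers $\tilde w$ there, the potential increases pointwise under $x_1\mapsto x_1+\delta$, and $\partial_{x_1}\lambda_1>0$ follows from the Rayleigh characterization. Combined with $\partial_m\lambda_1<0$, $\partial_q\lambda_1>0$, $\partial_h\lambda_1<0$, the implicit function theorem delivers $\frac{d m^*}{d x_1}>0$, $\frac{d q^*}{d x_1}<0$, $\frac{d h^*}{d x_1}>0$, as claimed.
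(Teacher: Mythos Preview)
Your proposal is correct and mirrors the paper's argument: the globally-stable regime is dispatched by comparison with the scalar logistic problem, and the remaining cases reduce to the sign of $\lambda_1$ via its strict monotone dependence on $m,q,h$ together with the endpoint limits $\lambda_1\to\lambda_1(d_1,a_1,r,(0,L))>0$ and $\lambda_1\to\lambda_1^D(d_1,a_1,r,(0,x_1))$. The only deviations are technical. First, for the inequality $\lambda_1\ge\lambda_1^D$ in the large-zone case you insert the extended Robin--Dirichlet eigenfunction directly as a test function and rule out equality by positivity of the principal eigenfunction, whereas the paper obtains the same conclusion via the strict monotonicity $\lambda_1>\lim_{m\to\infty}\lambda_1=\lambda_1^D$ (their limit lemma); your test-function step is precisely the lower-bound half of that lemma. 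Second, for the monotonicity of $\tilde w$ in $x_1$---the point on which the monotonicity of all three thresholds hinges---the paper simply reads it off the explicit formula for $\tilde w$, while you differentiate the boundary value problem in the moving-endpoint parameter $x_1$; your boundary computation $d_2v'(x_1)-a_2v(x_1)=h-q\tilde w(x_1)>0$ is correct, and the maximum-principle conclusion $v<0$ goes through (a nonnegative maximum of $v$ is excluded at interior points by $d_2v''=qv>0$, at $x=x_1$ by the Robin data, and at $x=L$ by $v'(L)=0$ combined with $v''(L)=qv(L)/d_2>0$). Finally, your caution about the $q\to 0$ limit is in fact sharper than the paper, which asserts $\tilde w\to+\infty$ uniformly: when $a_2>0$ the limiting equation $d_2w''-a_2w'+h=0$ with the same boundary conditions has a bounded positive solution (indeed integrating the original equation gives $a_2\tilde w(L)\le h(L-x_1)$ for every $q>0$), so $\tilde w$ stays bounded and the existence of a strictly positive crossing $q^*(x_1)$ is not automatic from the argument as written in either place.
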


\begin{figure}[h]
	\centering
	% 三张图片并列
	\subfigure[]{\includegraphics[width=5cm,height=5cm]{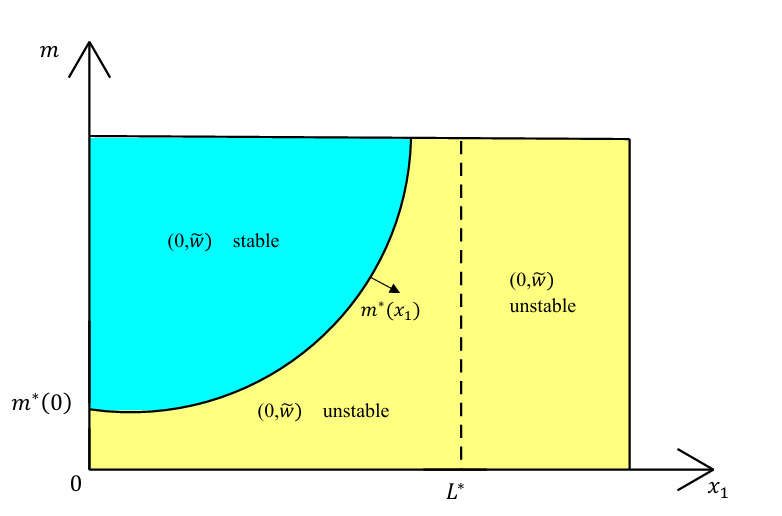}} \hfill
	\subfigure[]{\includegraphics[width=5cm,height=5cm]{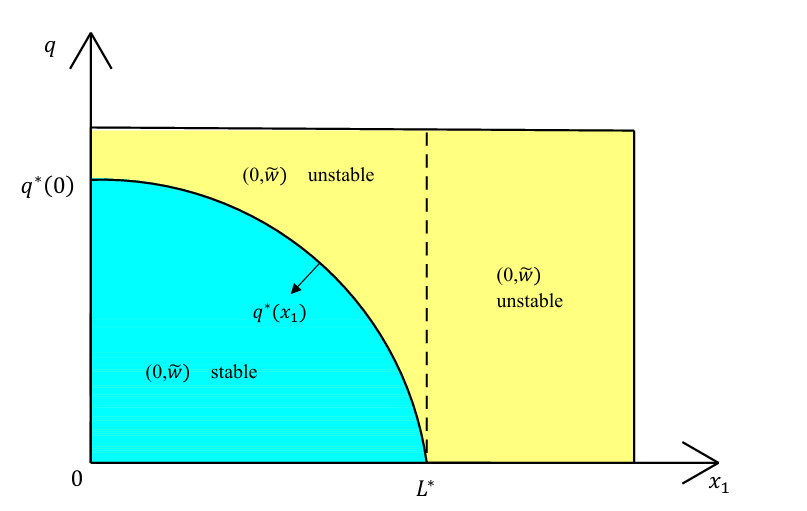}} \hfill
	\subfigure[]{\includegraphics[width=5cm,height=5cm]{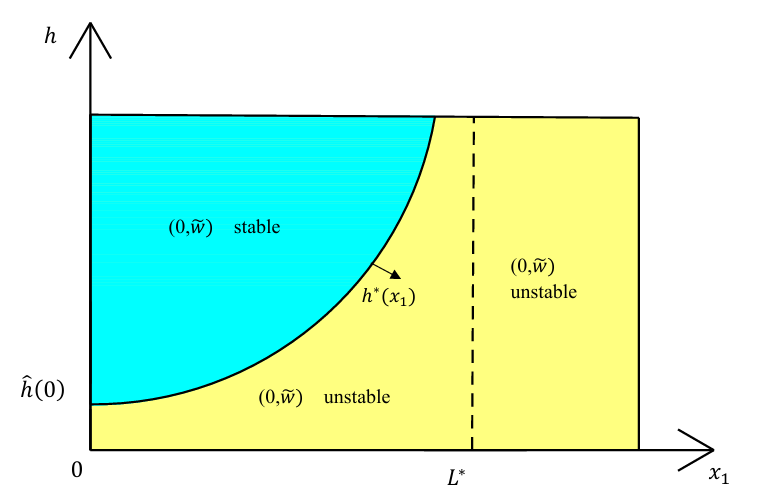}} \\
	\caption{Effects of the upstream protection zone length ($x_1$), the toxicant effect coefficient on population growth ($m$), the contaminant discharge rate per unit ($q$), and the toxicant input rate ($h$) on the stability of $(0,\tilde{w})$ when $0< a_1<\sqrt{4d_1r}$ and $L>L_1^*$. The yellow region indicates the instability of $(0,\tilde{w})$, implying the persistence of the population, while the blue region represents the stability of $(0,\tilde{w})$, indicating the extinction of the population when rare.}
	\label{fig:1}
\end{figure}

Theorem \ref{th1.1} shows that the population face extinction due to its large advection rate ($a_1\ge\sqrt{4d_1r}$) or small  advection rate while a short habitat length ($0< a_1<\sqrt{4d_1r}$, $L<L^*_1$).  When the advection rate of the population is small ($0< a_1<\sqrt{4d_1r}$) and the  river
 length is large $(L>L_1^*)$, Theorem \ref{th1.1} characterizes the influence of upstream protection zone length on the stability of the equilibrium $(0,\tilde{w})$ (see Figure \ref{fig:1}). Specifically,
\begin{itemize}
  \item When the protected area exceeds a critical spatial threshold $(x_1\ge L^*)$, population persistence occurs irrespective of toxicant exposure levels.
  \item Below this threshold $(x_1<L^*)$, persistence becomes dependent on additional factors. Specifically, if any of the following occurs: the toxicant effect coefficient on population growth ($m$) is sufficiently high, the contaminant discharge rate per unit ($q$) is sufficiently low, or the toxicant input rate ($h$) is sufficiently high, then a larger critical protection zone size is required to ensure persistence.
\end{itemize}

  \begin{theorem} \label{th1.2}
  	When $0< a_1<\sqrt{4d_1r}$ and $L>L_1^*$, for a small protection zone $x_1< L^*$, the following statements are true:
  \begin{description}
    \item[$(i)$] If all parameters except for $a_2$ and $h$ are fixed, and  $(a_2,h)\in \{a_2\ge 0,h<h^{**}\}\cup \:\{a_2>a_2^*(h), h>h^{**}\}$, then $(0,\tilde{w})$ is unstable; while for $(a_2,h)\in\{0\leq a_2<a_2^*(h), h>h^{**}\}$,  $(0,\tilde{w})$ is linearly stable, where $a_2^*(h)$ is strictly increasing with respect to $h\in [h^{**}, +\infty)$;
    \item[$(ii)$] If all parameters except for $a_2$ and $q$ are fixed, and  $(a_2,q)\in \{a_2\ge 0,q>q^{**}\}\cup \:\{a_2>\bar{a}_2(q),0<q<q^{**}\}$, then $(0,\tilde{w})$ is unstable; while for $(a_2,q)\in \{0\leq a_2<\bar{a}_2(q)\text{,}0<q<q^{**}\}$,  $(0,\tilde{w})$ is linearly stable, where $\bar{a}_2(q)$ is strictly decreasing with respect to $q\in [0,q^{**}]$;
    \item[$(iii)$]  If all parameters except for $a_2$ and $m$ are fixed, and  $(a_2,m)\in \{a_2\ge 0,m<m^{**}\}\cup \:\{a_2>\hat{a}_2(m), m>m^{**}\}$, then $(0,\tilde{w})$ is unstable; while for $(a_2,m)\in \{0\leq a_2<\hat{a}_2(m)\text{,}m>m^{**}\}$, $(0,\tilde{w})$ is linearly stable, where $\hat{a}_2(m)$ is strictly increasing with respect to $m\in [m^{**}, +\infty)$;
    \item[$(iv)$] If all parameters except for $h$ and $m$ are fixed, and  $(h,m)\in\{0<h<\hat{h}(m), m>0\}$, then $(0,\tilde{w})$ is unstable; while for $(h,m)\in\{h>\hat{h}(m)\text{,} m>0\}$,  $(0,\tilde{w})$ is linearly stable, where $\hat{h}(m)$ is strictly decreasing with respect to $m>0$;
    \item[$(v)$] If all parameters except for $q$ and $m$ are fixed, and the pair $(q,m)\in\{q>\hat{q}(m)\text{,} m>0\}$, then $(0,\tilde{w})$ is unstable; while for $(q,m)\in\{0<q<\hat{q}(m), m>0\}$, $(0,\tilde{w})$ is linearly stable,  where $\hat{q}(m)$ is strictly increasing with respect to $m>0$.
  \end{description}
  \end{theorem}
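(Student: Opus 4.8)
The plan is to reduce every assertion to the sign of a single quantity, the principal eigenvalue of the linearization of system \eqref{eq1.1} at $(0,\tilde w)$. Linearizing in $u$ and freezing $w=\tilde w$, the $u$-component decouples, and stability of $(0,\tilde w)$ is equivalent to $\lambda_1<0$ (instability to $\lambda_1>0$), where $\lambda_1=\lambda_1(m,q,h,a_2)$ is the principal eigenvalue of
\begin{equation*}
d_1\phi''-a_1\phi'+\bigl(r-m\chi_{(x_1,L]}\tilde w\bigr)\phi=\lambda\phi,\qquad d_1\phi'(0)-a_1\phi(0)=\phi'(L)=0.
\end{equation*}
Here $m$ enters the potential directly, while $q,h,a_2$ enter only through $\tilde w$, the unique positive solution of $d_2\tilde w''-a_2\tilde w'+h-q\tilde w=0$ on $(x_1,L)$ with $d_2\tilde w'(x_1)-a_2\tilde w(x_1)=\tilde w'(L)=0$ (Lemma \ref{lemma2.1}). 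Since each part describes the zero set of $\lambda_1$ in a two-parameter plane, my strategy is uniform: first fix the sign of each partial derivative $\partial_s\lambda_1$; then locate the limiting values of $\lambda_1$ at the edges of the parameter range; and finally apply the implicit function theorem to $\lambda_1=0$ to produce the threshold curve and read off its monotonicity.

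For the partial derivatives I would differentiate the eigenvalue identity and pair it against the positive principal eigenfunction $\phi$ and the adjoint principal eigenfunction $\phi^*>0$, obtaining
\begin{equation*}
\partial_s\lambda_1=\frac{\int_{x_1}^L\bigl(-m\,\partial_s\tilde w\bigr)\phi\phi^*\,dx}{\int_0^L\phi\phi^*\,dx},\quad s\in\{q,h,a_2\},\qquad \partial_m\lambda_1=\frac{-\int_{x_1}^L\tilde w\,\phi\phi^*\,dx}{\int_0^L\phi\phi^*\,dx}<0.
\end{equation*}
The signs for $h$ and $q$ follow from the maximum principle applied to the problems solved by $\partial_h\tilde w$ and $\partial_q\tilde w$, which give $\partial_h\tilde w>0$ and $\partial_q\tilde w<0$ pointwise, hence $\partial_h\lambda_1<0$ and $\partial_q\lambda_1>0$, consistent with Theorem \ref{th1.1}. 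Parts $(iv)$ and $(v)$, which do not involve $a_2$, are then immediate: on $\lambda_1=0$ the implicit function theorem yields $\hat h'(m)=-\partial_m\lambda_1/\partial_h\lambda_1<0$ and $\hat q'(m)=-\partial_m\lambda_1/\partial_q\lambda_1>0$, and the claimed stability/instability sides are read off from the signs of $\partial_m\lambda_1,\partial_h\lambda_1,\partial_q\lambda_1$.

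The crux, and the step I expect to be hardest, is the dependence on $a_2$ in parts $(i)$--$(iii)$; here I aim to prove $\partial_{a_2}\lambda_1>0$, so that stronger toxicant advection always promotes persistence. The plan has two maximum-principle steps. First I would show $\tilde w'\ge0$: the derivative $p=\tilde w'$ solves the homogeneous equation $d_2p''-a_2p'-qp=0$ with $p(x_1)=\tfrac{a_2}{d_2}\tilde w(x_1)>0$ and $p(L)=0$, so the maximum principle for $d_2\partial_{xx}-a_2\partial_x-q$ forbids a negative interior minimum and forces $p\ge0$. Second, setting $v=\partial_{a_2}\tilde w$, one finds $d_2v''-a_2v'-qv=\tilde w'\ge0$ with $d_2v'(x_1)-a_2v(x_1)=\tilde w(x_1)>0$ and $v'(L)=0$; a positive interior maximum is excluded by the operator, a positive maximum at $L$ by the Neumann condition together with the Hopf lemma, and a positive maximum at $x_1$ by the Hopf lemma combined with the Robin condition, whose positive forcing $\tilde w(x_1)>0$ is incompatible with $v(x_1)>0,\ v'(x_1)<0$. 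Hence $v\le0$, and since $m>0$ the quotient above gives $\partial_{a_2}\lambda_1>0$. The delicate point is precisely this boundary bookkeeping at the Robin end $x_1$, where the sign information carried by the forced boundary condition must be reconciled with the Hopf inequality.

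With $\partial_{a_2}\lambda_1>0$ in hand, parts $(i)$--$(iii)$ follow by the same template. Evaluating at $a_2=0$ (where $\tilde w\equiv h/q$) identifies the breakpoints $h^{**},q^{**},m^{**}$ with the single-variable thresholds of Theorem \ref{th1.1} at zero toxicant advection; on the side where $\lambda_1|_{a_2=0}>0$ the monotonicity in $a_2$ keeps $\lambda_1>0$ for every $a_2\ge0$, giving global instability, while on the opposite side $\lambda_1|_{a_2=0}<0$ and, since $\tilde w\to0$ as $a_2\to\infty$ (from the efflux identity $q\int_{x_1}^L\tilde w\,dx+a_2\tilde w(L)=h(L-x_1)$) drives $\lambda_1$ up to the toxicant-free principal eigenvalue, which is positive because $L>L_1^*$, a unique crossing exists and defines $a_2^*(h),\bar a_2(q),\hat a_2(m)$. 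Their monotonicities then come from the implicit function theorem as before: the cross-partials force $a_2^*(h)$ and $\hat a_2(m)$ to be increasing and $\bar a_2(q)$ to be decreasing, which completes the proof.
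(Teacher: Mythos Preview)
Your plan is correct and follows essentially the same route as the paper. The paper likewise reduces everything to the sign of the principal eigenvalue $\lambda_1(d_1,a_1,r-m\chi_{(x_1,L]}\tilde w,(0,L))$, proves $\partial_h\tilde w>0$, $\partial_q\tilde w<0$, $\partial_{a_2}\tilde w<0$ via the maximum principle applied to the differentiated boundary value problem (Lemma~\ref{lemma 2.2}), reads off the monotonicity of $\lambda_1$ in each parameter from comparison of potentials, identifies the breakpoints $h^{**},q^{**},m^{**}$ from the explicit limit $\tilde w\to h/q$ as $a_2\to0$, and obtains the threshold curves and their monotonicity exactly as you do. The only cosmetic difference is that you package the monotonicity of $\lambda_1$ through the adjoint-pairing formula for $\partial_s\lambda_1$, whereas the paper invokes the comparison property $s_1\ge s_2\Rightarrow\lambda_1(s_1)\ge\lambda_1(s_2)$ directly; your boundary bookkeeping for $\partial_{a_2}\tilde w\le0$ is also spelled out more carefully than the paper's one-line appeal to the strong maximum principle, but the underlying argument is identical.
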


%\begin{figure}[h]
%	\centering
	% 第一排图片
%	\subfigure[]{\includegraphics[width=0.3\textwidth]{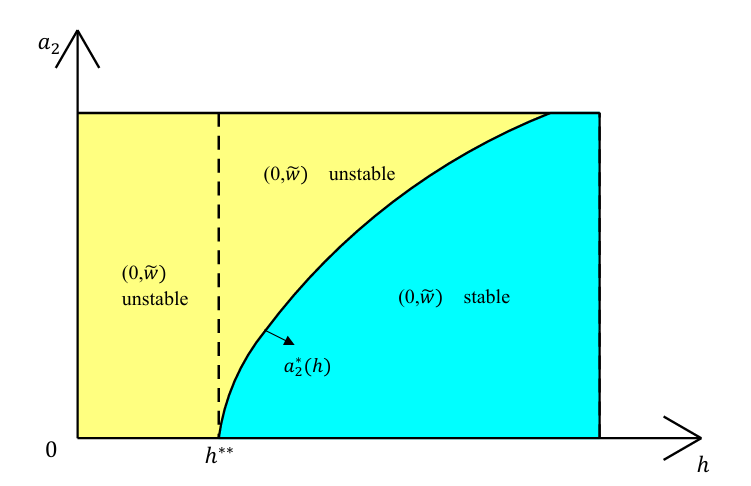}} \hfill
%	\subfigure[]{\includegraphics[width=0.3\textwidth]{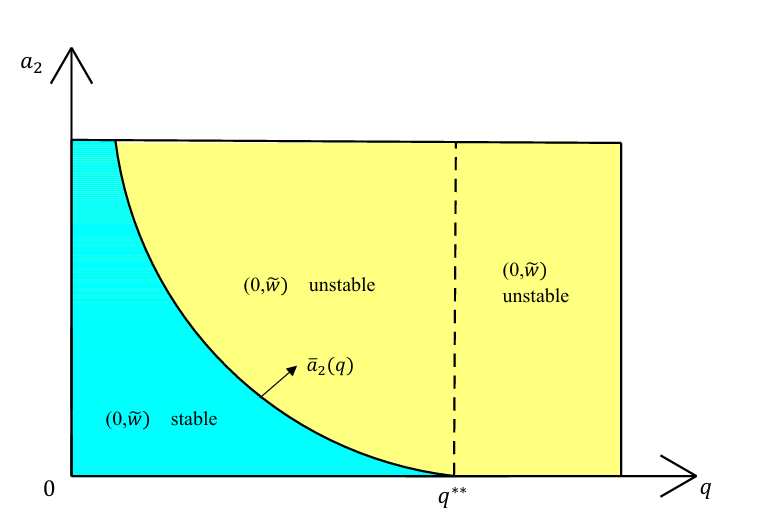}} \hfill
%	\subfigure[]{\includegraphics[width=0.3\textwidth]{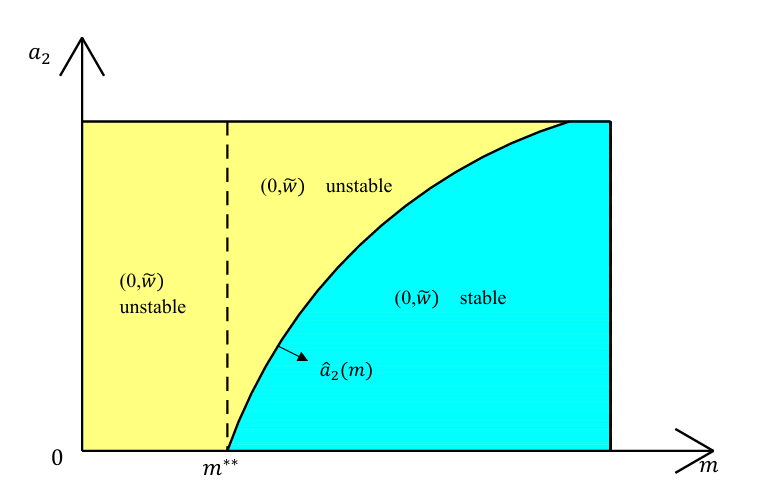}} \\
	% 第二排图片
%	\subfigure[]{\includegraphics[width=0.3\textwidth]{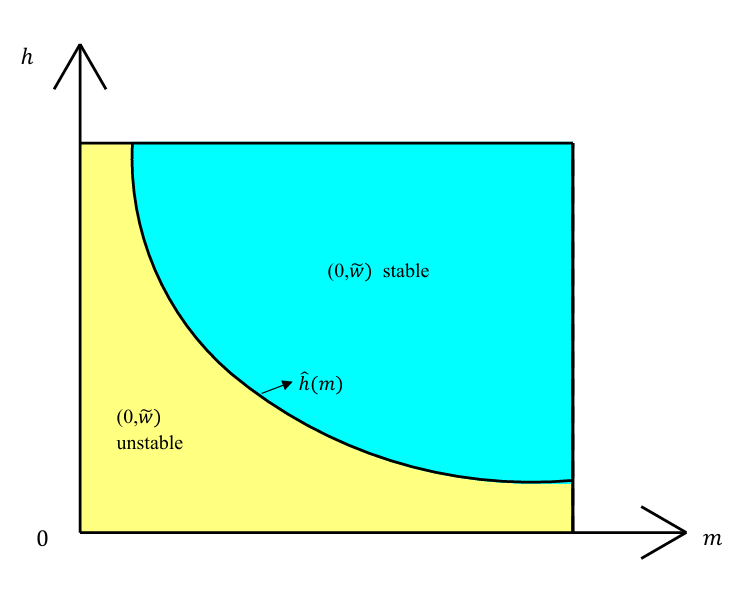}} \hfill
%	\subfigure[]{\includegraphics[width=0.3\textwidth]{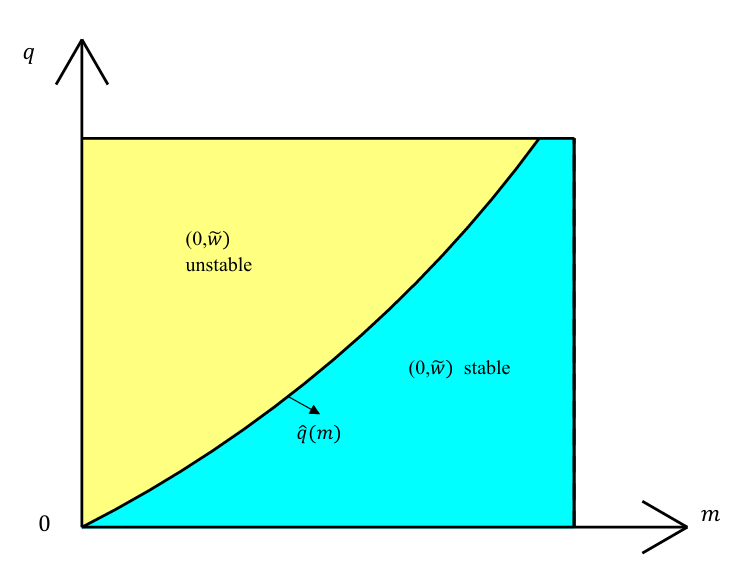}} \\
%	\caption{This figure illustrates the stability of $(0,\tilde{w})$ under a small protection zone $(x_1<L^*)$ with the constraints $a_1<\sqrt{4d_1r}$ and $L>L_1^*$. The yellow region indicates instability of $(0,\tilde{w})$, implying that $u$ can persist. In contrast, the blue region represents stability of $(0,\tilde{w})$, in which $u$ may go extinct when rare.}
%	\label{fig:2}
%\end{figure}

\begin{figure}[h]


  \centering
\begin{minipage}[c]{0.33\textwidth}
\centering
 \subfigure[]{
  \includegraphics[width=5cm,height=5cm]{h-a2.png}
  }
\end{minipage}%
\begin{minipage}[c]{0.33\textwidth}
\centering
 \subfigure[]{
  \includegraphics[width=5cm,height=5cm]{q-a2.png}
 }
\end{minipage}%
\begin{minipage}[c]{0.33\textwidth}
\centering
 \subfigure[]{
  \includegraphics[width=5cm,height=5cm]{m-a2.png}
 }
\end{minipage}
\begin{minipage}[c]{0.33\textwidth}
\centering
 \subfigure[]{
  \includegraphics[width=5cm,height=5cm]{m-h.png}
 }
\end{minipage}%
\begin{minipage}[c]{0.33\textwidth}
\centering
 \subfigure[]{
  \includegraphics[width=5cm,height=5cm]{m-q.png}
 }
\end{minipage}
\caption{The stability distribution of $(0,\tilde{w})$ in the parameter spaces $h-a_2$, $q-a_2$, $m-a_2$, $m-h$, $m-q$ respectively under a small protection zone $(x_1<L^*)$ when $0< a_1<\sqrt{4d_1r}$ and $L>L_1^*$. $(0,\tilde{w})$ is unstable in the yellow region and stable in the blue region.}\label{fig:2}
\end{figure}

  $\quad$Theorem~\ref{th1.2} describes the stability distribution of $(0,\tilde{w})$ in the parameter spaces $h-a_2$, $q-a_2$, $m-a_2$, $m-h$, $m-q$ respectively under a small protection zone $(x_1<L^*)$ when the advection rate of the population is small ($0< a_1<\sqrt{4d_1r}$) and the  river
 length is large $(L>L_1^*)$ (see Figure \ref{fig:2}). Biologically,
  	\begin{itemize}
  	\item If the toxicant input rate $h$ or the effect of toxicant on population growth $m$ is small $(h<h^{**}$ or $ m<m^{**})$, then population $u$ achieves persistent survival for any advection rate of the toxicant  $(a_2>0)$. If the toxicant input or the effect of toxicant on population growth is large $(h>h^{**}$ or $m>m^{**})$, then the survival   of population $u$ becomes habitat-dependent, that is, the population $u$ can persist for large $a_2 (a_2>a_2^*(h)$, resp. $a_2>\hat{a}_2(m))$, while extinct when rare for small $a_2 (0\leq a_2<a_2^*(h)$, resp.  $0\leq a_2<\hat{a}_2(m))$.
  	\item If the contaminant export rate $q$ is large $(q>q^{**})$, then the population $u$ achieves persistent survival for any advection rate of the toxicant  $(a_2>0)$. If  $q$ is small $(q<q^{**})$, then the persistence of population $u$ becomes habitat-dependent, that is, the population $u$ can persist for large $a_2 (a_2>\bar{a}_2(q) )$ , while extinct when rare for small $a_2 (0\leq a_2<\bar{a}_2(q) )$.
  	\item  For sufficiently small  $m, h>0$, the population $u$ can always persist. The critical contamination threshold $\hat{h}(m)$ for the persistence of population $u$ decreases monotonically with respect to $m$.
  	\item  For fixed $m>0$,  the population $u$ can always persist when $q$ is sufficiently large. The critical detoxification threshold $\hat{q}(m)$ increases monotonically with respect to $m$.
  \end{itemize}

  \begin{theorem}\label{th1.3}
  Assume that all parameters except for $a_1$, $d_1$ and $r$ are fixed. Let $r^*=\frac{m\int_{x_1}^{L}\tilde{w}dx}{L}$. Then we have the following results:
    \begin{description}
      \item[$(i)$] If $r\ge r^*$, given $d_1$, then there exists $a_1^*>0$ satisfying $\lambda _1\left( d_1,a_1^*,r-m\chi _{\left( x_1,L \right]}\tilde{w},\left( 0,L \right) \right) =0$ such that $(0,\tilde{w})$ is linearly stable when $a_1>a_1^*$ and unstable when $0<a_1<a_1^*$;
      \item[$(ii)$] If $r<r^*$, then there exists $d_1^*(r)>0$ satisfying $\lambda _1\left( d_1^*(r),0,r-m\chi _{\left( x_1,L \right]}\tilde{w},\left( 0,L \right) \right) =0$ such that the following statements are true:
          \begin{description}
            \item[$(ii.1)$] If $d_1\ge d_1^*(r)$, then $(0,\tilde{w})$ is linearly stable for any $a_1>0$;
            \item[$(ii.2)$] If $d_1< d_1^*(r)$, then $(0,\tilde{w})$ is linearly stable for $a_1>a_1^*$ and unstable for $0<a_1<a_1^*$.
          \end{description}
    \end{description}

   Moreover,  $d_1^*(r)$ is continuous and strictly increases for $r\in (0,r^*)$ with $\lim\limits_{r\to{0}}d_1^*(r)=0$ and $\lim\limits_{r\to{r^*}}d_1^*(r)=+\infty.$
  \end{theorem}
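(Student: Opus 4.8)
The plan is to reduce the stability of $(0,\tilde w)$ to the sign of the principal eigenvalue $\lambda_1:=\lambda_1(d_1,a_1,r-m\chi_{(x_1,L]}\tilde w,(0,L))$ and then to track how this eigenvalue moves as $a_1,d_1,r$ vary. Linearizing \eqref{eq1.1} at $(0,\tilde w)$, the equation for the $u$-perturbation decouples (the system is triangular and the $w$-component contributes only eigenvalues bounded above by $-q$), so $(0,\tilde w)$ is linearly stable iff $\lambda_1<0$ and unstable iff $\lambda_1>0$. The first useful observation is that $\tilde w$ solves the toxicant-only problem on $(x_1,L)$ and is therefore independent of $d_1,a_1,r$; hence, writing $g_0:=-m\chi_{(x_1,L]}\tilde w$ and using that adding the constant $r$ to the potential shifts every eigenvalue by $r$, one has $\lambda_1(d_1,a_1,r+g_0,(0,L))=r+\lambda_1(d_1,a_1,g_0,(0,L))$ for every $a_1\ge0$.

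The central device is the change of variables $\phi=e^{a_1x/(2d_1)}v$, which symmetrizes the operator: the problem $d_1\phi''-a_1\phi'+g\phi=\lambda\phi$ with $d_1\phi'(0)=a_1\phi(0)$, $\phi'(L)=0$ becomes the self-adjoint problem $d_1v''+(g-\tfrac{a_1^2}{4d_1})v=\lambda v$ subject to the Robin conditions $v'(0)=\tfrac{a_1}{2d_1}v(0)$ and $v'(L)=-\tfrac{a_1}{2d_1}v(L)$. Integrating by parts yields the variational characterization
\[
\lambda_1=\sup_{0\neq v\in H^1(0,L)}\frac{-d_1\int_0^L (v')^2\,dx-\tfrac{a_1}{2}\big(v(0)^2+v(L)^2\big)+\int_0^L g\,v^2\,dx-\tfrac{a_1^2}{4d_1}\int_0^L v^2\,dx}{\int_0^L v^2\,dx},\qquad g=r-m\chi_{(x_1,L]}\tilde w.
\]
For each fixed $v$ the Rayleigh quotient is a downward parabola in $a_1$ with nonpositive linear term and strictly negative quadratic term $-\tfrac{1}{4d_1}$, hence strictly decreasing in $a_1$; taking the supremum preserves strict monotonicity, so $\lambda_1$ is continuous and strictly decreasing in $a_1$. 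Since $g\le r$, the quotient is bounded above by $r-\tfrac{a_1^2}{4d_1}$, giving $\lambda_1\le r-\tfrac{a_1^2}{4d_1}\to-\infty$ as $a_1\to\infty$.

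At $a_1=0$ the problem is the self-adjoint Neumann problem, and $F(d_1):=\lambda_1(d_1,0,g_0,(0,L))$ satisfies: $F$ is continuous and strictly decreasing in $d_1$ (its $d_1$-derivative equals $-\int(v_1')^2/\int v_1^2<0$, the eigenfunction being nonconstant because $g_0$ is), with $F(0^+)=\max g_0=0$ and $F(+\infty)=\tfrac1L\int_0^L g_0\,dx=-r^*$; thus $F$ is a decreasing bijection of $(0,\infty)$ onto $(-r^*,0)$, and $\lambda_1(d_1,0,g,(0,L))=F(d_1)+r$. \emph{Case $r\ge r^*$:} then $\lambda_1(d_1,0,g,(0,L))=F(d_1)+r>-r^*+r\ge0$, so $(0,\tilde w)$ is unstable at $a_1=0$; combined with strict decrease and $\lambda_1\to-\infty$, the intermediate value theorem gives a unique $a_1^*\in(0,\infty)$ with $\lambda_1=0$, stable for $a_1>a_1^*$ and unstable for $0<a_1<a_1^*$, proving $(i)$. \emph{Case $r<r^*$:} then $-r\in(-r^*,0)$ lies in the range of $F$, so $d_1^*(r):=F^{-1}(-r)$ is well defined and $\lambda_1(d_1,0,g,(0,L))=F(d_1)+r$ has the sign of $F(d_1)-F(d_1^*(r))$; for $d_1\ge d_1^*(r)$ this is $\le0$, whence strict decrease in $a_1$ gives $\lambda_1<0$ for all $a_1>0$, which is $(ii.1)$, while for $d_1<d_1^*(r)$ it is $>0$ at $a_1=0$, again producing a unique threshold $a_1^*$ as in $(ii.2)$.

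Finally, since $F$ is a continuous strictly decreasing bijection onto $(-r^*,0)$, the map $d_1^*(r)=F^{-1}(-r)$ is continuous and strictly increasing for $r\in(0,r^*)$, with $d_1^*(r)\to F^{-1}(0^-)=0$ as $r\to0^+$ and $d_1^*(r)\to F^{-1}((-r^*)^+)=+\infty$ as $r\to r^{*-}$, which is the last assertion. The main obstacle is carrying out the symmetrizing substitution correctly—in particular accounting for the transformed Robin data that produces the boundary penalty $-\tfrac{a_1}{2}\big(v(0)^2+v(L)^2\big)$—and rigorously justifying the small- and large-diffusion limits $F(0^+)=0$ and $F(+\infty)=-r^*$ of the Neumann principal eigenvalue; once the parabolic structure in $a_1$ is exposed, the monotonicity in $a_1$ is immediate, and it may alternatively be quoted from the analogous estimates in \cite{huang2022,lou2015evolution}.
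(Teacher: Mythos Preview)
Your argument is correct. Both you and the paper reduce everything to the sign of $\lambda_1(d_1,a_1,r-m\chi_{(x_1,L]}\tilde w,(0,L))$, use that it is strictly decreasing in $a_1$ with limit $-\infty$, and then analyze the Neumann value at $a_1=0$ via the strict monotonicity in $d_1$ together with the small- and large-diffusion limits $r$ and $r-r^*$. The paper simply quotes these properties from Lemma~\ref{lemma2.4} (itself taken from \cite{bushi2023,peng2019}), whereas you derive them from scratch by the Liouville substitution $\phi=e^{a_1x/(2d_1)}v$, which exposes the explicit parabola $-\tfrac{a_1}{2}(v(0)^2+v(L)^2)/\!\int v^2-\tfrac{a_1^2}{4d_1}$ in the Rayleigh quotient. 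This is a genuinely more self-contained route and gives the quantitative bound $\lambda_1\le r-\tfrac{a_1^2}{4d_1}$ for free.

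The second, sharper difference is your treatment of $d_1^*(r)$. You exploit that $\tilde w$ does not depend on $(d_1,a_1,r)$ to write the potential as $r+g_0$ and hence $\lambda_1(d_1,0,r+g_0)=r+F(d_1)$ with $F(d_1)=\lambda_1(d_1,0,g_0)$; this turns $d_1^*(r)$ into the explicit inverse $F^{-1}(-r)$ of a strictly decreasing bijection $(0,\infty)\to(-r^*,0)$, from which continuity, strict monotonicity, and both endpoint limits are immediate. The paper does not isolate this shift and instead obtains $\lim_{r\to0}d_1^*(r)=0$ and $\lim_{r\to r^*}d_1^*(r)=+\infty$ by separate contradiction arguments. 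Your formulation is cleaner and would also make the dependence on the other parameters (through $r^*$) transparent. One caveat worth noting in the write-up: $g_0=-m\chi_{(x_1,L]}\tilde w$ has a jump at $x_1$, so the small-diffusion limit $F(0^+)=\operatorname{ess\,sup}g_0=0$ should be stated for $L^\infty$ potentials rather than invoking a continuity hypothesis; the paper's Lemma~\ref{lemma2.4} is phrased for continuous $s$, so the same remark applies there.
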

  \begin{figure}[h]
  	\centering
  	% 三张图片并列
  	\includegraphics[width=6cm,height=5cm]{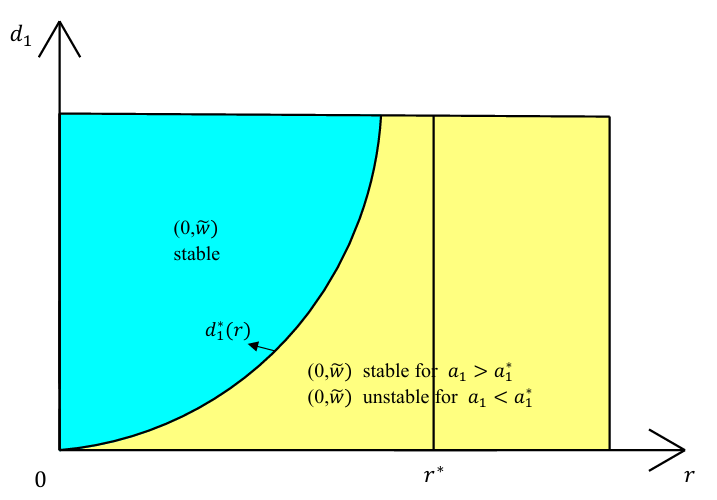}
  	\caption{ The effects of the advection rate $a_1$, the diffusion rate $d_1$, and the natural growth rate $r$ on the stability of $(0,\tilde{w})$. In the blue region, $(0,\tilde{w})$ is stable, implying the extinction of the population when rare. When $(r,d_1)$ lies in the yellow region, a small advection rate $(0<a_1<a_1^*)$ is beneficial for the population persistence.}
  	\label{fig:3}
  \end{figure}

Theorem \ref{th1.3} characterizes the effects of the advection rate $a_1$, the  diffusion coefficient $d_1$, and the  natural growth rate $r$ of population $u$ on its persistence (see Figure \ref{fig:3}). That is
\begin{itemize}
  \item High intrinsic growth rate $(r\ge r^*)$: For fixed diffusion coefficient $d_1$, population persistence becomes advection-dependent. Survival occurs at low advection rates ($0<a_1<a_1^*$), while extinction occurs at high advection rates ($a_1>a_1^*$).
  \item Low intrinsic growth rate ($r<r^*$): Extinction occurs for large diffusion coefficient ($d_1\ge d_1^*(r)$). If the diffusion coefficient is small ($d_1< d_1^*(r)$), a small advection rate $(0<a_1<a_1^*)$ is beneficial for the  population persistence.
\end{itemize}

  About the existence of a positive steady state $(u,w)$ for system~\eqref{eq1.1} and its stability, we have the following results.
  \begin{theorem}\label{th1.4}
  	Suppose that $0< a_1<\sqrt{4d_1r}$ and $L>L_1^*$.
  \begin{description}
    \item[$(i)$] For $x_1\ge L^*$ (large protection zone),  when $m\le p^{-1}e^{-|\frac{a_1}{d_1}-\frac{a_2}{d_2}|L}$ and $\tilde{w}^2(L)\le h$, system~\eqref{eq1.1} possesses a unique coexistence steady state $(u,w)$, which is globally asymptotically stable;
    \item[$(ii)$] For $x_1< L^*$ (small protection zone),  when $m<min\{m^*,p^{-1}e^{-|\frac{a_1}{d_1}-\frac{a_2}{d_2}|L}\}$ and $\tilde{w}^2(L)\le h$, system~\eqref{eq1.1} possesses a unique coexistence steady state $(u,w)$, which is globally asymptotically stable.
  \end{description}
  \end{theorem}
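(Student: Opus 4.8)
The plan is to deduce Theorem~\ref{th1.4} from Proposition~\ref{propA}(ii), which reduces the entire statement to two assertions: (a) the semi-trivial state $(0,\tilde w)$ is linearly unstable, and (b) every coexistence steady state is linearly stable. Indeed, once (a) holds there is at least one locally stable coexistence state, and once (b) holds Proposition~\ref{propA}(ii) upgrades this to a unique, globally asymptotically stable coexistence state. Assertion (a) is already available: in case (i), $x_1\ge L^*$, Theorem~\ref{th1.1}(i) gives instability of $(0,\tilde w)$ for all $m,q,h>0$; in case (ii), $x_1<L^*$, the hypothesis forces $m<m^*$, so Theorem~\ref{th1.1}(ii.1) gives instability. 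Thus the whole burden is to prove (b) under the quantitative hypotheses $m\le p^{-1}e^{-|\frac{a_1}{d_1}-\frac{a_2}{d_2}|L}$ and $\tilde w^2(L)\le h$.

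For (b), I would fix an arbitrary coexistence state $(u,w)$ with $u,w>0$ and linearize \eqref{eq1.1}. Writing the perturbation as $(\phi,\psi)$, the linearized operator is competitive, since the off-diagonal reaction coefficients are $\partial_w f_1=-m\chi_{(x_1,L]}u\le 0$ and $\partial_u f_2=-pw\le 0$. I would remove the competitive sign by the substitution $\Psi=-\psi$, which leaves the spectrum unchanged and turns the system into a cooperative one with nonnegative off-diagonal coefficients $m\chi_{(x_1,L]}u$ and $pw$. For such an irreducible cooperative system the rightmost eigenvalue $\sigma_1$ is real and simple with a componentwise positive eigenfunction, and $(u,w)$ is linearly stable precisely when $\sigma_1<0$. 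To certify $\sigma_1<0$ I would use the supersolution characterization of the principal eigenvalue: it suffices to produce a componentwise positive pair that renders both cooperative elliptic expressions nonpositive, with strict inequality somewhere.

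The decisive step is that the steady state itself furnishes such a supersolution. Testing with $(\bar\phi,\bar\Psi)=(u,w)$ and simplifying via the steady-state identities $\mathcal L_1 u=-(r-u-m\chi_{(x_1,L]}w)u$ and $\mathcal L_2 w=-(h-qw-puw)$ collapses the two expressions to $u\,(m\chi_{(x_1,L]}w-u)$ and $puw-h$, respectively. On the protected segment $(0,x_1)$ the first expression is exactly $-u^2<0$, supplying the required strictness for free; on $(x_1,L)$ the supersolution inequalities reduce to the two pointwise bounds $u\ge mw$ and $puw\le h$. I would obtain these from a priori estimates: comparison gives $w\le\tilde w$, and since $\tilde w$ attains its maximum at the downstream end, $\tilde w^2(L)\le h$ yields $puw\le h$; the weak-competition bound $mp\le e^{-|\frac{a_1}{d_1}-\frac{a_2}{d_2}|L}$ yields $u\ge mw$. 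Combined with irreducibility and the strict inequality on $(0,x_1)$, this forces $\sigma_1<0$, so every coexistence state is linearly stable and Proposition~\ref{propA}(ii) closes the argument in both cases.

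I expect the \textbf{main obstacle} to be the uniform verification of $u\ge mw$ on $(x_1,L)$. Because $u$ and $w$ solve advection--diffusion equations with different coefficient pairs $(d_1,a_1)$ and $(d_2,a_2)$, the difference $u-mw$ satisfies no single second-order equation, so no direct maximum principle applies to it. The natural remedy is to pass to the Liouville-transformed variables $e^{-a_1 x/d_1}u$ and $e^{-a_2 x/d_2}w$, which symmetrize each operator, and to compare them; the mismatch between the two exponential weights across $[0,L]$ is exactly what produces the factor $e^{|\frac{a_1}{d_1}-\frac{a_2}{d_2}|L}$ and hence the precise threshold in the hypothesis. A secondary technical point is to confirm that $\tilde w$ is monotone increasing, so that $\sup\tilde w=\tilde w(L)$, which is what converts the scalar hypothesis $\tilde w^2(L)\le h$ into the pointwise bound $puw\le h$ used above.
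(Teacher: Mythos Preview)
Your high-level plan coincides with the paper's: reduce to Proposition~\ref{propA}(ii), obtain instability of $(0,\tilde w)$ from Theorem~\ref{th1.1}, and then show that every coexistence state is linearly stable.  The divergence is in this last step, and there is a genuine gap in your proposed method.

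Testing the cooperative linearization with $(\bar\phi,\bar\Psi)=(u,w)$ indeed collapses the two inequalities to $u(m\chi_{(x_1,L]}w-u)\le 0$ and $puw-h\le 0$.  But neither of the required pointwise bounds on $(x_1,L)$ follows from the hypotheses.  The assumption $\tilde w^2(L)\le h$ gives only $w^2\le h$ (via $w\le\tilde w\le\tilde w(L)$), not $puw\le h$; you would additionally need $pu\le w$, and there is no mechanism in the hypotheses to force this, since $u\le\tilde u$ can be of order $r$ while $w$ can be small.  Likewise, $mp\,e^{|a_1/d_1-a_2/d_2|L}\le 1$ is a scalar constraint on the parameters and does not by itself yield the pointwise comparison $u\ge mw$ between two solutions of different advection--diffusion equations; the Liouville transforms you mention symmetrize each operator separately but do not produce a single equation for $u-mw$ (or any transformed version of it) to which a maximum principle applies.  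You flag the first inequality as the obstacle, but the second one is equally unproven, and no rescaling $(\beta u,\alpha w)$ of the test pair helps: the two conditions become $m\alpha w\le\beta u$ and $p\beta uw\le\alpha h$, which together still force the unavailable bound $mp\,w^2\le h$ times a relation between $u$ and $w$.

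The paper sidesteps all pointwise comparison between $u$ and $w$.  Instead of a supersolution, it multiplies the first linearized equation by $\phi_1^2 u^{-2}e^{-a_1x/d_1}$ and the second by $\psi_1^2 w^{-2}e^{-a_2x/d_2}$, subtracts the corresponding identities obtained from the steady-state equations, and integrates.  The diffusive contributions have a sign, and the remaining reaction terms combine (after scaling one relation by $m^3$ or $p^3$, depending on the sign of $a_1/d_1-a_2/d_2$) into an integrand that factors as $(m\psi_1-\phi_1)(m\psi_1+\phi_1)^2\le 0$ (respectively $(\psi_1-p\phi_1)(\psi_1+p\phi_1)^2\le 0$).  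The hypotheses enter only as the scalar inequalities $h/w^2\ge 1$ and $mp\,e^{|a_1/d_1-a_2/d_2|L}\le 1$, used to bound the coefficients inside the integral, never as pointwise comparisons between $u$ and $w$.  This is the missing idea in your approach.
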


  Next, we give some main results for system \eqref{eq1.2} which are analogous to Theorem~\ref{th1.1}, \ref{th1.3} and \ref{th1.4}.
  About the influence of downstream protected zone length on the
stability of the equilibrium of $(0,\bar{w})$, we have the following results.
  \begin{theorem}\label{th1.5}
  If $a_1\ge\sqrt{4d_1r}$ or $0< a_1<\sqrt{4d_1r}$  and $L<L^*_1$, then $(0,\bar{w})$ is globally asymptotically stable; If $0< a_1<\sqrt{4d_1r}$ and $L>L_1^*$, then the following results hold:
  \begin{description}
    \item[$(i)$] If $L-x_2\ge L^*$ (large protection zone), then $(0,\bar{w})$ is unstable;
    \item[$(ii)$] If $L-x_2< L^*$ (small protection zone), then the following statements are true:
    \begin{description}
      \item[$(ii.1)$] If all parameters except for $h$ and $m$ are fixed, and $(h,m)\in \{h>0, m>\bar{m}(h)\}$, then $(0,\bar{w})$ is linearly stable; while for $(h,m)\in \{h>0, 0<m<\bar{m}(h)\}$,  $(0,\bar{w})$ is unstable.
      \item[$(ii.2)$] If all parameters except for $q$ and $m$ are fixed, and  $(q,m)\in\{0<q<\bar{q}(m)\text{,} m>0\}$, then $(0,\bar{w})$ is linearly stable; while for $(q,m)\in\{q>\bar{q}(m), m>0\}$,  $(0,\bar{w})$ is unstable.
    \end{description}
  \end{description}
  \end{theorem}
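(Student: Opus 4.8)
The plan is to reduce the stability of $(0,\bar w)$ to the sign of the principal eigenvalue
\[
\mu:=\lambda_1\big(d_1,a_1,\,r-m\chi_{[0,x_2)}\bar w,\,(0,L)\big)
\]
of the operator obtained by linearizing the $u$-equation of \eqref{eq1.2} at $u=0$: the state is linearly stable when $\mu<0$ and unstable when $\mu>0$. Here $\bar w>0$ is the unique solution of the \emph{decoupled} problem $d_2\bar w''-a_2\bar w'+h-q\bar w=0$ on $(0,x_2)$ with the no-flux Robin conditions at both ends, so $\bar w$ depends on $d_2,a_2,h,q,x_2$ but not on $m,r,d_1,a_1$. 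I would carry out everything in the weighted self-adjoint reformulation (conjugating by $e^{-a_1x/(2d_1)}$), which furnishes a Rayleigh-quotient characterization of $\mu$ with weight $e^{-a_1x/d_1}$, and hence both strict monotonicity of $\mu$ in its potential and the domain-comparison estimates used below. Since the scheme parallels the upstream analysis, I would reuse the eigenvalue machinery behind Theorems \ref{th1.1}--\ref{th1.4}, only adapting it to the downstream geometry.

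For the first (globally asymptotically stable) regime, the key observation is $r-m\chi_{[0,x_2)}\bar w\le r$, whence $\mu\le\lambda_1(d_1,a_1,r,(0,L))$; the latter is exactly the eigenvalue of the toxicant-free river model, which is negative precisely when $a_1\ge\sqrt{4d_1r}$, or $0<a_1<\sqrt{4d_1r}$ with $L<L_1^*$, by the defining property of $L_1^*$ \cite{lou2015evolution}. Thus $\mu<0$ and $(0,\bar w)$ is linearly stable. To rule out coexistence I would argue directly: if $(u^*,w^*)$ were a positive steady state, then $u^*>0$ is the principal eigenfunction of the operator with potential $r-u^*-m\chi_{[0,x_2)}w^*$ at eigenvalue $0$, so strict monotonicity gives $0=\lambda_1(d_1,a_1,r-u^*-m\chi_{[0,x_2)}w^*,(0,L))<\lambda_1(d_1,a_1,r,(0,L))<0$, a contradiction. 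Proposition \ref{propA}(i) then upgrades linear stability to global asymptotic stability.

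For part $(i)$ (large protection zone) I would compare with the protection-zone eigenvalue problem $d_1\psi''-a_1\psi'+r\psi=\nu_D\psi$ on $(x_2,L)$ with $\psi(x_2)=0$ and $\psi'(L)=0$. Using the zero-eigenvalue solution $e^{a_1x/(2d_1)}\sin(\omega x)$ with $\omega=\sqrt{4d_1r-a_1^2}/(2d_1)$, the critical length at which $\nu_D=0$ comes out to be exactly $L^*$; the Robin-upstream/Neumann-downstream duality makes this the same $L^*$ as in the upstream Theorem \ref{th1.1}$(i)$, and domain monotonicity gives $\nu_D\ge0$ whenever $L-x_2\ge L^*$. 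Extending the corresponding eigenfunction by zero across $(0,x_2)$ produces an admissible test function on whose support the potential equals $r$, so $\mu\ge\nu_D\ge0$; the inequality is strict because the everywhere-positive true principal eigenfunction of the full problem cannot vanish on $(0,x_2)$. Hence $\mu>0$ and $(0,\bar w)$ is unstable.

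For part $(ii)$ (small protection zone) I would produce the two threshold curves by strict monotonicity plus the intermediate value theorem. Increasing $m$ strictly lowers the potential on $(0,x_2)$, so $\mu$ is strictly decreasing in $m$; increasing $q$ strictly decreases $\bar w$ (differentiate its equation and apply the maximum principle to $\partial_q\bar w$, which solves $d_2v''-a_2v'-qv=\bar w>0$ with homogeneous Robin data), so $\mu$ is strictly increasing in $q$. The conservation identity $\int_0^{x_2}\bar w\,dx=hx_2/q$, obtained by integrating the $\bar w$-equation against its no-flux conditions, yields $\bar w\to\infty$ as $q\to0^+$ and $\bar w\to0$ as $q\to\infty$. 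Together with $\mu|_{m=0}=\lambda_1(d_1,a_1,r,(0,L))>0$ (since $L>L_1^*$) and $\lim_{m\to\infty}\mu=\lim_{q\to0^+}\mu=\nu_D<0$ (the Dirichlet$(x_2)$--Neumann$(L)$ eigenvalue with potential $r$, negative exactly because $L-x_2<L^*$), this gives for each fixed $h$ a unique zero $\bar m(h)$ and for each fixed $m$ a unique zero $\bar q(m)$ of $\mu$, across which the stated stability dichotomies hold. The main obstacle I anticipate is the localization limit $\mu\to\nu_D$: one must show that the divergent negative potential on $(0,x_2)$ genuinely enforces an effective Dirichlet barrier at $x_2$ so that the eigenvalue concentrates on the protection zone, and the related strict-inequality point at the threshold $L-x_2=L^*$ in part $(i)$ requires the same care; the remaining monotonicity and IVT steps are routine once the weighted framework and the conservation identity are in hand.
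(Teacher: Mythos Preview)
Your proposal is correct and follows essentially the same route as the paper: reduce everything to the sign of the principal eigenvalue $\lambda_1(d_1,a_1,r-m\chi_{[0,x_2)}\bar w,(0,L))$, exploit its strict monotonicity in the potential, and use the localization limit (the downstream analogue of Lemma~\ref{lemma3.1}) together with Proposition~\ref{propC} to pin down the sign at the extremes. The only tactical differences are that the paper rules out coexistence in the first regime via a parabolic comparison (the analogue of Lemma~\ref{l1}$(i)$) rather than your eigenvalue contradiction, and for part $(i)$ it obtains $\mu>\lambda_1^N$ from strict monotonicity in $m$ plus the limit rather than your direct test-function/simplicity argument; both are equivalent, and your conservation identity $\int_0^{x_2}\bar w=hx_2/q$ combined with the gradient bound of Lemma~\ref{lemma2.3}$(i)$ indeed delivers the uniform limits of $\bar w$ as $q\to0^+,\infty$ that the localization step requires.
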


If the  protected zone is situated at the downstream of the river, Theorem \ref{th1.5} shows that the population also face extinction due to its large advection rate ($a_1\ge\sqrt{4d_1r}$) or small  advection rate while a short habitat length ($0< a_1<\sqrt{4d_1r}$, $L<L^*_1$). When the advection rate of the population is small ($0< a_1<\sqrt{4d_1r}$) and the  river
 length is large $(L>L_1^*)$, Theorem \ref{th1.5} characterizes the influence of downstream protected zone length on the stability of the equilibrium $(0,\bar{w})$. Specifically,
when the protected area exceeds a critical spatial threshold $(L-x_2\ge L^*)$, population persistence occurs irrespective of toxicant exposure levels;
   Below this threshold $(L-x_2<L^*)$, persistence becomes dependent on additional factors: the effect coefficient of toxicant on population growth $m$, per-unit contaminant discharge rate $q$, and toxicant input rate $h$.

 % \begin{figure}[h]
  %	\centering
  	% 三张图片并列
  %	\subfigure[]{\includegraphics[width=0.4\textwidth]{2.m-h.png}} \hfill
  %	\subfigure[]{\includegraphics[width=0.4\textwidth]{2.m-q.png}} \\
  %	\caption{The stability of $(0,\bar{w})$ under a small protection zone $(L-x_2<L^*)$ when $a_1<\sqrt{4d_1r}$ and $L>L_1^*$. The yellow region indicates instability of $(0,\bar{w})$, implying that $u$ can persist. The blue region represents stability of $(0,\bar{w})$, in which $u$ may go extinct when rare.}
  %	\label{fig:4}
 % \end{figure}

For system~\eqref{eq1.2}, the following results show
 the effects of the advection rate $a_1$, the  diffusion coefficient $d_1$, and the  natural growth rate $r$ of population $u$ on its persistence.
  \begin{theorem}\label{th1.6}
  	Assume that all parameters except for $a_1$, $d_1$ and $r$ are fixed. Let $\bar{r}=\frac{m\int_{0}^{x_2}\bar{w}dx}{L}$. Then we have the following results:
  \begin{description}
    \item[$(i)$] If $r\ge \bar{r}$, given $d_1$, then there exists $\bar{a}_1>0$ satisfying $\lambda _1(d_1,\bar{a}_1,r-m\chi _{\left[ 0,x_2 \right)}\bar{w},\left( 0,L \right)) =0 $ such that $(0,\bar{w})$ is linearly stable when $a_1>\bar{a}_1$, and
    unstable when $0<a_1<\bar{a}_1$;
    \item[$(ii)$] If $r< \bar{r}$, then there exists $\bar{d}_1>0$ satisfying $\lambda _1\left( \bar{d}_1,0,r-m\chi _{\left[ 0,x_2 \right)}\bar{w},\left( 0,L \right) \right) =0$ such that the following statements are true:
        \begin{description}
          \item[$(ii.1)$] If $d_1 \ge \bar{d}_1(r)$, then $(0,\bar{w})$ is linearly stable for any $a_1>0$;
          \item[$(ii.2)$] If $d_1< \bar{d}_1(r)$, then $(0,\bar{w})$ is linearly stable for $a_1>\bar{a}_1$, and unstable for $0<a_1<\bar{a}_1$.
        \end{description}
  \end{description}

   Moreover,  $\bar{d}_1(r)$ is a  continuous function of $r$ and strictly decrease for $r\in(0,\bar{r})$ with $\lim\limits_{r\to{0}}\bar{d}_1(r)=0,\quad \lim\limits_{r\to{\bar{r}}}\bar{d}_1(r)=+\infty.$
  \end{theorem}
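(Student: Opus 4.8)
The plan is to reduce the linear stability of $(0,\bar w)$ to the sign of the principal eigenvalue
\[
\Lambda(r,d_1,a_1):=\lambda_1\bigl(d_1,a_1,\,r-m\chi_{[0,x_2)}\bar w,\,(0,L)\bigr),
\]
with the convention that $(0,\bar w)$ is linearly stable when $\Lambda<0$ and unstable when $\Lambda>0$. The first and essential observation is that $\bar w$ solves the decoupled boundary-value problem obtained from \eqref{eq1.2} by setting $u\equiv0$, namely $d_2\bar w_{xx}-a_2\bar w_x+h-q\bar w=0$ on $(0,x_2)$ with the Robin conditions at $0$ and $x_2$; hence $\bar w$, and therefore $\bar r=\frac{m}{L}\int_0^{x_2}\bar w\,dx$, do not depend on $r,d_1,a_1$. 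This is what makes it meaningful to vary $r,d_1,a_1$ against the fixed threshold $\bar r$.

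First I would determine the $a_1$-dependence for fixed $(r,d_1)$. The substitution $\phi=e^{a_1 x/(2d_1)}\psi$ turns the non-self-adjoint eigenvalue problem into a self-adjoint one with Robin conditions at both ends, yielding the Rayleigh characterization
\[
\Lambda(r,d_1,a_1)=\sup_{\psi\neq0}\frac{-d_1\!\int_0^L\!(\psi')^2\,dx-\tfrac{a_1}{2}\bigl(\psi(0)^2+\psi(L)^2\bigr)+\int_0^L\bigl(r-m\chi_{[0,x_2)}\bar w-\tfrac{a_1^2}{4d_1}\bigr)\psi^2\,dx}{\int_0^L\psi^2\,dx}.
\]
Every $a_1$-dependent term in the numerator decreases strictly in $a_1$ for each fixed $\psi$, so $\Lambda$ is continuous and strictly decreasing in $a_1$; the crude bound $\Lambda\le\max\bigl(r-m\chi_{[0,x_2)}\bar w\bigr)-\tfrac{a_1^2}{4d_1}$ forces $\Lambda\to-\infty$ as $a_1\to\infty$. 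Consequently, for each fixed $(r,d_1)$: if $\Lambda(r,d_1,0)>0$ there is a unique $\bar a_1>0$ with $\Lambda(r,d_1,\bar a_1)=0$, so $(0,\bar w)$ is unstable for $0<a_1<\bar a_1$ and stable for $a_1>\bar a_1$; while if $\Lambda(r,d_1,0)\le0$, then $(0,\bar w)$ is stable for all $a_1>0$.

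It then remains to analyze the Neumann principal eigenvalue $\Lambda(r,d_1,0)$. Since adding a constant to the potential shifts the eigenvalue by the same constant, $\Lambda(r,d_1,0)=r+\mu(d_1)$ with $\mu(d_1):=\lambda_1\bigl(d_1,0,-m\chi_{[0,x_2)}\bar w,(0,L)\bigr)$. The Rayleigh quotient is strictly decreasing in $d_1$ for every nonconstant $\psi$ (the constant function $\psi\equiv1$ returning the average), so $\mu$ is continuous and strictly decreasing on $(0,\infty)$, with the singular limits $\mu(d_1)\to\max\bigl(-m\chi_{[0,x_2)}\bar w\bigr)=0$ as $d_1\to0^+$ and $\mu(d_1)\to\frac1L\int_0^L\bigl(-m\chi_{[0,x_2)}\bar w\bigr)dx=-\bar r$ as $d_1\to\infty$; thus $\mu$ is a strictly decreasing continuous bijection from $(0,\infty)$ onto $(-\bar r,0)$. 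If $r\ge\bar r$, then $\Lambda(r,d_1,0)=r+\mu(d_1)>r-\bar r\ge0$ for all $d_1$, so we are in the crossing case for every $d_1$, which gives $(i)$. If $r<\bar r$, then $-r\in(-\bar r,0)$ and $\bar d_1(r):=\mu^{-1}(-r)$ is the unique root of $\Lambda(r,\cdot,0)=0$; monotonicity in $d_1$ gives $\Lambda(r,d_1,0)\le0$ iff $d_1\ge\bar d_1(r)$, yielding $(ii.1)$, while $d_1<\bar d_1(r)$ lands in the crossing case $(ii.2)$. The continuity, strict monotonicity, and the limits $\bar d_1(r)\to0$ as $r\to0$ and $\bar d_1(r)\to+\infty$ as $r\to\bar r$ are then inherited directly from $\mu^{-1}$.

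The crux of the argument is the eigenvalue asymptotics on which everything hinges: the strict monotonicity of $\Lambda$ in $a_1$ (which forces the symmetrizing substitution and careful bookkeeping of the boundary terms generated by the no-flux condition at $0$ and the Neumann condition at $L$), together with the two singular limits of $\mu$ as $d_1\to0^+$ and $d_1\to\infty$ for the \emph{discontinuous} weight $-m\chi_{[0,x_2)}\bar w$. Once these are established the assembly is routine. The whole scheme runs parallel to Theorem \ref{th1.3}, the only structural change being that the toxicant profile $\bar w$ is now supported on the upstream interval $[0,x_2)$ instead of on $(x_1,L]$, which is exactly what replaces $r^*$ by $\bar r$; the monotonicity and boundary behavior of $\bar d_1(r)$ follow from those of $\mu^{-1}$ as in Theorem \ref{th1.3}.
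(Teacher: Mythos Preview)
Your proposal is correct and follows essentially the same route as the paper, which proves Theorem~\ref{th1.6} by the method of Theorem~\ref{th1.3}: reduce stability to the sign of $\lambda_1$, use strict monotonicity in $a_1$ with $\lambda_1\to-\infty$ to get the crossing $\bar a_1$, and then analyze the Neumann eigenvalue at $a_1=0$ via its monotonicity in $d_1$ and the limits $\max s=r$ and $\frac1L\int_0^L s=r-\bar r$. The only cosmetic differences are that you rederive the $a_1$-monotonicity via the symmetrizing substitution rather than citing Lemma~\ref{lemma2.4}(iv), and you read off the monotonicity and endpoint limits of $\bar d_1(r)$ directly from the inverse $\mu^{-1}$ instead of the paper's contradiction arguments.
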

 Theorem \ref{th1.6} implies that if the intrinsic growth rate $r$ is sufficiently large, then a low advection rate $a_1$ promotes population persistence; If the intrinsic growth rate $r$ is small, then both a low diffusion rate $d_1$ and a low advection rate $a_1$
are required to ensure population persistence.

 %\begin{figure}[h]
	%\centering
	% 三张图片并列
%	\includegraphics[width=0.4\textwidth]{2r-d1.png} \\
%	\caption{The stability of $(0,\bar{w})$ when all parameters except for $a_1$, $d_1$ and $r$ are fixed. The blue region indicates the stability of $(0,\bar{w})$, implying that $u$ may go extinct when rare. When $(r,d_1)$ lies in the yellow region, there exists a unique $\bar{a}_1>0$, which classifies the dynamics of system~\eqref{eq1.2} into two scenarios, that is, $(0,\bar{w})$ is stable(unstable) for $a_1>\bar{a}_1$($a_1<\bar{a}_1$).}
%	\label{fig:5}
%\end{figure}
%\begin{itemize}[itemsep=0pt, topsep=0pt]
%	\item If the growth rate $r$ is small ($r<\bar{r}$), the population must reduce its free diffusion rate $d_1$ ( $d_1 < \bar{d}_1(r)$) to survive. When $r$ is large ($r>\bar{r}$), the population's survival becomes more dependent on its advection rate $a_1$, that is, the population $u$ can persist for small $a_1$($a_1<\bar{a}_1$), while extinct when rare for large $a_1$($a_1>\bar{a}_1$).
%\end{itemize}

The following results show that, under some sufficient conditions, system~\eqref{eq1.2} admits a unique coexistence steady state, which is globally asymptotically stable.
   \begin{theorem}\label{th1.7}
  	Suppose that $0< a_1<\sqrt{4d_1r}$ and $L>L_1^*$. Then the following statements hold:
  \begin{description}
    \item[$(i)$] For $L-x_2\ge L^*$ (large protection zone), when $m\le p^{-1}e^{-|\frac{a_1}{d_1}-\frac{a_2}{d_2}|x_2}$ and $\bar{w}^2(x_2)\le h$, system~\eqref{eq1.2} possesses a unique coexistence steady state $(u,w)$ which is globally asymptotically stable;
    \item[$(ii)$] For $L-x_2< L^*$ (small protection zone), when $m<min\{\bar{m},p^{-1}e^{-|\frac{a_1}{d_1}-\frac{a_2}{d_2}|x_2}\}$ and $\bar{w}^2(x_2)\le h$, system~\eqref{eq1.2} possesses a unique coexistence steady state $(u,w)$ which is globally asymptotically stable.
  \end{description}
  \end{theorem}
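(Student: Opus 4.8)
The plan is to mirror the proof of Theorem~\ref{th1.4}, since system~\eqref{eq1.2} differs from~\eqref{eq1.1} only in that the toxicant lives on $(0,x_2)$ rather than on $(x_1,L)$, and the whole argument is organized around Proposition~\ref{propA}(ii). Thus I must verify two things: first, that $(0,\bar w)$ is linearly unstable, so that a locally stable coexistence state exists; and second, that \emph{every} coexistence state is linearly stable, so that it is unique and globally asymptotically stable. The two smallness hypotheses $m\le p^{-1}e^{-|\frac{a_1}{d_1}-\frac{a_2}{d_2}|x_2}$ and $\bar w^2(x_2)\le h$ enter only in the second, harder step.

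For the instability of $(0,\bar w)$: in case~(i), $L-x_2\ge L^*$, this is exactly Theorem~\ref{th1.5}(i). In case~(ii), $L-x_2<L^*$, the requirement $m<\bar m$ (contained in $m<\min\{\bar m,\dots\}$) places the parameters in the unstable regime of Theorem~\ref{th1.5}(ii.1), so again $(0,\bar w)$ is unstable. Proposition~\ref{propA}(ii) then yields at least one linearly stable coexistence state $(u^*,w^*)$. Along the way I would record the a priori bounds coming from the comparison principle: $0<u^*\le r$ on $(0,L)$, and, since $pu^*w^*\ge0$, $0<w^*\le\bar w$ on $(0,x_2)$, with $\bar w^2(x_2)\le h$ furnishing the quantitative control of $w^*$ there.

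The core is to show that an \emph{arbitrary} coexistence state $(u^*,w^*)$ is linearly stable. Linearizing~\eqref{eq1.2} gives the competitive eigenvalue problem
\begin{equation*}
\begin{cases}
d_1\phi_{xx}-a_1\phi_x+(r-2u^*-m\chi_{[0,x_2)}w^*)\phi-mu^*\chi_{[0,x_2)}\psi=\mu\phi,&x\in(0,L),\\
d_2\psi_{xx}-a_2\psi_x-pw^*\phi-(q+pu^*)\psi=\mu\psi,&x\in(0,x_2),
\end{cases}
\end{equation*}
with the associated boundary conditions. Replacing $\psi$ by $-\psi$ turns this into a cooperative system, so by Krein--Rutman its principal eigenvalue $\mu_1$ is real, carries a positive eigenfunction, and satisfies $\operatorname{Re}\mu\le\mu_1$ for every eigenvalue; hence it suffices to prove $\mu_1<0$. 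Writing $\mathcal A_1=d_1\partial_{xx}-a_1\partial_x+(r-2u^*-m\chi_{[0,x_2)}w^*)$ and $\mathcal A_2=d_2\partial_{xx}-a_2\partial_x-(q+pu^*)$, the steady-state relations make the diagonal parts already stable: since $u^*>0$ solves $d_1u^*_{xx}-a_1u^*_x+u^*(r-u^*-m\chi_{[0,x_2)}w^*)=0$, the operator $\mathcal A_1=(d_1\partial_{xx}-a_1\partial_x+(r-u^*-m\chi_{[0,x_2)}w^*))-u^*$ has $\lambda_1(\mathcal A_1)<0$, and since $\mathcal A_2w^*=-h<0$ with $w^*>0$, $w^*$ is a positive supersolution giving $\lambda_1(\mathcal A_2)<0$.

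It remains to absorb the coupling. After symmetrizing $\mathcal A_1$ and $\mathcal A_2$ with the weights $e^{-a_1 x/d_1}$ and $e^{-a_2 x/d_2}$, I would test the two equations against $\phi e^{-a_1 x/d_1}$ and $\psi e^{-a_2 x/d_2}$, integrate, and add; the diagonal contributions are bounded above by $\lambda_1(\mathcal A_1)$ and $\lambda_1(\mathcal A_2)$ times the respective weighted $L^2$ norms, both strictly negative, while the two cross terms collapse into a single integral over the toxicant domain $(0,x_2)$. The decisive estimate is that, because the two weights differ, pairing them produces the mismatch factor $e^{|\frac{a_1}{d_1}-\frac{a_2}{d_2}|x}\le e^{|\frac{a_1}{d_1}-\frac{a_2}{d_2}|x_2}$ on $(0,x_2)$, while the two couplings $mu^*$ and $pw^*$ jointly contribute the product $mp$; the hypothesis $mp\le e^{-|\frac{a_1}{d_1}-\frac{a_2}{d_2}|x_2}$, together with $w^*\le\bar w$ controlled by $\bar w^2(x_2)\le h$, renders the coupling subdominant to the negative diagonal and forces $\mu_1<0$. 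This weighted cross-term estimate---balancing the two exponential weights against $mp$ on $(0,x_2)$---is the main obstacle and the only place where $x_2$ (rather than $L$) appears, which is precisely the difference from Theorem~\ref{th1.4}. With linear stability of every coexistence state established, the second half of Proposition~\ref{propA}(ii) delivers a unique, globally asymptotically stable coexistence state, completing both~(i) and~(ii).
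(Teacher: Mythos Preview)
Your high-level scaffolding is exactly right and matches the paper: instability of $(0,\bar w)$ via Theorem~\ref{th1.5} (using $m<\bar m$ in case~(ii)), then linear stability of \emph{every} coexistence state, then Proposition~\ref{propA}(ii). Your observations that $\lambda_1(\mathcal A_1)<0$ (since $u^*$ is a positive eigenfunction at level $0$ for $\mathcal A_1+u^*$) and $\lambda_1(\mathcal A_2)<0$ (since $\mathcal A_2 w^*=-h<0$) are also correct.

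The gap is in the core spectral estimate. Your plan is to test the two linearized equations against $\phi\,e^{-a_1x/d_1}$ and $\psi\,e^{-a_2x/d_2}$, bound the diagonals by $\lambda_1(\mathcal A_i)$ via the variational characterization, and then absorb the cross terms. But after Young/Cauchy--Schwarz the cross terms carry the coefficients $m u^*$ and $p w^*$ separately, and must be compared with the \emph{unknown} gaps $-\lambda_1(\mathcal A_1)$ and $-\lambda_1(\mathcal A_2)$; there is no mechanism in your outline that makes the product $mp$ appear, nor one that turns the single assumption $\bar w^2(x_2)\le h$ into a bound on the coupling relative to those gaps. In short, the precise hypotheses $mp\,e^{|\frac{a_1}{d_1}-\frac{a_2}{d_2}|x_2}\le 1$ and $h\ge\bar w^2(x_2)$ are not the output of a quadratic $L^2$ energy estimate; they are tailored to a different computation.

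What the paper actually does (Lemma~\ref{lemma4.1}, proved as in Lemma~\ref{lemma3.7}) is a ``ratio'' trick: one first subtracts $\phi_1$ times the steady-state equation from $u$ times the linearized equation, then tests with $\dfrac{\phi_1^2}{u^2}e^{-a_1x/d_1}$ (and analogously $\dfrac{\psi_1^2}{w^2}e^{-a_2x/d_2}$ for the second equation). The second-order pieces collapse to the sign-definite integrals
\[
I_1=-\int 2d_1 e^{-a_1x/d_1}\frac{\phi_1^3}{u}\Bigl(\tfrac{u_x}{u}-\tfrac{\phi_{1,x}}{\phi_1}\Bigr)^2\,dx\le 0,\qquad
J_1=-\int 2d_2 e^{-a_2x/d_2}\frac{\psi_1^3}{w}\Bigl(\tfrac{w_x}{w}-\tfrac{\psi_{1,x}}{\psi_1}\Bigr)^2\,dx\ge 0,
\]
and after scaling one identity by $m^3$ (or $p^3$, depending on the sign of $\frac{a_1}{d_1}-\frac{a_2}{d_2}$), the remaining zero-order terms factor \emph{algebraically} over $(0,x_2)$ as
\[
\int_0^{x_2} e^{-\alpha x}\,(m\psi_1-\phi_1)(m\psi_1+\phi_1)^2\,dx\le 0,
\]
provided one can replace $mp\,e^{|\frac{a_1}{d_1}-\frac{a_2}{d_2}|x}$ by $1$ and $h/w^2$ by $1$ on $(0,x_2)$. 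These two replacements are \emph{exactly} the hypotheses $mp\,e^{|\frac{a_1}{d_1}-\frac{a_2}{d_2}|x_2}\le 1$ and $h\ge\bar w^2(x_2)$ (the latter because $w<\bar w$ and $\bar w$ is increasing by Lemma~\ref{lemma2.3}(i), so $w^2(x)\le\bar w^2(x_2)\le h$ on $[0,x_2]$). That cubic factorization, not a quadratic energy bound, is what forces $\bar\lambda_1<0$ under the stated conditions. You should replace your $L^2$ testing by this ratio/cubic argument.
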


  \section{Proof of the main results}
\subsection{Preliminaries}
%In this subsection, we aim to prepare some useful results which will be utilized in later analysis.
  %\subsection{Single equation problem}
First, we  consider the following associated stationary problems
  \begin{equation}\label{eq2.1}
  	\begin{cases}
  		d_2w_{xx}-a_2w_x+h-qw=0,&x_1<x<L,
  		\\
  		d_2w_x\left( x_1 \right) -a_2w\left( x_1 \right) =d_2w_x\left( L \right) =0,
  	\end{cases}
  \end{equation}
  and
   \begin{equation}\label{eq2.2}
  	\begin{cases}
  		d_2w_{xx}-a_2w_x+h-qw=0,&0<x<x_2,
  		\\
  		d_2w_x\left( 0 \right) -a_2w\left( 0 \right) =0,
  		\\
  		d_2w_x\left( x_2 \right) -a_2w\left( x_2 \right) =0.
  	\end{cases}
  \end{equation}
  and
  \begin{equation}\label{f1}
  	\begin{cases}
  		d_1u_{xx}-a_1u_x+u(r-u)=0,&0<x<L,
  		\\
  		d_1u_x\left( 0 \right) -a_1u\left( 0 \right) =d_1u_x\left( L \right) =0,
  	\end{cases}
  \end{equation}
 The following results about  the existence of positive solutions for problems $(\ref{eq2.1})$, $(\ref{eq2.2})$ and $(\ref{f1})$ can be easily obtained.
  	
  \begin{lemma}\label{lemma2.1}	
 $(\romannumeral 1)$  For any parameters $d_2,a_2,h,q>0$, problem \eqref{eq2.1} always has a unique positive solution, denoted by $\tilde{w}$, which can be expressed as $$\tilde{w}\left( x \right) =c_1e^{\alpha _1x}+c_2e^{\alpha _2x}+\frac{h}{q}$$
  with
  $$
  \alpha _1=\frac{a_2+\sqrt{a_{2}^{2}+4d_2q}}{2d_2},\quad
  \alpha _2=\frac{a_2-\sqrt{a_{2}^{2}+4d_2q}}{2d_2},
  $$
  $$
  c_1=\frac{ha_2\alpha _2}{q\left[ \alpha _2\left( d_2\alpha _1-a_2 \right) e^{\alpha _1x_1}-\alpha _1e^{L\left( \alpha _1-\alpha _2 \right)}\left( d_2\alpha _2-a_2 \right) e^{\alpha _2x_1} \right]}
  , \quad
  c_2=-\frac{\alpha _1}{\alpha _2}e^{L\left( \alpha _1-\alpha _2 \right)}c_1.
  $$

  $(\romannumeral 2)$  For any parameters $d_2,a_2,h,q>0$, problem \eqref{eq2.2} always has a unique positive solution, denoted by $\bar{w}$, which can be expressed as $$\bar{w}\left( x \right) =k_1e^{\alpha _1x}+k_2e^{\alpha _2x}+\frac{h}{q}.$$
  Here
  $$
  \alpha _1=\frac{a_2+\sqrt{a_{2}^{2}+4d_2q}}{2d_2}
  ,\quad
  \alpha _2=\frac{a_2-\sqrt{a_{2}^{2}+4d_2q}}{2d_2},
  $$
  $$
  k_1=\frac{ha_2\left( 1-e^{\alpha _2x_2} \right)}{q\left( d_2\alpha _1-a_2 \right) \left( e^{\alpha _1x_2}-e^{\alpha _2x_2} \right)}
  ,\quad
  k_2=\frac{ha_2\left( 1-e^{\alpha _1x_2} \right)}{q\left( d_2\alpha _2-a_2 \right) \left( e^{\alpha _2x_2}-e^{\alpha _1x_2} \right)}.
  $$

 $(\romannumeral 3)$\cite{lou2015evolution} For any parameters $d_1,r>0$, there exists $L_1^*$ such that problem \eqref{f1} has a unique positive solution, denoted by $\tilde{u}$, if $0< a_1<\sqrt{4d_1r}$ and $L>L_1^*$. If $a_1\ge \sqrt{4d_1r}$ or $0< a_1< \sqrt{4d_1r}$ and $L<L_1^*$, then $0$ is the unique nonnegative solution to problem \eqref{f1}.
  \end{lemma}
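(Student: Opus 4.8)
The plan is to treat parts $(\romannumeral 1)$ and $(\romannumeral 2)$ as linear constant-coefficient two-point boundary value problems that can be solved in closed form, and to invoke \cite{lou2015evolution} directly for part $(\romannumeral 3)$. For $(\romannumeral 1)$ I would start from the characteristic equation $d_2\alpha^2-a_2\alpha-q=0$ of the homogeneous operator, whose roots are precisely the stated $\alpha_1>0>\alpha_2$ (the inequality following from $\sqrt{a_2^2+4d_2q}>a_2$), and note that the constant $h/q$ is a particular solution since $q>0$. Hence every solution has the form $\tilde w(x)=c_1e^{\alpha_1 x}+c_2e^{\alpha_2 x}+h/q$. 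Substituting this into the two boundary conditions $d_2 w_x(x_1)-a_2 w(x_1)=0$ and $d_2 w_x(L)=0$ produces a $2\times2$ linear system for $(c_1,c_2)$; the second condition immediately gives $c_2=-\frac{\alpha_1}{\alpha_2}e^{(\alpha_1-\alpha_2)L}c_1$, and back-substitution into the first yields the displayed expression for $c_1$.

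For existence and uniqueness of this ODE solution it suffices that the coefficient determinant of the $(c_1,c_2)$-system not vanish, equivalently that the homogeneous problem (set $h=0$) admit only $w\equiv0$. I would obtain this from the maximum principle: the operator $\mathcal{L}w:=d_2w_{xx}-a_2w_x-qw$ has zeroth-order coefficient $-q\le0$, so a nontrivial homogeneous solution can attain neither a positive interior maximum nor a negative interior minimum. A positive boundary maximum at $x=L$ is excluded by Hopf's lemma, which would force $w_x(L)>0$ against the Neumann condition, and a positive boundary maximum at $x=x_1$ is excluded because Hopf gives $w_x(x_1)<0$ while the Robin condition forces $w_x(x_1)=\frac{a_2}{d_2}w(x_1)>0$; the same reasoning applied to $-w$ rules out negative boundary minima. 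Thus $w\equiv0$, the determinant is nonzero, and $c_1,c_2$ are uniquely determined.

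The heart of the argument is positivity of $\tilde w$, which I would again establish by the maximum principle rather than by a sign analysis of the closed-form constants. Since $\mathcal{L}\tilde w=-h<0$, the function $\tilde w$ is a supersolution and cannot attain a negative interior minimum. If the negative minimum were attained at $x=L$, Hopf's lemma would give $w_x(L)<0$, contradicting $w_x(L)=0$; if it were attained at $x=x_1$, Hopf would give $w_x(x_1)>0$ while the Robin condition gives $w_x(x_1)=\frac{a_2}{d_2}w(x_1)<0$, again a contradiction. Hence $\tilde w\ge0$, and since $\tilde w\not\equiv0$ (otherwise $\mathcal{L}\tilde w=0\ne-h$), the strong maximum principle upgrades this to $\tilde w>0$ on $[x_1,L]$.

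Part $(\romannumeral 2)$ is entirely analogous: the same characteristic roots and particular solution give $\bar w(x)=k_1e^{\alpha_1 x}+k_2e^{\alpha_2 x}+h/q$, and imposing the two Robin conditions at $x=0$ and $x=x_2$ determines $(k_1,k_2)$ as stated, with existence, uniqueness, and positivity following from the identical maximum-principle argument (now with a Robin condition at both endpoints). Part $(\romannumeral 3)$ is the logistic problem \eqref{f1}, whose threshold $L_1^*$ and existence/nonexistence dichotomy for the positive solution $\tilde u$ are exactly the content of \cite{lou2015evolution}, so I would simply cite it. I expect the only delicate point to be the boundary analysis in the positivity and uniqueness steps at the Robin endpoint, where one must check that the advection term $-a_2w_x$ does not interfere with Hopf's lemma and that the sign of the Robin relation $w_x=\frac{a_2}{d_2}w$ is incompatible with the claimed boundary extremum; everything else reduces to routine linear ODE computation.
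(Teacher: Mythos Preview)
Your approach is correct and complete. The paper itself does not supply a proof of this lemma---it simply states that the results ``can be easily obtained'' and moves on---so there is nothing to compare against beyond noting that your explicit ODE solution plus maximum-principle argument for uniqueness and positivity is exactly the kind of routine verification the authors are implicitly deferring to the reader, and your treatment of the Robin endpoint via the sign incompatibility between Hopf's lemma and $w_x(x_1)=\frac{a_2}{d_2}w(x_1)$ is the right way to close that case.
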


  \begin{lemma}\label{lemma 2.2}
   The following statements on $\tilde{w}$ are true.

  $(\romannumeral 1)$  $0<\tilde{w}_x<\frac{a_2}{d_2}\tilde{w}$ in $(x_1,L)$.

   $(\romannumeral 2)$  There exists $\delta\in(0,1)$ such that $\delta\frac{h}{q}\le\tilde{w} <\frac{h}{q}$ on $[x_1,L]$.

 $(\romannumeral 3)$ $\tilde{w}$ is strictly increasing with respect to $h$.

 $(\romannumeral 4)$ $\tilde{w}$ is strictly decreasing with respect to $a_2$.

 $(\romannumeral 5)$  $\lim\limits_{h \to 0} \tilde{w} = 0, \quad \lim\limits_{h \to +\infty} \tilde{w} = +\infty$
   uniformly on $[x_1,L]$.

 $(\romannumeral 6)$  $\lim\limits_{a_2 \to 0} \tilde{w} = \frac{h}{q}, \quad \lim\limits_{a_2 \to +\infty} \tilde{w} = 0$ uniformly on $[x_1,L]$.

 $(\romannumeral 7)$ $\tilde{w}$ is strictly decreasing with respect to $q$, and $\lim\limits_{q \to 0} \tilde{w} = +\infty, \quad \lim\limits_{q \to +\infty} \tilde{w} = 0$ uniformly on $[x_1,L]$.
  \end{lemma}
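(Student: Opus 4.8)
The plan is to treat Lemma~\ref{lemma 2.2} as a list of qualitative facts about the linear two-point boundary value problem~\eqref{eq2.1}, and to prove them with three tools: the explicit representation of $\tilde w$ from Lemma~\ref{lemma2.1}, the maximum principle and Hopf boundary lemma for the operator $\mathcal L:=d_2\partial_{xx}-a_2\partial_x-q$ (whose zeroth-order coefficient $-q$ is negative), and differentiation of \eqref{eq2.1} with respect to $h$, $a_2$, $q$. For part~(i) I would set $v:=\tilde w_x$ and $z:=\frac{a_2}{d_2}\tilde w-\tilde w_x$. Differentiating the equation gives $\mathcal L v=0$, while a direct computation using $\mathcal L\tilde w=-h$ gives $\mathcal L z=-\frac{a_2h}{d_2}<0$; the boundary conditions translate into $v(x_1)=\frac{a_2}{d_2}\tilde w(x_1)>0$, $v(L)=0$ and $z(x_1)=0$, $z(L)=\frac{a_2}{d_2}\tilde w(L)>0$. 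The maximum principle then forces $v>0$ and $z>0$ on $(x_1,L)$, which is precisely $0<\tilde w_x<\frac{a_2}{d_2}\tilde w$.

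For part~(ii), the constant $h/q$ solves the differential equation (but not the boundary conditions); since part~(i) already gives $\tilde w_x>0$, the function $\tilde w$ is increasing, its maximum sits at $x=L$, and evaluating the equation at $L$ (where $\tilde w_x(L)=0$ and $\tilde w_{xx}(L)\le0$) yields $\tilde w(L)\le h/q$, with equality excluded by ODE uniqueness; the lower bound then follows by taking $\delta=q\tilde w(x_1)/h\in(0,1)$, using $\min\tilde w=\tilde w(x_1)$. Parts~(iii) and~(v) are cleanest via the observation that \eqref{eq2.1} is \emph{linear} in $h$: writing $\tilde w=h\phi$, where $\phi>0$ solves \eqref{eq2.1} with $h=1$, strict monotonicity in $h$ and the uniform limits $h\phi\to0$, $h\phi\to+\infty$ as $h\to0,+\infty$ are immediate from $0<\min_{[x_1,L]}\phi\le\phi\le\max_{[x_1,L]}\phi<\infty$.

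The monotonicity statements in parts~(iv) and~(vii) I would obtain by differentiating \eqref{eq2.1}. Setting $\psi:=\partial_{a_2}\tilde w$ gives $\mathcal L\psi=\tilde w_x\ge0$ with boundary data $d_2\psi_x(x_1)-a_2\psi(x_1)=\tilde w(x_1)>0$ and $\psi_x(L)=0$; a maximum--Hopf argument (an interior nonnegative maximum is excluded by $\mathcal L\psi\ge0$, a maximum at $L$ by Hopf against $\psi_x(L)=0$, and a maximum at $x_1$ by Hopf against the positive Robin flux) shows $\psi<0$, so $\tilde w$ strictly decreases in $a_2$. The same scheme with $\rho:=\partial_q\tilde w$ gives $\mathcal L\rho=\tilde w>0$ under homogeneous boundary conditions, whence $\rho<0$. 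For the limits I would integrate \eqref{eq2.1} over $(x_1,L)$ to get the mass-balance identity
\begin{equation*}
q\int_{x_1}^{L}\tilde w\,dx = h(L-x_1)-a_2\tilde w(L),
\end{equation*}
which bounds $\tilde w\le\tilde w(L)\le h(L-x_1)/a_2\to0$ as $a_2\to+\infty$ (part~(vi)), while $a_2\to0$ gives $\tilde w\to h/q$ by continuous dependence (equivalently $c_1,c_2\to0$ in the representation). For part~(vii), the case $q\to+\infty$ follows from $\tilde w<h/q\to0$ via part~(ii).

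The main obstacle is twofold. First, the sign analysis in part~(iv) is delicate because the forcing $\tilde w_x$ and the inhomogeneous Robin flux cooperate only after one carefully matches the orientation of the Hopf lemma at each endpoint (outward normal $-x$ at $x_1$ versus $+x$ at $L$); this is where I expect to spend the most care. Second, the claim $\lim_{q\to0}\tilde w=+\infty$ in part~(vii) is the genuinely subtle point: the mass-balance identity shows that for $a_2>0$ the downstream outflow $a_2\tilde w(L)$ absorbs the input $h(L-x_1)$ even as $q\to0$, so $\tilde w$ in fact stays bounded (the representation converges to a finite profile of the form $A+Be^{a_2x/d_2}+\frac{h}{a_2}x$). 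The stated blow-up holds in the no-advection case $a_2=0$, where $\tilde w\equiv h/q$; I would therefore reexamine the intended hypotheses for that single limit before completing the proof, keeping the remaining six assertions as established above.
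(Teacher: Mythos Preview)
Your scheme for parts (i)--(vi), and for the monotonicity and $q\to+\infty$ limit in (vii), is correct and close in spirit to the paper's argument. The paper also proves (iv) by differentiating in $a_2$ and invoking the strong maximum principle, proves (vi) by the same mass-balance integration you describe, and dispatches (vii) by the remark ``similar to (iii) and (v)''. The differences are local: for (i) the paper sets $V=\tilde w_x/\tilde w$ and applies the maximum principle once to the nonlinear equation for $V$, whereas you split into $v=\tilde w_x$ and $z=\tfrac{a_2}{d_2}\tilde w-\tilde w_x$ and argue twice---either route works. For (ii) the paper reads the strict upper bound from the signs of the explicit coefficients $c_1,c_2<0$ and produces an \emph{explicit} $\delta=\frac{q}{a_2/(L-x_1)+q}\,e^{-\frac{a_2}{d_2}(L-x_1)}$ via the integrated balance, while your endpoint evaluation and abstract $\delta=q\tilde w(x_1)/h$ are shorter but give no formula. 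For (iii) and (v) your linearity trick $\tilde w=h\phi$ is more elementary than the paper's differentiation-plus-squeeze, and gives the same conclusions.

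Your flag on the $q\to 0$ limit in (vii) is well taken: the claim $\lim_{q\to 0}\tilde w=+\infty$ is not correct when $a_2>0$. The paper's own bound $\delta\,h/q\le\tilde w$ from (ii) does not force blow-up because $\delta$ there depends on $q$ and satisfies $\delta\,h/q\to \frac{h(L-x_1)}{a_2}e^{-\frac{a_2}{d_2}(L-x_1)}<\infty$ as $q\to0$; and, as you observe, the solution of the limiting problem $d_2w_{xx}-a_2w_x+h=0$ with the same boundary conditions is finite. So your hesitation is justified: the remaining six assertions go through as you outline, and the $q\to0$ blow-up holds only in the degenerate case $a_2=0$ (where $\tilde w\equiv h/q$), not under the standing hypothesis $a_2>0$.
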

  \begin{proof}
  $(\romannumeral 1)$ Set $V=\frac{\tilde{w}_x}{\tilde{w}}$. Then $V$ satisfies
  \begin{equation*}
  	\begin{cases}
  	-d_2V_{xx}+(a_2-2d_2V)V_x+\frac{h}{\tilde{w}}V=0,&x_1<x<L,\\
  	V(x_1)=\frac{a_2}{d_2}, V(L)=0.
  	\end{cases}
  \end{equation*}
  The desired result follows from the strong maximum principle \cite{evans2010pde} for elliptic equations.

 $(\romannumeral 2)$ Through simple calculations, it can be concluded that $c_1,c_2<0$, which indicates that
\begin{equation}\label{sm}
\tilde{w}(x)<\frac{h}{q}.
\end{equation}
From $(\romannumeral 1)$, we have
 \begin{equation}\label{eq2.4}
 \tilde{w}(x_1)\ge \tilde{w}(L)e^{-\frac{a_2}{d_2}(L-x_1)}.
 \end{equation}
  Integrating the first equation of $(\ref{eq2.1})$ on $[x_1,L]$, with the help of the boundary conditions, we find that
  \begin{equation}\label{eq2.5}
 \tilde{w}(L)\ge \frac{h(L-x_1)}{a_2+q(L-x_1)}=h\frac{1}{\frac{a_2}{L-x_1}+q}.
  \end{equation}
  From \eqref{eq2.4} and \eqref{eq2.5}, we have
  \begin{equation}\label{eq2.6}
  \tilde{w}(x)\ge \tilde{w}(x_1)\ge h\frac{1}{\frac{a_2}{L-x_1}+q}e^{-\frac{a_2}{d_2}(L-x_1)}.
  \end{equation}
  Notice that
  $$0<\frac{1}{\frac{a_2}{L-x_1}+q}e^{-\frac{a_2}{d_2}(L-x_1)}<\frac{1}{q}.$$
  Then there exists some $\delta \in(0,1)$
  such that
  \begin{equation}\label{eq2.7}
  	\delta\frac{1}{q}=\frac{1}{\frac{a_2}{L-x_1}+q}e^{-\frac{a_2}{d_2}(L-x_1)}.
  \end{equation}
  From \eqref{eq2.6} and \eqref{eq2.7}, we have
 \begin{equation}\label{bm}
  \tilde{w}(x)\ge \delta\frac{h}{q}\quad x\in[x_1,L].
  \end{equation}
  From \eqref{sm} and \eqref{bm} we get $(\romannumeral 2)$.

  $(\romannumeral 3)$
  Setting $\acute{\tilde{w}}=\frac{\partial \tilde{w}}{\partial h}$, we have
  \begin{equation*}
  	\begin{cases}
  	-d_2\acute{\tilde{w}}_{xx}+a_2\acute{\tilde{w}}_x+q\acute{\tilde{w}}=1>0,&x_1<x<L,
  	\\
  	-d_2\acute{\tilde{w}}_x(x_1)+a_2\acute{\tilde{w}}(x_1)=\acute{\tilde{w}}_x(L)=0.
  	\end{cases}
  \end{equation*}
  It then follows from the strong maximum principle \cite{evans2010pde} for elliptic equations that $\acute{\tilde{w}}>0$ for $x\in(x_1,L)$, which implies that $(\romannumeral 3)$ holds.

$(\romannumeral 4)$
  Set $\grave{\tilde{w}}=\frac{\partial \tilde{w}}{\partial a_2}$. Then $\grave{\tilde{w}}$ satisfies
  \begin{equation*}
  \begin{cases}
  	-d_2\grave{\tilde{w}}_{xx}+a_2\grave{\tilde{w}}_x+q\grave{\tilde{w}}=-\tilde{w}_x<0,&x_1<x<L,
  	\\
  	-d_2\grave{\tilde{w}}_x(x_1)+a_2\grave{\tilde{w}}(x_1)=-\tilde{w}(x_1)<0,
  	\\
  	\grave{\tilde{w}}_x(L)=0.
  \end{cases}
  \end{equation*}
  It then follows from the strong maximum principle for elliptic equations that $\grave{\tilde{w}}<0$ for $x\in(x_1,L)$, which implies that $(\romannumeral 4)$ holds.

  $(\romannumeral 5)$ In light of the Squeeze Theorem \cite{rudin1976}, together with $(\romannumeral 2)$, we can get $(\romannumeral 5)$ easily.

 $(\romannumeral 6)$ From $Lemma~\ref{lemma2.1}$, we see that $\tilde{w}$ is continuously differentiable with respect to $a_2$ for $x\in(x_1,L)$. It is easy to see that as $a_2\rightarrow 0$, $\tilde{w}$ converges to $\hat{w}$, where $\hat{w}$ fulfills
 \begin{equation}\label{s}
 	\begin{cases}
 		d_2\hat{w}_{xx}+h-q\hat{w}=0 ,&x_1<x<L,
 		\\
 		\hat{w}_x(x_1)=\hat{w}_x(L)=0.
 	\end{cases}
 \end{equation}
Solving \eqref{s}, we obtain $\hat{w}=\frac{h}{q}$. Hence, $\lim\limits_{a_2 \to 0} \tilde{w} = \frac{h}{q}$. Due to $\tilde{w}_x>0$ for $x\in(x_1,L)$, we have $0<\tilde{w}<\tilde{w}(L)$. Then to prove $\lim\limits_{a_2 \to +\infty} \tilde{w} = 0$, we only need to verify that $\lim\limits_{a_2 \to +\infty} \tilde{w}(L) = 0$. Integrating the first equation of $(\ref{eq2.1})$ on $[x_1,L]$, we obtain $a_2\tilde{w}(L)=h(L-x_1)-q\int_{x_1}^L{\tilde{w}}dx<h(L-x_1)$. Therefore $0\le \tilde{w}(L)<\frac{h(L-x_1)}{a_2}\rightarrow 0$ as $a_2\rightarrow+\infty$, which implies that $\lim\limits_{a_2 \to +\infty} \tilde{w}(L) = 0$.

$(\romannumeral 7)$ The proof is similar to that of $(\romannumeral 3)$ and $(\romannumeral 5)$.
  \end{proof}
\begin{lemma}\label{lemma2.3}
	 The following statements on $\bar{w}$ are true.

 $(\romannumeral 1)$ $0<\bar{w}_x<\frac{a_2}{d_2}\bar{w}$ in $(0,x_2)$.

 $(\romannumeral 2)$ $\bar{w}$ is strictly increases with respect to $h$.

  $(\romannumeral 3)$ $\lim\limits_{h \to 0} \bar{w} = 0, \quad \lim\limits_{h \to +\infty} \bar{w} = +\infty$
  uniformly on $[0,x_2]$.

   $(\romannumeral 4)$ $\bar{w}$ is strictly decreasing with respect to $q$.

  $(\romannumeral 5)$ $\lim\limits_{q \to 0} \bar{w} = +\infty, \quad \lim\limits_{q \to +\infty} \bar{w} = 0$
  uniformly on $[0,x_2]$.
\end{lemma}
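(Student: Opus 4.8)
The plan is to mirror the proof of Lemma~\ref{lemma 2.2}, the only structural difference being that $\bar{w}$ satisfies the Robin condition $d_2w_x-a_2w=0$ at \emph{both} endpoints of $(0,x_2)$ rather than a Robin condition at one end and a Neumann condition at the other.

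For $(\romannumeral 1)$ I would set $V=\bar{w}_x/\bar{w}$. Differentiating \eqref{eq2.2} and eliminating $\bar{w}$ exactly as in Lemma~\ref{lemma 2.2}$(\romannumeral 1)$ shows that $V$ solves
\begin{equation*}
-d_2V_{xx}+(a_2-2d_2V)V_x+\tfrac{h}{\bar{w}}V=0,\qquad 0<x<x_2,
\end{equation*}
but now \emph{both} boundary conditions translate into the Dirichlet data $V(0)=V(x_2)=a_2/d_2$. Since the zeroth-order coefficient $h/\bar{w}$ is positive, the strong maximum principle \cite{evans2010pde} gives $0<V<a_2/d_2$ in $(0,x_2)$; the only case to rule out is the constant $V\equiv a_2/d_2$, which is impossible because substituting it into the equation would force $\tfrac{h}{\bar{w}}\cdot\tfrac{a_2}{d_2}=0$. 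This yields $(\romannumeral 1)$, and in particular $\bar{w}$ is strictly increasing in $x$.

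For the limits $(\romannumeral 3)$ and $(\romannumeral 5)$ I would first establish a two-sided bound $C_1\tfrac{h}{q}\le\bar{w}\le C_2\tfrac{h}{q}$ on $[0,x_2]$, with $C_1,C_2>0$ depending only on $a_2,d_2,x_2$. Integrating the first equation of \eqref{eq2.2} over $(0,x_2)$ and using that the two boundary conditions have the identical form $d_2w_x-a_2w=0$, the boundary terms cancel in pairs and one obtains the identity $\int_0^{x_2}\bar{w}\,dx=hx_2/q$. Combining this with the differential inequality $0<(\ln\bar{w})_x<a_2/d_2$ from $(\romannumeral 1)$, which controls $\bar{w}(0)$ and $\bar{w}(x_2)$ in terms of $\bar{w}(x)$ through explicit exponential factors, produces the claimed bound; the limits in $(\romannumeral 3)$ and $(\romannumeral 5)$ then follow from the Squeeze Theorem, as in Lemma~\ref{lemma 2.2}$(\romannumeral 5)$. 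Strict monotonicity in $h$, statement $(\romannumeral 2)$, is then immediate: problem \eqref{eq2.2} is linear in $w$ with $h$ entering only as a constant source and with homogeneous boundary conditions, so $\bar{w}=h\,w_1$, where $w_1>0$ is the solution corresponding to $h=1$ (Lemma~\ref{lemma2.1}$(\romannumeral 2)$); hence $\partial\bar{w}/\partial h=w_1=\bar{w}/h>0$.

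Statement $(\romannumeral 4)$ is where I expect the real difficulty. Writing $\psi=\partial\bar{w}/\partial q$ and differentiating \eqref{eq2.2}, one finds $-d_2\psi_{xx}+a_2\psi_x+q\psi=-\bar{w}<0$ together with the homogeneous conditions $d_2\psi_x-a_2\psi=0$ at $x=0$ and $x=x_2$, and the goal is $\psi<0$. The interior weak maximum principle excludes a nonnegative interior maximum, and at the \emph{upstream} endpoint $x=0$ the relation $d_2\psi_x(0)=a_2\psi(0)$ combines with the Hopf lemma to exclude a nonnegative maximum there. The obstruction is the \emph{downstream} endpoint: the condition $d_2\psi_x(x_2)=a_2\psi(x_2)$ has the ``wrong sign'' for the Hopf step, since a nonnegative boundary maximum at $x_2$ is consistent with both the Hopf inequality and the boundary relation, so the maximum-principle argument that suffices for $\tilde{w}$ in Lemma~\ref{lemma 2.2}$(\romannumeral 7)$ does not close here. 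I would therefore settle $(\romannumeral 4)$ by reducing to the $h=1$ problem via $\bar{w}=h\,w_1$ and appealing to the explicit representation of $w_1$ in Lemma~\ref{lemma2.1}$(\romannumeral 2)$ to verify $\partial\bar{w}/\partial q<0$ directly; this closed-form differentiation is the step requiring genuine computation.
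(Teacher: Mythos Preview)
Your proposal is correct, and on parts $(\romannumeral 1)$, $(\romannumeral 3)$, $(\romannumeral 5)$ it follows the same spirit as the paper (which simply declares these ``similar to Lemma~\ref{lemma 2.2}'' and for $(\romannumeral 3)$ appeals to the explicit formula). Your linearity argument for $(\romannumeral 2)$---observing $\bar{w}=h\,w_1$---is in fact cleaner than what the paper does.

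Where you and the paper diverge is on the device used to cope with the downstream Robin condition. The paper's proof of $(\romannumeral 2)$ introduces the substitution $\mu=e^{-\frac{a_2}{d_2}x}\bar{w}$, under which \emph{both} Robin conditions $d_2\bar{w}_x-a_2\bar{w}=0$ become Neumann conditions $\mu_x=0$, and $\mu$ satisfies $-d_2\mu_{xx}-a_2\mu_x+q\mu=he^{-\frac{a_2}{d_2}x}$. This single change of variables dissolves exactly the obstruction you identified in $(\romannumeral 4)$: differentiating the $\mu$-equation in $q$ gives $-d_2\nu_{xx}-a_2\nu_x+q\nu=-\mu<0$ with $\nu_x(0)=\nu_x(x_2)=0$, and now the Hopf lemma applies at \emph{both} endpoints, yielding $\nu<0$ and hence $\partial_q\bar{w}=e^{\frac{a_2}{d_2}x}\nu<0$. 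So the ``genuine computation'' you anticipate for $(\romannumeral 4)$ via the closed form of Lemma~\ref{lemma2.1}$(\romannumeral 2)$ can be avoided entirely by this transformation. (Your diagnosis that the direct maximum-principle argument of Lemma~\ref{lemma 2.2} fails at $x_2$ is accurate; the paper's remark that $(\romannumeral 4)$ is ``similar to Lemma~\ref{lemma 2.2}'' tacitly relies on the $\mu$-substitution it has just introduced.)

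In summary: your route works, but the exponential weight $e^{-\frac{a_2}{d_2}x}$ is the missing trick that would replace your explicit differentiation in $(\romannumeral 4)$ by a one-line maximum-principle argument.
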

\begin{proof}
The  proofs of $(\romannumeral 1)$, $(\romannumeral 4)$ and $(\romannumeral 5)$ are similar to Lemma~\ref{lemma 2.2}. Statement $(\romannumeral 3)$ directly follows from the expression of $\bar{w}(x)$ and the continuity of $\bar{w}(x)$ on $h$. Then we only prove $(\romannumeral 2)$.

	Denote $\mu=e^{-\frac{a_2}{d_2}x}\bar{w}$. Then $\mu$ satisfies
	\begin{equation*}
	 \begin{cases}
	 	-d_2\mu_{xx}-a_2\mu_x+q\mu-he^{\frac{a_2}{d_2}x}=0, &0<x<x_2,\\
	 	\mu_x(0)=\mu_x(x_2)=0.
	 \end{cases}
	\end{equation*}
	Setting $\dot{\mu}=\frac{\partial \mu}{\partial h}$, similar to the proof of $Lemma~\ref{lemma 2.2} (\romannumeral 3)$ and utilizing the strong maximum principle, we obtain $\dot{\mu}>0$ for $x\in(0,x_2)$, which implies that $(\romannumeral 2)$ holds.
\end{proof}

%\subsection{Principal eigenvalue problem}
Next, for our further discussion, we recall some results on the
following eigenvalue problem:
  \begin{equation}\label{eq2.12}
  	\begin{cases}
  	d\varphi _{xx}-a\varphi_{x}+s(x)\varphi=\lambda\varphi ,&y_1<x<y_2,\\
  	d\varphi_x(y_1)-a\varphi(y_1)=\varphi_x(y_2)=0,
  	\end{cases}
  \end{equation}
  where $0\le y_1<y_2\le L$ and $s(x)\in L^\infty(y_1,y_2)$. Denoted the principal eigen-pair of problem $(\ref{eq2.12})$ by $(\lambda_1(d,a,s(x),(y_1,y_2)),\varphi_1(d,a,s(x),(y_1,y_2)))$. The principal eigenvalues of $(\ref{eq2.12})$ can be characterized by
 \begin{equation}\label{eq2.13}
 	\lambda_1(d,a,s(x),(y_1,y_2))=\sup_{0\ne \varphi \in H^1\left( y_1,y_2 \right)}\frac{\int_{y_1}^{y_2}{\left( -d\varphi _{x}^{2}e^{\frac{a}{d}x}+s(x)\varphi ^2e^{\frac{a}{d}x} \right)}dx-a\varphi ^2\left( y_2 \right) e^{\frac{a}{d}y_2}}{\int_{y_1}^{y_2}{\varphi ^2e^{\frac{a}{d}x}}dx}.
 \end{equation}
 It is well know that $\lambda_1(d,a,s(x),(y_1,y_2))$ is simple and its corresponding eigenfunction $\varphi_1(d,a,s(x),(y_1,y_2))$ can be chosen strictly positive in $(y_1,y_2)$. From \cite[Lemma2.1]{bushi2023} and \cite[Theorem 1.2]{peng2019},  we have the following properties for  the principal eigenvalue of problem $(\ref{eq2.12})$.
 \begin{lemma}\label{lemma2.4}
 	Suppose that $s(x)$ is a continuous function on $[y_1,y_2]$ for any $0\le y_1<y_2\le L$. Then the
 	following statements concerning $\lambda_1(d,a,s(x),(y_1,y_2))$ are valid.
 	
 $(\romannumeral 1)$ $\lambda_1(d,a,s(x),(y_1,y_2))$ is continuously differentiable with respect to  $a$ and $d$, respectively.

 $(\romannumeral 2)$ If $s(x)$ isn't a constant function in $(y_1,y_2)$, then
 $$\lim\limits_{d\to 0} \lambda_1(d,0,s(x),(y_1,y_2))=\underset{y\in [y_1,y_2]}{max} s(x) ,\quad  \lim\limits_{d\to{+\infty}} \lambda_1(d,a,s(x),(y_1,y_2))=\frac{\int_{y_1}^{y_2} s(x) dx}{y_2-y_1}.$$

 $(\romannumeral 3)$ $\lambda_1(d,0,s(x),(y_1,y_2))$ is strictly decreasing with respect to $d\in (0,+\infty)$.

 $(\romannumeral 4)$ $\lambda_1(d,a,s(x),(y_1,y_2))$ is strictly decreasing with respect to $a\in (0,+\infty)$, and $\lim\limits_{a\to{+\infty}} \lambda_1(d,a,s(x),(y_1,y_2))=-\infty$.

 $(\romannumeral 5)$ If $s_i(x)\in C([0,L])(i=1,2)$ and $s_1(x)\ge,\not\equiv s_2(x) $ in $(y_1,y_2)$, then $$\lambda_1(d,a,s_1(x),(y_1,y_2))>\lambda_1(d,a,s_2(x),(y_1,y_2)).$$
 \end{lemma}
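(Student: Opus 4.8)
The plan is to reduce \eqref{eq2.12} to a self-adjoint Sturm--Liouville problem and then extract all five properties from the variational characterization \eqref{eq2.13}. Writing $w(x)=e^{-\frac{a}{d}x}$ for the integrating factor that recasts $d\varphi_{xx}-a\varphi_x$ as $w^{-1}(dw\varphi_x)_x$, problem \eqref{eq2.12} becomes the symmetric weighted problem $(dw\varphi_x)_x+s(x)w\varphi=\lambda w\varphi$ under the same Robin/Neumann data, so $\lambda_1$ is the top eigenvalue of a self-adjoint operator and \eqref{eq2.13} is its Rayleigh quotient; Sturm--Liouville theory then guarantees that $\lambda_1$ is simple with a strictly positive principal eigenfunction $\varphi_1$. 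With this structure, statement $(\romannumeral 5)$ is immediate: inserting the $s_2$-maximizer $\varphi_2$ into the quotient for $s_1$ gives $\lambda_1(d,a,s_1,\cdot)\ge\lambda_1(d,a,s_2,\cdot)+\big(\int(s_1-s_2)\varphi_2^2w\big)\big/\big(\int\varphi_2^2w\big)$, and since $\varphi_2>0$ and $s_1\ge,\not\equiv s_2$ the correction is strictly positive. Statement $(\romannumeral 1)$ follows from Kato's analytic perturbation theory: the operators form a family depending analytically on $a$ and on $d$, and a simple isolated eigenvalue of such a family is real-analytic, hence $C^1$, in each parameter.

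For the monotonicity in $d$, statement $(\romannumeral 3)$, I would combine $(\romannumeral 1)$ with the Feynman--Hellmann (envelope) identity. When $a=0$ the weight is trivial and the quotient reduces to $\big(\int(-d\varphi_x^2+s\varphi^2)\big)\big/\int\varphi^2$; differentiating at the normalized eigenfunction yields $\partial_d\lambda_1=-\int\varphi_{1,x}^2\big/\int\varphi_1^2\le 0$, with strict inequality because a non-constant $s$ forces a non-constant $\varphi_1$ (a constant solves the equation only when $s$ is constant), so that $\int\varphi_{1,x}^2>0$. Equivalently, testing the $d'$-maximizer in the $d$-quotient for $d'>d$ gives the strict drop directly from $-d\varphi_x^2>-d'\varphi_x^2$.

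The two limits in $(\romannumeral 2)$ are two-sided test-function estimates. As $d\to 0$ (with $a=0$) the quotient obeys $\lambda_1\le\max s$ trivially, while a test function sharply concentrated at a maximizer of $s$ gives $\lambda_1\ge\max s-\varepsilon$, so $\lambda_1\to\max s$. As $d\to+\infty$ the gradient penalty $-d\int\varphi_x^2w$ forces any near-maximizer to homogenize to a constant, and together with the lower bound furnished by the constant test function this drives $\lambda_1$ to the (weighted) average of $s$ over $(y_1,y_2)$. For the advection limit in $(\romannumeral 4)$, I would show via admissible test functions that the Robin boundary term (of order $a$) together with the weighted gradient penalty jointly dominate as $a\to+\infty$, forcing the Rayleigh quotient, hence $\lambda_1$, to $-\infty$.

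The strict monotonicity of $\lambda_1$ in $a$, the remaining half of $(\romannumeral 4)$, is the step I expect to be the main obstacle, since $a$ enters both the operator and the weight $w=e^{-\frac{a}{d}x}$, so no naive Rayleigh-quotient comparison has a fixed sign. The plan is to apply the Liouville substitution $\varphi=e^{\frac{a}{2d}x}\psi$, which eliminates the drift and turns \eqref{eq2.12} into $d\psi_{xx}+(s-\tfrac{a^2}{4d})\psi=\lambda\psi$ with $a$-dependent Robin coefficients $\psi_x(y_1)=\tfrac{a}{2d}\psi(y_1)$ and $\psi_x(y_2)=-\tfrac{a}{2d}\psi(y_2)$. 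The potential $-\tfrac{a^2}{4d}$ is manifestly decreasing in $a$ and pushes $\lambda_1$ down by $(\romannumeral 5)$, but the boundary coefficients also move with $a$; combining the Feynman--Hellmann formula from $(\romannumeral 1)$ with a careful sign analysis of these boundary terms to conclude $\partial_a\lambda_1<0$ is the delicate point. The sharp concentration estimate in the $d\to 0$ case of $(\romannumeral 2)$ is the second most technical ingredient.
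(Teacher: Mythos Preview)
The paper does not prove this lemma at all: it simply states the result and cites \cite[Lemma~2.1]{bushi2023} and \cite[Theorem~1.2]{peng2019}. Your proposal is therefore not a variant of the paper's argument but a genuine replacement, and the overall strategy --- reduce to self-adjoint form, use the Rayleigh quotient \eqref{eq2.13}, and combine comparison/Feynman--Hellmann arguments --- is sound. Parts $(\romannumeral 1)$, $(\romannumeral 3)$, $(\romannumeral 5)$ and the two limits in $(\romannumeral 2)$ go through exactly as you describe.

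Your main worry, the strict monotonicity in $a$ in $(\romannumeral 4)$, is in fact \emph{not} delicate once you carry the Liouville substitution through to its variational form. With $\varphi=e^{\frac{a}{2d}x}\psi$ the transformed Rayleigh quotient is
\[
\lambda_1=\sup_{\psi\ne 0}\ \frac{-d\int_{y_1}^{y_2}\psi_x^2\,dx+\int_{y_1}^{y_2}\bigl(s-\tfrac{a^2}{4d}\bigr)\psi^2\,dx-\tfrac{a}{2}\bigl(\psi^2(y_1)+\psi^2(y_2)\bigr)}{\int_{y_1}^{y_2}\psi^2\,dx},
\]
and every $a$-dependent term in the numerator, namely $-\tfrac{a^2}{4d}\int\psi^2$ and $-\tfrac{a}{2}(\psi^2(y_1)+\psi^2(y_2))$, is strictly decreasing in $a>0$ for any nonzero test function. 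Thus for $a'>a$ one has $\lambda_1(a)\ge Q_a(\psi_{a'})>Q_{a'}(\psi_{a'})=\lambda_1(a')$ with strict inequality coming already from the potential shift; the boundary terms push in the same direction rather than competing. The same quotient gives $\lambda_1\le \max s-\tfrac{a^2}{4d}\to-\infty$, so the limit in $(\romannumeral 4)$ is immediate as well. In short, your plan works and the step you flagged as the main obstacle is actually the cleanest.

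One minor caution on $(\romannumeral 2)$: for the $d\to+\infty$ limit with $a>0$ the boundary term $-a\varphi^2(y_2)e^{\frac{a}{d}y_2}$ in \eqref{eq2.13} does not vanish when you test with a constant, so the naive homogenization heuristic you sketch needs care; make sure your argument for this case accounts for that contribution rather than appealing only to the gradient penalty.
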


 To support the proofs by means of monotone semiflow theory in  next subsection, we establish an order structure (including cones and partial order relations). Let $X := C([0, L])$ and $X^+$ be the collection of non-negative continuous functions on $[0, L]$. For the intervals $[x_1, L]$ and $[0, x_2]$, we introduce $X_1^+$ and $X_2^+$ as the sets of non-negative continuous functions thereon, respectively. Define cones $K_i := X^+ \times (-X_i^+)$ ($i = 1, 2$), where the interior of $K_i$ (non-empty by construction) is given by $\text{Int}K_i = \text{Int}X^+ \times (-\text{Int}X_i^+)$. The partial order relations $\leq_{K_i}$, $<_{K_i}$, and $\ll_{K_i}$ ($i = 1, 2$) correspond to the standard orderings induced by $K_i$, its punctured version $K_i \setminus \{(0, 0)\}$, and the interior $\text{Int}K_i$, respectively. More precisely, for $u_1, u_2 \in X^+$ and $w_1, w_2\in X_i^+$ ($i = 1, 2$)
 \begin{equation*}
 \begin{aligned}
&(u_1, w_1) \leq_{K_i} (u_2, w_2) \Longleftrightarrow u_1 \leq u_2 \text{ in } [0, L] \text{ and } w_2 \leq w_1 \text{ in } \Omega_i,\\
&(u_1, w_1) <_{K_i} (u_2, w_2) \Longleftrightarrow (u_1, w_1) \leq_{K_i} (u_2, w_2) \text{ and } \left( u_1 \neq u_2 \text{ in } [0, L] \text{ or } w_1 \neq w_2 \text{ in } \Omega_i \right),\\
&(u_1, w_1) \ll_{K_i} (u_2, w_2) \Longleftrightarrow u_1 < u_2 \text{ in } [0, L] \text{ and } w_2 < w_1 \text{ in } \Omega_i,
\end{aligned}
\end{equation*}
 where $\Omega_1 = [x_1, L]$ and $\Omega_2 = [0, x_2]$.

\subsection{Dynamics of system (\ref{eq1.1})}
 From Lemma \ref{lemma2.1}, system \eqref{eq1.1} admits a semi-trivial steady state $(0,\tilde{w})$. In this subsection, we will investigate the local dynamics of the semi-trivial steady state $(0,\tilde{w})$ and the coexistence steady state $(u,w)$ of system \eqref{eq1.1}.

 It is well-known that the linear stability of $(0,\tilde{w})$ is determined by the following eigenvalue problem
 \begin{equation*}
 	\begin{cases}
 		d_1\phi_{xx}-a_1\phi_x+(r-m\chi _{\left( x_1,L \right]}\tilde{w})\phi=\lambda\phi,&x\in (0,L),\\
 		d_1\phi_x(0)-a_1\phi(0)=\phi_x(L)=0.
 	\end{cases}
 \end{equation*}
 Moreover, similarly to \cite[Corollary 2.10]{sweers1992strong}, we have
 \begin{equation*}
 	if \quad \lambda_1(d_1,a_1,r-m\chi _{\left( x_1,L \right]}\tilde{w},(0,L))
 	\begin{cases}
 		<0,\\
 		=0,\\
 		>0,
 	\end{cases}
 	\quad
 	\text{then }(0,\tilde{w}) \text{ is}
 	\begin{cases}
 		\text{linearly stable},\\
 		\text{neutrally stable},\\
 		\text{linearly unstable}.
 	\end{cases}
 \end{equation*}

 Next, we introduce the following eigenvalue problem
 \begin{equation}\label{eq3.2}
 	\begin{cases}
 		d\varphi_{xx}-a\varphi_x+r\varphi=\lambda^D\varphi,& y_1<x<y_2,\\
 		d\varphi_x(y_1)-a\varphi(y_1)=\varphi(y_2)=0.
 	\end{cases}
 \end{equation}
 Here, $0\le y_1<y_2\le L$. The principal eigen-pair of problem $(\ref{eq3.2})$ is defined as $(\lambda_{1}^D(d,a,r,(y_1,y_2)),\varphi_1^D(d,a,r,(y_1,y_2)))$. According to \cite{speirs2001population}, the following results hold.
 \begin{proposition}\label{propB}
 	Assume that $0<a_1<\sqrt{4d_1r}$.
 \begin{equation}\label{eq3.3}
 If \quad	x_1
 	\begin{cases}
 		< L^*, \\
 		= L^*,\\
 		> L^*,
 	\end{cases}
 	\quad
 	\text{then }\quad \lambda_{1}^{D}(d_1, a_1, r, (0, x_1))
 	\begin{cases}
 		 < 0, \\
 		 = 0, \\
 		 > 0.
 	\end{cases}
 \end{equation}
 \end{proposition}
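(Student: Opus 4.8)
The plan is to exploit the fact that in problem \eqref{eq3.2} the zeroth-order coefficient is the \emph{constant} $r$, so the eigenvalue equation can be solved in closed form, and to combine this explicit computation with the domain-monotonicity of the principal eigenvalue. Throughout I fix $d_1,a_1,r$ (with $0<a_1<\sqrt{4d_1r}$) and regard $\lambda_1^D:=\lambda_1^D(d_1,a_1,r,(0,x_1))$ as a function of the right endpoint $x_1$ alone, since the left endpoint and its Robin condition are held fixed.

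First I would record the monotonicity of $\lambda_1^D$ in $x_1$. As $x_1$ increases, the fixed Robin boundary at $x=0$ is untouched while the right Dirichlet boundary moves outward, so any admissible test function on $(0,x_1)$ (an $H^1$ function vanishing at $x_1$) extends by zero to an admissible test function on $(0,x_1')$ for $x_1'>x_1$ without altering its Rayleigh quotient. The variational characterization analogous to \eqref{eq2.13} (but with the test space constrained to vanish at the right endpoint, so that no boundary term appears there) then gives $\lambda_1^D(x_1')\ge\lambda_1^D(x_1)$; strictness follows because the principal eigenfunction on the larger interval is strictly positive and therefore cannot be the zero-extension of any function on $(0,x_1)$. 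Hence $x_1\mapsto\lambda_1^D(x_1)$ is strictly increasing, with $\lambda_1^D(x_1)\to-\infty$ as $x_1\to0^+$.

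Next I would locate the unique zero of this function by solving \eqref{eq3.2} at $\lambda^D=0$. Since $0<a_1<\sqrt{4d_1r}$, the characteristic equation $d_1\mu^2-a_1\mu+r=0$ has complex roots $\mu=\alpha\pm i\beta$ with $\alpha=\frac{a_1}{2d_1}$ and $\beta=\frac{\sqrt{4d_1r-a_1^2}}{2d_1}$, so $\varphi(x)=e^{\alpha x}\bigl(A\cos\beta x+B\sin\beta x\bigr)$. Imposing the left Robin condition $d_1\varphi_x(0)-a_1\varphi(0)=0$ forces $B=\frac{\alpha}{\beta}A$, giving, up to scaling, $\varphi(x)=e^{\alpha x}\bigl(\cos\beta x+\frac{\alpha}{\beta}\sin\beta x\bigr)$. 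The Dirichlet condition $\varphi(x_1)=0$ then reads $\tan(\beta x_1)=-\frac{\beta}{\alpha}$, and requiring $\varphi>0$ on $(0,x_1)$ (so that $x_1$ is its \emph{first} zero, as must hold for the principal eigenfunction) selects $\beta x_1\in(\tfrac{\pi}{2},\pi)$, namely $\beta x_1=\pi-\arctan\frac{\beta}{\alpha}$. Substituting the values of $\alpha$ and $\beta$ yields exactly $x_1=L^*$, so $\lambda_1^D(L^*)=0$.

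Combining the two ingredients finishes the argument: $\lambda_1^D$ is strictly increasing in $x_1$ and vanishes precisely at $x_1=L^*$, whence $\lambda_1^D<0$ for $x_1<L^*$, $\lambda_1^D=0$ for $x_1=L^*$, and $\lambda_1^D>0$ for $x_1>L^*$, which is \eqref{eq3.3}. I expect the main obstacle to be the rigorous justification of the strict domain-monotonicity together with the ``first-zero'' selection of the principal eigenfunction, i.e.\ arguing that the relevant root of $\tan(\beta x_1)=-\frac{\beta}{\alpha}$ is the one producing a sign-definite eigenfunction rather than a higher mode; the remaining constant-coefficient ODE computation is routine. An alternative that packages both steps at once is a Sturm-type shooting argument: solving the ODE with the left Robin condition at a general parameter $\lambda$ and frequency $\beta_\lambda=\frac{\sqrt{4d_1(r-\lambda)-a_1^2}}{2d_1}$, one checks that the first positive zero $z(\lambda)=\frac{1}{\beta_\lambda}\bigl(\pi-\arctan\frac{\beta_\lambda}{\alpha}\bigr)$ is strictly increasing in $\lambda$ (lowering $r-\lambda$ lowers the oscillation frequency), so its inverse $x_1\mapsto\lambda_1^D(x_1)$ is strictly increasing and hits $0$ exactly where $z(0)=L^*$.
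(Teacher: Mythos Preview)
Your argument is correct. The explicit constant-coefficient computation leading to $\beta x_1=\pi-\arctan(\beta/\alpha)$, which unravels to exactly $x_1=L^*$, is clean and accurate, and the domain-monotonicity of $\lambda_1^D$ in the right Dirichlet endpoint (via zero-extension of test functions in the variational characterization) is the standard mechanism. The ``first-zero'' selection you flag as a potential obstacle is not really one: writing $\beta\cos\theta+\alpha\sin\theta=\sqrt{\alpha^2+\beta^2}\,\sin(\theta+\phi)$ with $\phi=\arctan(\beta/\alpha)\in(0,\pi/2)$ shows immediately that the first positive zero is at $\theta=\pi-\phi$, with positivity on $(0,\pi-\phi)$ automatic.

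As for comparison with the paper: there is essentially nothing to compare. The paper does not prove Proposition~\ref{propB}; it merely attributes the result to Speirs and Gurney~\cite{speirs2001population} with the phrase ``According to \cite{speirs2001population}, the following results hold.'' Your self-contained derivation is therefore strictly more than what the paper offers, and it is exactly the kind of direct Sturm--Liouville/shooting computation that underlies the cited reference. The alternative shooting formulation you sketch at the end (tracking the first zero $z(\lambda)$ as a function of $\lambda$) is equivalent and equally valid; either route suffices.
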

 \begin{lemma}\label{lemma3.1}
 	$\lim\limits_{\xi \to{+\infty}} \lambda_1(d,a,r-\xi\chi_{\left( x_1,L \right]},(0,L))=\lambda_1^D(d,a,r,(0,x_1)).
 $
 \end{lemma}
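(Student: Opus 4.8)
The plan is to extract the limit from the Rayleigh-quotient characterization \eqref{eq2.13}. Write $\lambda_1(\xi):=\lambda_1(d,a,r-\xi\chi_{(x_1,L]},(0,L))$ and $\lambda^D:=\lambda_1^D(d,a,r,(0,x_1))$, and let $\varphi_\xi$ be the associated positive principal eigenfunction, normalized by $\int_0^L \varphi_\xi^2 e^{\frac{a}{d}x}\,dx=1$. Since the numerator in \eqref{eq2.13} decreases in $\xi$ for every fixed admissible test function, $\lambda_1(\xi)$ is nonincreasing in $\xi$, so the monotone limit $\lambda_\infty:=\lim_{\xi\to+\infty}\lambda_1(\xi)$ exists in $[-\infty,+\infty)$; it then suffices to pin it down by a two-sided bound.

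For the lower bound $\lambda_1(\xi)\ge\lambda^D$, I would take the positive Dirichlet eigenfunction $\varphi^D$ of \eqref{eq3.2} on $(0,x_1)$ and extend it by zero on $[x_1,L]$. Because $\varphi^D(x_1)=0$, this extension lies in $H^1(0,L)$ and is admissible in \eqref{eq2.13}; since it vanishes on $[x_1,L]$, both the penalization term $-\xi\int_{x_1}^L(\cdot)^2e^{\frac{a}{d}x}dx$ and the boundary term at $L$ disappear, so its Rayleigh quotient equals exactly $\lambda^D$. Taking the supremum gives $\lambda_1(\xi)\ge\lambda^D$ for all $\xi$, hence $\lambda_\infty\ge\lambda^D$.

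The substantive half is the reverse bound $\lambda_\infty\le\lambda^D$, which amounts to showing the penalty forces the limiting profile to satisfy the Dirichlet condition at $x_1$. Evaluating \eqref{eq2.13} at $\varphi_\xi$ and using $\lambda^D\le\lambda_1(\xi)\le\lambda_1(0)$, I would first extract the uniform estimates
$$\xi\int_{x_1}^L\varphi_\xi^2 e^{\frac{a}{d}x}dx\le r-\lambda^D,\qquad d\int_0^L\varphi_{\xi,x}^2 e^{\frac{a}{d}x}dx\le r-\lambda^D,$$
obtained from the identity for $\lambda_1(\xi)$ by discarding the remaining nonpositive terms. Since $e^{\frac{a}{d}x}$ is bounded above and below on $[0,L]$, these yield a uniform $H^1(0,L)$ bound on $\varphi_\xi$, so along a sequence $\xi_n\to+\infty$ one has $\varphi_{\xi_n}\rightharpoonup\varphi_*$ weakly in $H^1(0,L)$ and strongly in $L^2(0,L)$ (hence uniformly, by the one-dimensional compact embedding). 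The first estimate forces $\int_{x_1}^L\varphi_*^2 e^{\frac{a}{d}x}dx=0$, i.e. $\varphi_*\equiv0$ on $[x_1,L]$ and in particular $\varphi_*(x_1)=0$, while the normalization passes to the limit to give $\int_0^{x_1}\varphi_*^2 e^{\frac{a}{d}x}dx=1$, so $\varphi_*\not\equiv0$ on $(0,x_1)$ is an admissible competitor for $\lambda^D$.

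Finally, dropping the nonnegative penalty and boundary terms in \eqref{eq2.13} and letting $n\to\infty$, using the weak lower semicontinuity of $\varphi\mapsto\int_0^{x_1}\varphi_x^2 e^{\frac{a}{d}x}dx$ together with the strong $L^2$ convergence of the zeroth-order terms (and $\int_{x_1}^L r\varphi_{\xi_n}^2 e^{\frac{a}{d}x}dx\to0$), I obtain
$$\lambda_\infty\le \int_0^{x_1}\big(-d\varphi_{*,x}^2+r\varphi_*^2\big)e^{\frac{a}{d}x}dx\le\lambda^D,$$
the last step by the variational characterization of $\lambda^D$. Combining the two bounds gives $\lambda_\infty=\lambda^D$, and monotonicity upgrades the subsequential statement to the full limit. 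The main obstacle is precisely this upper bound: making rigorous that an $L^\infty$-unbounded penalty acts in the limit as a Dirichlet constraint at $x_1$, which is exactly what the $H^1$ compactness plus lower-semicontinuity argument is designed to deliver.
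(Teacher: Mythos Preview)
Your argument is correct. The lower bound via the zero-extended Dirichlet eigenfunction is exactly what the paper does. For the upper bound, both you and the paper extract a uniform $H^1$ bound on the normalized eigenfunctions and use the penalty estimate to force the weak limit to vanish on $[x_1,L]$; the difference lies only in how the limit is identified. The paper normalizes in $L^\infty$, integrates the equation itself to get $\int_{x_1}^L\varphi_n\,dx\to 0$, and then passes to the limit in the \emph{PDE} to recognize the limit as a nonnegative Dirichlet eigenfunction on $(0,x_1)$, whence $\hat\lambda=\lambda^D$. You instead normalize in weighted $L^2$, read off both the $H^1$ bound and the $L^2$ penalty estimate directly from the Rayleigh identity, and then pass to the limit in the \emph{Rayleigh quotient}, using weak lower semicontinuity of the gradient term to land on an admissible competitor for $\lambda^D$. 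Your route is a bit cleaner: it never needs to justify a limit passage in the strong form of the equation, and it only requires $\varphi_*$ to be an admissible test function rather than an actual eigenfunction. The paper's route, on the other hand, yields the slightly stronger information that the limiting profile is the Dirichlet principal eigenfunction itself.
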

 \begin{proof}
 	With the help of \eqref{eq2.13}, we have
 	\begin{equation*}
 		\begin{split}
 	&\lambda_1(d,a,r-\xi\chi_{\left( x_1,L \right]},(0,L))\\
  	&= \sup_{0\ne \varphi \in H^1\left( 0,L \right)}\frac{\int_{0}^{L}{\left( -d\varphi _{x}^{2}e^{\frac{a}{d}x}+(r-\xi\chi_{\left( x_1,L \right]})\varphi ^2e^{\frac{a}{d}x} \right)}dx-a\varphi ^2\left( L \right) e^{\frac{a}{d}L}}{\int_{0}^{L}{\varphi ^2e^{\frac{a}{d}x}}dx}\\
 	&\ge \frac{\int_0^L{\left( -d\left( \tilde{\varphi} \right) _{x}^{2}e^{\frac{a}{d}x}+r\left( \tilde{\varphi} \right) ^2e^{\frac{a}{d}x} \right)}dx}{\int_0^L{\left( \tilde{\varphi} \right) ^2e^{\frac{a}{d}x}}dx}\\
 	&=\lambda_1^D(d,a,r,(0,x_1)),
 		\end{split}
 	\end{equation*}
 	where
 	\begin{equation*}
 		\tilde{\varphi}=
 		\begin{cases}
 		\varphi ^D_1(d,a,r,(0,x_1)),& x\in [0,x_1],\\
 		0,& x\in [x_1,L].
 		\end{cases}
 	\end{equation*}

 	Define  $R(\xi)=\lambda_1(d,a,r-\xi\chi_{\left( x_1,L \right]},(0,L))$. From  Lemma~\ref{lemma2.4}$(\romannumeral 5)$,  $R(\xi)$ is continuous and monotonically decreases with respect to $\xi$. Let the sequence $\{\varphi_n\}$ $(\varphi_n>0)$ satisfy
 	\begin{equation}\label{eq3.5}
 		\begin{cases}
 			d(\varphi_n) _{xx}-a(\varphi_n)_{x}+(r-\xi_n\chi_{\left( x_1,L \right]})\varphi_n=R(\xi_n)\varphi_n ,&0<x<L,\\
 			d(\varphi_n)_x(0)-a(\varphi_n)(0)=(\varphi_n)_x(L)=0,
 		\end{cases}
 	\end{equation}
 	where $\left\{ \xi _n \right\} $ is a sequence satisfying $\xi_n\to{+\infty}$ $(n\to{+\infty})$, and assume that $\lVert \left. \varphi _n \rVert \right. _{\infty}=1$.
 	Multiplying the first equation of $(\ref{eq3.5})$ by $\varphi_n$ and integrating  the resulting equation over $(0,L)$, we obtain
 \begin{equation}\label{eq3.6}
 	\begin{aligned}
 		 \int_{0}^{L}d (\varphi_n)_x^2 dx
 		 &=-a(\varphi_n)^2(L)-\xi_n\int_{x_1}^{L}(\varphi_n)^2dx-R(\xi_n)\int_{0}^{L} (\varphi_n)^2dx\\	&\quad+r\int_{0}^{L} (\varphi_n)^2dx
 	 +a\int_{0}^{L} (\varphi_n)_x \varphi_ndx \\
 		&\le r\int_{0}^{L} (\varphi_n)^2dx+\frac{a^2}{2d}\int_{0}^{L} (\varphi_n)^2dx+\frac{d}{2}\int_{0}^{L} (\varphi_n)^2_xdx\\
 		&\quad-\lambda_1^D(d,a,r,(0,x_1))\int_{0}^{L} (\varphi_n)^2_xdx.
 	\end{aligned}
 \end{equation}
 	From \eqref{eq3.6} we have
 	\begin{align*}
 		\frac{d}{2} \int_{0}^{L}(\varphi_n)_x^2dx&\le (r+\frac{a^2}{2d}- \lambda_1^D(d,a,r,(0,x_1)) )\int_{0}^{L} (\varphi_n)^2dx\\
 		&\le (r+\frac{a^2}{2d}- \lambda_1^D(d,a,r,(0,x_1)) )L.
 	\end{align*}
 	Therefore, $\left\{ \varphi _n \right\} $ is bounded in $H^1([0,L])$, and then there exists a subsequence (still denoted by $\left\{ \varphi _n \right\} $) converging to some $\hat{\varphi}$ weakly in $H^1([0,L])$ and strongly in $L^2([0,L])$. Clearly, $\hat{\varphi}\ge 0$ in $[0,L]$
 and $\lVert \left. \hat{\varphi} \rVert \right. _{\infty}=1$.
 	
 	Integrating the first equation of \eqref{eq3.5} over $(0,L)$, one has
 	\begin{equation}\label{eq3.7}
 	\begin{aligned}
 		\xi_n\int_{x_1}^{L}\varphi_ndx
 		&=-a\varphi_n(L)-R(\xi_n)\int_{0}^{L}\varphi_ndx+r\int_{0}^{L}\varphi_ndx\\
 		&\le (r- \lambda_1^D(d,a,r,(0,x_1)) )L.
 	\end{aligned}
 	\end{equation}
 	As $n\to{+\infty}$, the right-hand side of $\eqref{eq3.7}$ is bounded, but $\xi_n\to{+\infty}$ $(n\to{+\infty})$, so it is necessary that $\int_{x_1}^{L}\varphi_ndx\to 0$, which implies that $\hat{\varphi}\equiv0$ in $[x_1,L]$. On the other hand, since $R(\xi)$ is bounded below and monotonically decreasing with respect to $\xi$,  we may assume that $R(\xi_n)\to \hat{\lambda}$$(n\to{+\infty})$. Thus letting $n\to{+\infty}$ in equation $\eqref{eq3.5}$, we obtain $\hat{\varphi}$ and $\hat{\lambda}$ satisfy
 	\begin{equation}
 		\begin{cases}
 		d\hat{\varphi}_{xx}-a\hat{\varphi}_x+r\hat{\varphi}=\hat{\lambda}\hat{\varphi},&x\in(0,x_1),\\
 		d\hat{\varphi}_x(0)-a\hat{\varphi}(0)=\hat{\varphi}(x_1)=0.
 		\end{cases}
 	\end{equation}
 	Obviously, $\hat{\varphi} \not\equiv 0$ in $(0,x_1)$,  for otherwise $\hat{\varphi}\equiv0$ in $(0,L)$ which contradicts to $\lVert \left. \hat{\varphi} \rVert \right. _{\infty}=1$.  Thus, we conclude that $\hat{\lambda}=\lambda_1^D(d,a,r,(0,x_1))$. The proof is completed.
 \end{proof}

 Next, to prove Theorem~\ref{th1.1} and Theorem~\ref{th1.2}, we first establish several lemmas.

\begin{lemma}\label{l1}
	The following statements on system \eqref{eq1.1} are true.
	
	$(\romannumeral 1)$ If $\tilde{u}$ does not exist, then $(0,\tilde{w})$ is globally asymptotically stable.
	
	$(\romannumeral 2)$ If $\tilde{u}$ exists, then system~\eqref{eq1.1} has a nonnegative steady state $(u,w)$  satisfying $(0,\tilde{w})\leq_{K_1} (u,w) \leq_{K_1} (\tilde{u},\underaccent{\sim}{w})$, where $\underaccent{\sim}{w}$ is the unique positive solution to
	 \begin{equation}\label{f2}
		\begin{cases}
			d_2w_{xx}-a_2w_x+h-qw-p\tilde{u}w=0 ,&x\in \left( x_1,L \right),
			\\
			d_2w_x\left( x_1 \right) -a_2w\left( x_1\right) =w_x\left( L \right) =0.
		\end{cases}
	\end{equation}
\end{lemma}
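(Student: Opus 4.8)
The plan is to exploit the competitive structure of system~\eqref{eq1.1}: the toxicant enters the $u$-equation through the nonpositive term $-m\chi_{(x_1,L]}w$ and the population enters the $w$-equation through the nonpositive term $-puw$, so that in the variables $(u,-w)$ the system is cooperative on the biologically relevant region of nonnegative states. Consequently the semiflow $\Phi_t$ generated by \eqref{eq1.1} is monotone with respect to the ordering $\leq_{K_1}$ fixed above. I would first record this monotonicity, observing that the jump of $\chi_{(x_1,L]}$ at $x_1$ and the fact that $u$ and $w$ are posed on the different intervals $[0,L]$ and $[x_1,L]$ are both absorbed into the product cone $K_1=X^+\times(-X_1^+)$ and cause no difficulty, the comparison principle being applied componentwise. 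Global existence, boundedness, and hence precompactness of orbits follow from comparison with the logistic bound for $u$ and the linear bound $h-qw$ for $w$, together with parabolic smoothing.

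For part~$(\romannumeral 1)$, I would show that no positive steady state exists when $\tilde u$ does not exist, and then invoke Proposition~\ref{propA}$(\romannumeral 1)$. The nonexistence of $\tilde u$ is equivalent to $\lambda_1(d_1,a_1,r,(0,L))\le 0$. Suppose, for contradiction, that $(u^*,w^*)$ is a positive steady state; then $u^*>0$ (strong maximum principle) is a positive solution of $d_1u_{xx}-a_1u_x+(r-u^*-m\chi_{(x_1,L]}w^*)u=0$, so it is the principal eigenfunction at eigenvalue $0$, i.e.\ $\lambda_1(d_1,a_1,r-u^*-m\chi_{(x_1,L]}w^*,(0,L))=0$. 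Since the potential satisfies $r-u^*-m\chi_{(x_1,L]}w^*\le r$ and $\not\equiv r$, Lemma~\ref{lemma2.4}$(\romannumeral 5)$ gives $\lambda_1(d_1,a_1,r,(0,L))>0$, contradicting $\lambda_1(d_1,a_1,r,(0,L))\le 0$. Hence $(0,\tilde w)$ is globally asymptotically stable.

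For part~$(\romannumeral 2)$, I would run the ordered sub/supersolution scheme. The endpoint $(0,\tilde w)$ is a genuine steady state (the semi-trivial state) and serves as the $K_1$-smallest element; I would verify that $(\tilde u,\underaccent{\sim}{w})$ is a $K_1$-supersolution. Indeed, using that $\tilde u$ solves \eqref{f1}, the $u$-component gives $d_1\tilde u_{xx}-a_1\tilde u_x+\tilde u(r-\tilde u-m\chi_{(x_1,L]}\underaccent{\sim}{w})=-m\chi_{(x_1,L]}\tilde u\,\underaccent{\sim}{w}\le 0$, while the $w$-component holds with equality because $\underaccent{\sim}{w}$ solves \eqref{f2}; the boundary operators of \eqref{f1} and \eqref{f2} coincide with those of \eqref{eq1.1}. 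Equivalently, this amounts to the pair of comparisons $u(t)\le\tilde u$ (the toxicant term only depresses $u$) and, given $u(t)\le\tilde u$, $w(t)\ge\underaccent{\sim}{w}$, so that $\Phi_t(\tilde u,\underaccent{\sim}{w})\le_{K_1}(\tilde u,\underaccent{\sim}{w})$ for all $t>0$. To see the endpoints are ordered, I would compare \eqref{eq2.1} with \eqref{f2}: the extra absorption $-p\tilde u w$ in \eqref{f2} forces $\underaccent{\sim}{w}\le\tilde w$ on $[x_1,L]$, which with $0\le\tilde u$ yields $(0,\tilde w)\le_{K_1}(\tilde u,\underaccent{\sim}{w})$.

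Monotonicity of $\Phi_t$ then makes the orbit from $(\tilde u,\underaccent{\sim}{w})$ nonincreasing in $\leq_{K_1}$ and bounded below by the equilibrium $(0,\tilde w)$; by precompactness it converges to a steady state $(u,w)$ lying in the order interval $[(0,\tilde w),(\tilde u,\underaccent{\sim}{w})]_{K_1}$, which is exactly the asserted two-sided bound. The step I expect to be most delicate is confirming that this limit is a genuine classical steady state and that the orbit stays trapped in the order interval despite the discontinuous coefficient $\chi_{(x_1,L]}$ and the cross-domain coupling between $u$ on $[0,L]$ and $w$ on $[x_1,L]$; this is handled by combining the componentwise comparison principle with interior parabolic regularity to justify passing to the limit.
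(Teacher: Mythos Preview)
Your proposal is correct. For part~$(\romannumeral2)$ you and the paper are doing essentially the same thing: you verify that $(0,\tilde w)$ and $(\tilde u,\underaccent{\sim}{w})$ form an ordered sub/super pair for the steady-state system in the $K_1$ ordering, and then extract a steady state in the order interval. The paper simply cites Pao's elliptic upper/lower solution machinery, while you route through the monotone parabolic semiflow and take the limit of the decreasing orbit from $(\tilde u,\underaccent{\sim}{w})$; the two are equivalent and your concern about the discontinuous coefficient and cross-domain coupling is not a real obstacle, since the comparison principle applies componentwise exactly as you say.

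For part~$(\romannumeral1)$ your route is genuinely different from the paper's. You argue by contradiction at the level of steady states: a positive equilibrium would force $\lambda_1(d_1,a_1,r,(0,L))>0$ via Lemma~\ref{lemma2.4}$(\romannumeral5)$, contradicting the nonexistence of $\tilde u$, and then you invoke Proposition~\ref{propA}$(\romannumeral1)$. The paper instead works dynamically: it compares $u(x,t)$ with the solution $v(x,t)$ of the toxicant-free logistic problem~\eqref{f3} started above $u_0$, uses that $v\to 0$ when $\tilde u$ does not exist, and then lets the $w$-equation relax to $\tilde w$. Your argument is cleaner and uses the abstract structure (Proposition~\ref{propA}) already set up in the paper; the paper's argument is more hands-on and yields the convergence of $(u,w)$ directly without appealing to the monotone-systems black box. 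Both are valid.
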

\begin{proof}
	$(\romannumeral 1)$ we consider an auxiliary system:
	\begin{equation}\label{f3}
		\begin{cases}
			v_t=d_1v_{xx}-a_1v_x+v(r-v) ,&x\in \left( 0,L \right),
			\\
			d_1v_x\left( 0 \right) -a_1w\left( 0\right) =v_x\left( L \right) =0,
			\\
			v(x,0)\ge u(x,0).
		\end{cases}
	\end{equation}
	Since $\tilde{u}$ does not exist, one can verify that for any solution $v(x,t)$ in problem~\eqref{f3} satisfy
$$v(x,t) \to 0 \quad \text{uniformly} \quad \text{for} \quad x \in [0,L], \quad \text{as} \quad t \to \infty.$$
By the comparison principle for parabolic equations,  it can be inferred that for any given $(u_0,w_0)$ in system~\eqref{eq1.1},  the corresponding positive solution $(u(x,t),w(x,t))$ satisfies $u(x,t)\leq v(x,t)$ for any $t>0$. Hence,
$$u(x,t) \to 0 \quad \text{uniformly} \quad \text{for} \quad x \in [0,L], \quad \text{as} \quad t \to \infty.$$
This can be further shown that
$$w(x,t) \to \tilde{w} \quad \text{uniformly} \quad \text{for} \quad x \in [x_1,L], \quad \text{as} \quad t \to \infty.$$

$(\romannumeral 2)$ $\tilde{w}$ and $0$ are respectively the upper and lower solutions of problem~\eqref{f2}. Furthermore, by the method of upper and lower solutions for elliptic equations, it follows that $0<\underaccent{\sim}{w}<\tilde{w}$.

Consider the steady state equation corresponding to system~\eqref{eq1.1}
\begin{equation}\label{3.13}
	\begin{cases}
		d_1u_{xx}-a_1u_x+u\left( r-u-m\chi _{\left( x_1,L \right]}w \right)=0 ,&x\in \left( 0,L \right),
		\\
		d_2w_{xx}-a_2w_x+h-qw-puw=0 ,&x\in \left( x_1,L \right),
		\\
		d_1u_x\left( 0 \right) -a_1u\left( 0 \right) =u_x\left( L \right) =0,
		\\
		d_2w_x\left( x_1 \right) -a_2w\left( x_1\right) =w_x\left( L \right) =0.
	\end{cases}
\end{equation}
 According to Pao \cite{pao2012nonlinear}, $(\tilde{u},\underaccent{\sim}{w})$ and $(0,\tilde{w})$ are a pair of ordered upper and lower solutions of problem~\eqref{3.13}. Furthermore, since $(0,\tilde{w})<_{k_1} (\tilde{u},\underaccent{\sim}{w})$, by the comparison principle for upper and lower solutions, problem~\eqref{3.13} has a nonnegative solution $(u,w)$ satisfying
 $$(0,\tilde{w})\leq_{k_1} (u,w) \leq_{k_1} (\tilde{u},\underaccent{\sim}{w}).$$
\end{proof}
\begin{lemma}\label{lemma3.2}
	If $a_1\ge \sqrt{4d_1r}$ or $0< a_1< \sqrt{4d_1r}$ and $L<L_1^*$, then $(0,\tilde{w})$ is linearly stable. If $0< a_1<\sqrt{4d_1r}$ and $L>L_1^*$, then the following results on its linear stability hold:

	$(\romannumeral 1)$ If $x_1\ge L^*$ $(\text{large protection zone})$, then $(0,\tilde{w})$ is unstable;

	$(\romannumeral 2)$ If $x_1<L^*$ $(\text{small protection zone})$, fixed all parameters except for $x_1$ and $m$, then there exists $m^*(x_1)>0$ satisfying $\lambda _1\left( d_1,a_1,r-m\chi _{\left( x_1,L \right]}\tilde{w},\left( 0,L \right) \right) =0$ such that the following statements hold:

	$\quad$$(\romannumeral 2.1)$ If $m>m^*(x_1)$, then $(0,\tilde{w})$ is linearly stable;

	$\quad$$(\romannumeral 2.2)$ If $0<m<m^*(x_1)$, then $(0,\tilde{w})$ is unstable.
	
	Furthermore, $m^*(x_1)$ is a continuous function of $x_1$ and is strictly increasing on $(0,L)$, with $\lim\limits_{x_1\to 0}m^*(x_1)=m^*(0)>0$ and $\lim\limits_{x_1\to L^*} m^*(x_1)=+\infty$.
\end{lemma}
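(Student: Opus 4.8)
The plan is to reduce every stability assertion to the sign of the principal eigenvalue $\lambda_1(d_1,a_1,r-m\chi_{(x_1,L]}\tilde w,(0,L))$, which by the linear-stability criterion for $(0,\tilde w)$ recalled at the start of this subsection governs the stability of $(0,\tilde w)$. For the case $a_1\ge\sqrt{4d_1r}$, or $0<a_1<\sqrt{4d_1r}$ with $L<L_1^*$, Lemma~\ref{lemma2.1}$(\romannumeral 3)$ tells us that $\tilde u$ does not exist, which is equivalent to $\lambda_1(d_1,a_1,r,(0,L))\le 0$. Since $r-m\chi_{(x_1,L]}\tilde w\le,\not\equiv r$ on $(0,L)$, Lemma~\ref{lemma2.4}$(\romannumeral 5)$ yields the strict drop $\lambda_1(d_1,a_1,r-m\chi_{(x_1,L]}\tilde w,(0,L))<\lambda_1(d_1,a_1,r,(0,L))\le 0$, so $(0,\tilde w)$ is linearly stable.

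For the large-zone case $x_1\ge L^*$ I would insert into the variational formula~\eqref{eq2.13} the test function $\tilde\varphi$ equal to the Dirichlet principal eigenfunction of~\eqref{eq3.2} on $(0,x_1)$ and extended by zero on $(x_1,L)$, exactly as in the proof of Lemma~\ref{lemma3.1}; this gives $\lambda_1(d_1,a_1,r-m\chi_{(x_1,L]}\tilde w,(0,L))\ge\lambda_1^D(d_1,a_1,r,(0,x_1))$. The inequality is strict, since equality would force $\tilde\varphi$ to be the principal eigenfunction on $(0,L)$, which must be positive throughout $(0,L)$, contradicting $\tilde\varphi\equiv 0$ on $(x_1,L)$. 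As Proposition~\ref{propB} gives $\lambda_1^D(d_1,a_1,r,(0,x_1))\ge 0$ for $x_1\ge L^*$, instability follows, the strictness being what settles the borderline $x_1=L^*$.

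For the small-zone case I would fix all parameters but $m$ and put $\Lambda_{x_1}(m):=\lambda_1(d_1,a_1,r-m\chi_{(x_1,L]}\tilde w,(0,L))$. By Lemma~\ref{lemma2.4}$(\romannumeral 5)$, $\Lambda_{x_1}$ is continuous and strictly decreasing, with $\Lambda_{x_1}(0)=\lambda_1(d_1,a_1,r,(0,L))>0$ because $\tilde u$ exists. Using $\delta\frac hq\le\tilde w<\frac hq$ from Lemma~\ref{lemma 2.2}$(\romannumeral 2)$ to squeeze the potential between $r-m\frac hq\chi_{(x_1,L]}$ and $r-m\delta\frac hq\chi_{(x_1,L]}$ and then applying Lemma~\ref{lemma3.1} to both bounds, I obtain $\lim_{m\to\infty}\Lambda_{x_1}(m)=\lambda_1^D(d_1,a_1,r,(0,x_1))<0$ by Proposition~\ref{propB}. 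The intermediate value theorem then gives the unique root $m^*(x_1)$, and the sign of $\Lambda_{x_1}$ delivers $(\romannumeral 2.1)$ and $(\romannumeral 2.2)$.

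The main obstacle is the monotonicity of $m^*(x_1)$ in $x_1$, because $\tilde w$ itself varies with $x_1$ through the domain $(x_1,L)$ of~\eqref{eq2.1}. I would first prove, for $0<x_1<\hat x_1<L^*$, the pointwise comparison $\tilde w_1>\tilde w_2$ on $(\hat x_1,L)$, where $\tilde w_1,\tilde w_2$ solve~\eqref{eq2.1} on $(x_1,L)$ and $(\hat x_1,L)$: the difference $\psi:=\tilde w_1-\tilde w_2$ satisfies $d_2\psi_{xx}-a_2\psi_x-q\psi=0$ on $(\hat x_1,L)$ with $\psi_x(L)=0$, while Lemma~\ref{lemma 2.2}$(\romannumeral 1)$ at $\hat x_1$ and the Robin condition for $\tilde w_2$ give $d_2\psi_x(\hat x_1)-a_2\psi(\hat x_1)<0$; since $q>0$ forbids a negative interior minimum and both the Hopf lemma at $L$ and this sign at $\hat x_1$ exclude a negative boundary minimum, we conclude $\psi>0$. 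Consequently, at $m=m^*(x_1)$ the potential $r-m\chi_{(\hat x_1,L]}\tilde w_2$ dominates $r-m\chi_{(x_1,L]}\tilde w_1$ and strictly exceeds it on $(x_1,L)$, so Lemma~\ref{lemma2.4}$(\romannumeral 5)$ yields $\Lambda_{\hat x_1}(m^*(x_1))>0=\Lambda_{\hat x_1}(m^*(\hat x_1))$ and hence $m^*(\hat x_1)>m^*(x_1)$. Continuity of $m^*$ follows from the joint continuity of $\lambda_1$ in $x_1$ together with strict monotonicity in $m$; for the limits, $\lim_{x_1\to L^*}m^*(x_1)=+\infty$ since for every fixed $m$ the limit $\lim_{x_1\to L^*}\Lambda_{x_1}(m)$ strictly exceeds $\lambda_1^D(d_1,a_1,r,(0,L^*))=0$ by the same strict test-function bound, forcing $m<m^*(x_1)$ near $L^*$, while $\lim_{x_1\to 0}m^*(x_1)=m^*(0)>0$ because at $x_1=0$ the potential tends to $-\infty$ as $m\to\infty$, so $\Lambda_0$ again has a single positive root.
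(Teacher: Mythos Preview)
Your proof is correct and follows the paper's overall strategy: reduce stability to the sign of $\lambda_1(d_1,a_1,r-m\chi_{(x_1,L]}\tilde w,(0,L))$, exploit strict monotonicity in $m$, and use Lemma~\ref{lemma3.1} for the limit $m\to\infty$. Two sub-arguments differ in technique. For the large-zone case the paper obtains $\lambda_1>\lambda_1^D$ by combining strict monotonicity in $m$ with the limit~\eqref{eq3.9}, whereas you recycle the test-function inequality from the proof of Lemma~\ref{lemma3.1} and argue strictness directly; both are valid and equally short. For the comparison $\tilde w_1>\tilde w_2$ on $(\hat x_1,L)$ the paper simply appeals to the explicit formula in Lemma~\ref{lemma2.1}$(\romannumeral 1)$ (``from the expression of $\tilde w$ it is easy to know\ldots''), while you give a maximum-principle argument using Lemma~\ref{lemma 2.2}$(\romannumeral 1)$ at $\hat x_1$; your route is more robust (it would survive a non-constant $h(x)$, for instance), though the paper's is quicker given that the closed form is already available. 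Your squeezing of $\tilde w$ between $\delta h/q$ and $h/q$ to justify~\eqref{eq3.9} is also a detail the paper leaves implicit. The limit arguments for $m^*(x_1)$ as $x_1\to 0$ and $x_1\to L^*$ are packaged slightly differently but amount to the same thing.
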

\begin{proof}
From \cite[Lemma 2.2]{lou2015evolution},  $\lambda _1\left( d_1,a_1,r,\left( 0,L \right) \right)<0$ for any $L>0$  when $a_1\ge\sqrt{4d_1r}$, and
	when $0< a_1<\sqrt{4d_1r}$,
	\begin{equation*}
		\lambda _1\left( d_1,a_1,r,\left( 0,L \right) \right)
		\begin{cases}
			>0 ,&L>L_1^*,\\
			=0 ,&L=L_1^*,\\
			<0 ,&L<L_1^*.
		\end{cases}
	\end{equation*}

	Due to $r-m\chi _{\left( x_1,L \right]}\tilde{w}\le,\not\equiv r$, by Lemma~\ref{lemma2.4}$(\romannumeral 5)$, we obtain that for $a_1\ge \sqrt{4d_1r}$ or $0< a_1<\sqrt{4d_1r}$ and $L<L_1^*$,
$$\lambda _1\left( d_1,a_1,r-m\chi _{\left( x_1,L \right]}\tilde{w},\left( 0,L \right) \right)<\lambda _1\left( d_1,a_1,r,\left( 0,L \right) \right)<0,$$
which implies that $(0,\tilde{w})$ is linearly stable.

	For the case $0<a_1<\sqrt{4d_1r}$ and $L>L_1^*$, we have $\lambda _1\left( d_1,a_1,r,\left( 0,L \right) \right)>0$. From Lemma~$\ref{lemma3.1}$, we have
\begin{equation}\label{eq3.9}
	\lim\limits_{m\to{+\infty}}\lambda _1\text{(}d_1,a_1,r-m\chi _{\left( x_1,L \right]}\tilde{w},\left( 0,L \right) \text{)}=\lambda _{1}^{D}\left( d_1,a_1,r,\left( 0,x_1 \right) \right).
\end{equation}
	This, combined with the fact that $\lambda _1\text{(}d_1,a_1,r-m\chi _{\left( x_1,L \right]}\tilde{w},\left( 0,L \right) \text{)}$ is strictly decreasing with respect to $m$, implies that
\begin{align*}
		\lambda _1\text{(}d_1,a_1,r-m\chi _{\left( x_1,L \right]}\tilde{w},\left( 0,L \right) \text{)}&>
		\lim\limits_{m\to{+\infty}}\lambda _1\text{(}d_1,a_1,r-m\chi _{\left( x_1,L \right]}\tilde{w},\left( 0,L \right) \text{)}\\&=\lambda _{1}^{D}\left( d_1,a_1,r,\left( 0,x_1 \right) \right).
\end{align*}
 This, combined with Proposition \ref{propB}, shows that $\lambda _1\text{(}d_1,a_1,r-m\chi _{\left( x_1,L \right]}\tilde{w},\left( 0,L \right) \text{)}>0$ when $x_1\ge L^*$, which implies that  $(0,\tilde{w})$ is unstable.

For the case $x_1< L^*$, combining $\lambda _1\text{(}d_1,a_1,r,\left( 0,L \right) \text{)}>0$ with \eqref{eq3.9} and Proposition~\ref{propB}, and noting that $\lambda _1\text{(}d_1,a_1,r-m\chi _{\left( x_1,L \right]}\tilde{w},\left( 0,L \right) \text{)}$ is monotonically decreasing with respect to $m$, we obtain that there exists $m^*(x_1)$ such that
	\begin{equation*}
	\lambda _1\text{(}d_1,a_1,r-m\chi _{\left( x_1,L \right]}\tilde{w},\left( 0,L \right) \text{)}
		\begin{cases}
			>0, \quad m<m^*(x_1),\\
			=0, \quad m=m^*(x_1),\\
			<0, \quad m>m^*(x_1).
		\end{cases}
	\end{equation*}
	
	Next, we investigate the monotonicity of $m^*(x_1)$ with respect to $x_1$.
	Suppose that $0<x_{11}<x_{12}<L$. Let $\tilde{w}_i$ be the unique solution of \eqref{eq2.1} with $x_1=x_{1i}$. From the expression of $\tilde{w}$, it is easy to know that $\tilde{w}$ is strictly decreasing with respect to $x_1$. Thus $\tilde{w}_1(x)>\tilde{w}_2(x)$ on $[x_{11},L]$. Then $r-m\chi _{\left( x_{12},L \right]}\tilde{w}_2\ge,\not\equiv r-m\chi _{\left( x_{11},L \right]}\tilde{w}_1$. Combining with Lemma~\ref{lemma2.4}$(\romannumeral 5)$, we obtain $$\lambda _1\text{(}d_1,a_1,r-m\chi _{\left( x_{12},L \right]}\tilde{w}_2,(0,L))>\lambda _1\text{(}d_1,a_1,r-m\chi _{\left( x_{11},L \right]}\tilde{w}_1,(0,L)),$$ which implies that $\lambda _1\text{(}d_1,a_1,r-m\chi _{\left( x_1,L \right]}\tilde{w},(0,L))$ is strictly increasing with respect to $x_1$.
	Combining the continuity and strict monotonicity of $\lambda _1\text{(}d_1,a_1,r-m\chi _{\left( x_1,L \right]}\tilde{w},(0,L))$ with respect to $m$ and applying the implicit function theorem, we conclude that $m^*(x_1)$ is continuous and monotonically increasing with respect to $x_1$.
	
	Finally, to  study  the limits of $m^*(x_1)$, denote $\lambda _1\text{(}d_1,a_1,r-m\chi _{\left( x_1,L \right]}\tilde{w},
	\left( 0,L \right) \text{)}=:\lambda_1(x_1,m)$. Obviously $\lim\limits_{x_1\to0}m^*(x_1)=m^*(0)$. We conclude that  $m^*(0)>0$.  Otherwise, $m^*(0)=0$ satisfies $\lambda_1(0,m^*(0))=\lambda_1(d_1,a_1,r,(0,L))>0$, which contradicts with $\lambda_1(x_1,m^*(x_1))=0$.
	From \eqref{eq3.9},  Proposition~\ref{propB} and the definition of $m^*(x_1)$, we have $\lim\limits_{x_1\to L^*}\lim\limits_{m\to{+\infty}}\lambda_1(x_1,m)=\lambda _{1}^{D}\left( d_1,a_1,r,\left( 0,x_1 \right) \right)=0$ and $\lim\limits_{x_1\to L^*}\lambda_1(x_1,m^*(x_1))=0$. This, combined with Lemma~$\ref{lemma2.4}$ and the monotonic decreasing property of $\lambda_1(x_1,m)$ with respect to $m$, shows that $\lim\limits_{x_1\to L^*} m^*(x_1)=+\infty$.
\end{proof}

An analogous proof of Lemma~\ref{lemma3.2} gives the following results.
\begin{lemma}\label{lemma3.3}
Assume that  $0<a_1<\sqrt{4d_1r}$ and $L>L_1^*$. Then the following statements are true:

$(\romannumeral 1)$ If $x_1\ge L^*$ $(\text{large protection zone})$, then $(0,\tilde{w})$ is unstable;

$(\romannumeral 2)$ If $x_1<L^*$ $(\text{small protection zone})$, fixed all parameters except for $x_1$ and $q$, then there exists $q^*(x_1)>0$ satisfying $\lambda _1\left( d_1,a_1,r-m\chi _{\left( x_1,L \right]}\tilde{w},\left( 0,L \right) \right) =0$ such that the following statements hold:

$\quad$$(\romannumeral 2.1)$ If $q>q^*(x_1)$, then $(0,\tilde{w})$ is unstable;

$\quad$$(\romannumeral 2.2)$ If $0<q<q^*(x_1)$, then $(0,\tilde{w})$ is locally asymptotically  stable.

Furthermore, $q^*(x_1)$ is a continuous function of $x_1$ and strictly decreases on $(0,L)$, with $\lim\limits_{x_1\to 0}q^*(x_1)=q^*(0)>0$ and $\lim\limits_{x_1\to L^*} q^*(x_1)=0$.
\end{lemma}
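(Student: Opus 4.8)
The plan is to transcribe the proof of Lemma~\ref{lemma3.2}, tracking the dependence on $q$ (all other parameters fixed) in place of $m$; the one structural change is that $\tilde w$ now itself varies with $q$, which reverses the monotonicity directions. Abbreviate $\lambda_1(q,x_1):=\lambda_1(d_1,a_1,r-m\chi_{(x_1,L]}\tilde w,(0,L))$, whose sign decides the stability of $(0,\tilde w)$ by the criterion recalled before Lemma~\ref{lemma3.1}. Statement $(\romannumeral 1)$ is identical to Lemma~\ref{lemma3.2}$(\romannumeral 1)$: since $\lambda_1(q,x_1)$ is strictly decreasing in $m$ with $\lim_{m\to+\infty}\lambda_1(q,x_1)=\lambda_1^D(d_1,a_1,r,(0,x_1))\ge0$ for $x_1\ge L^*$ (Lemma~\ref{lemma3.1} and Proposition~\ref{propB}), one gets $\lambda_1(q,x_1)>0$ for every finite $q$, i.e.\ instability.

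For $(\romannumeral 2)$, fix $x_1<L^*$ and study $q\mapsto\lambda_1(q,x_1)$. By Lemma~\ref{lemma 2.2}$(\romannumeral 7)$, $\tilde w$ is strictly decreasing in $q$, so $r-m\chi_{(x_1,L]}\tilde w$ is strictly increasing in $q$, and Lemma~\ref{lemma2.4}$(\romannumeral 5)$ gives that $\lambda_1(q,x_1)$ is strictly increasing in $q$. For the endpoints: as $q\to+\infty$, Lemma~\ref{lemma 2.2}$(\romannumeral 7)$ gives $\tilde w\to0$ uniformly, so the potential tends to $r$ and \eqref{eq2.13} yields $\lambda_1(q,x_1)\to\lambda_1(d_1,a_1,r,(0,L))>0$ (using $L>L_1^*$); as $q\to0$, Lemma~\ref{lemma 2.2}$(\romannumeral 7)$ gives $\tilde w\to+\infty$ uniformly on $[x_1,L]$, so squeezing $\inf_{[x_1,L]}\tilde w\le\tilde w\le\sup_{[x_1,L]}\tilde w$, applying Lemma~\ref{lemma2.4}$(\romannumeral 5)$ and then Lemma~\ref{lemma3.1} to the two resulting constant-coefficient potentials (both with coefficient $\to+\infty$) gives $\lambda_1(q,x_1)\to\lambda_1^D(d_1,a_1,r,(0,x_1))<0$ (Proposition~\ref{propB}, $x_1<L^*$). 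Continuity and strict monotonicity in $q$ then yield a unique $q^*(x_1)>0$ with $\lambda_1(q,x_1)<0$ for $q<q^*(x_1)$ (proving $(\romannumeral 2.2)$) and $>0$ for $q>q^*(x_1)$ (proving $(\romannumeral 2.1)$).

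It remains to describe $q^*(x_1)$. Exactly as in Lemma~\ref{lemma3.2}, $\tilde w$ is strictly decreasing in $x_1$, so $r-m\chi_{(x_1,L]}\tilde w$ and hence $\lambda_1(q,x_1)$ are strictly increasing in $x_1$. Being continuous and strictly increasing in both arguments, the relation $\lambda_1(q^*(x_1),x_1)=0$ defines (by the implicit function theorem) a continuous $q^*$ that is strictly \emph{decreasing} in $x_1$, which is the reversal of the behaviour of $m^*$ in Lemma~\ref{lemma3.2}. The limit $\lim_{x_1\to0}q^*(x_1)=q^*(0)$ follows from continuity, and $q^*(0)>0$ by the argument of Lemma~\ref{lemma3.2}. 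Finally, at $x_1=L^*$ we have $\lim_{q\to0}\lambda_1(q,L^*)=\lambda_1^D(d_1,a_1,r,(0,L^*))=0$ (Proposition~\ref{propB}), so strict monotonicity in $q$ forces $\lambda_1(q,L^*)>0$ for all $q>0$; were $q^*(x_1)\to q_\infty>0$ as $x_1\to L^*$, letting $x_1\to L^*$ in $\lambda_1(q^*(x_1),x_1)=0$ would give $\lambda_1(q_\infty,L^*)=0$, a contradiction, so $\lim_{x_1\to L^*}q^*(x_1)=0$.

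The step that needs genuine care is the limit $\lim_{q\to0}\lambda_1(q,x_1)=\lambda_1^D(d_1,a_1,r,(0,x_1))$: in Lemma~\ref{lemma3.1} the potential diverges on $(x_1,L]$ through a constant coefficient, whereas here the divergence is carried by $\tilde w\to+\infty$, so Lemma~\ref{lemma3.1} cannot be applied verbatim and must be reached through the two-sided squeeze by $\inf_{[x_1,L]}\tilde w$ and $\sup_{[x_1,L]}\tilde w$. The remainder is a routine transcription of Lemma~\ref{lemma3.2} with the monotonicities reversed.
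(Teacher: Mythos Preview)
Your proof is correct and follows the same route the paper indicates (``an analogous proof of Lemma~\ref{lemma3.2}''): monotonicity of $\tilde w$ in $q$ from Lemma~\ref{lemma 2.2}$(\romannumeral 7)$ feeds into Lemma~\ref{lemma2.4}$(\romannumeral 5)$ to make $\lambda_1(q,x_1)$ strictly increasing in $q$, the endpoint limits $q\to0$ and $q\to+\infty$ pin down the sign change, and the implicit function theorem together with the $x_1$-monotonicity inherited from Lemma~\ref{lemma3.2} gives the behaviour of $q^*(x_1)$. Your explicit two-sided squeeze by $\inf_{[x_1,L]}\tilde w$ and $\sup_{[x_1,L]}\tilde w$ to reduce the $q\to0$ limit to Lemma~\ref{lemma3.1} is in fact more careful than the paper's own treatment of the analogous step \eqref{eq3.9} in Lemma~\ref{lemma3.2}, where the nonconstant weight $\tilde w$ is silently absorbed.
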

Thus,  Theorem \ref{th1.1} can be derived from Lemmas~\ref{l1}, \ref{lemma3.2}, \ref{lemma3.3} and Proposition~\ref{propA}.
\begin{lemma}\label{lemma3.4}
 Assume that $0< a_1<\sqrt{4d_1r}$ and $L>L_1^*$, and that all parameters except for $h$ and $a_2$ are fixed.  Then we have the following result:

	$(\romannumeral 1)$ If $x_1\ge L^*$, then $(0,\tilde{w})$ is unstable;

	$(\romannumeral 2)$ If $x_1<L^*$, then there exists $h^{**}>0$ satisfying $\lambda _1(d_1,a_1,r-\frac{mh^{**}}{q}\chi _{\left( x_1,L \right]},\left( 0,L \right) )=0$ such that the following statements hold:

	$(\romannumeral 2.1)$ If $h<h^{**}$, then $(0,\tilde{w})$ is unstable;

	$(\romannumeral 2.2)$ If $h>h^{**}$, then there exists $a_2^*(h)>0$ satisfying $\lambda _1{(}d_1,a_1,r-m\chi _{\left( x_1,L \right]}\tilde{w},\left( 0,L \right) )=0$ such that $(0,\tilde{w})$ is linearly stable for $a_2<a_2^*(h)$, and unstable for $a_2>a_2^*(h)$.
	
	Moreover, $a_2^*(h)$ is continuous and  increases with respect to $h\in(h^{**},+\infty)$ with $\lim\limits_{h\to h^{**}}a_2^*(h)=0$ and $\lim\limits_{h\to{+\infty}}a_2^*(h)=+\infty$.
\end{lemma}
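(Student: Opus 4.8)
The plan is to express both stability alternatives through a single quantity, namely the principal eigenvalue
$$
G(h,a_2):=\lambda_1\bigl(d_1,a_1,\,r-m\chi_{(x_1,L]}\tilde{w}(\cdot;h,a_2),\,(0,L)\bigr),
$$
recalling that $(0,\tilde{w})$ is linearly stable when $G<0$ and unstable when $G>0$, and then to track the sign of $G$ via the monotonicity of $\tilde w$ in $h$ and $a_2$ from Lemma~\ref{lemma 2.2} together with the monotonicity of $\lambda_1$ in its potential from Lemma~\ref{lemma2.4}$(\romannumeral 5)$. For part $(\romannumeral 1)$ I would repeat the test-function argument of Lemma~\ref{lemma3.2}$(\romannumeral 1)$: inserting into the variational formula \eqref{eq2.13} the Dirichlet principal eigenfunction $\varphi_1^D(d_1,a_1,r,(0,x_1))$ extended by zero to $(x_1,L]$ (where $\chi$ makes the potential equal to $r$ on the support, and the boundary term at $L$ vanishes) gives Rayleigh quotient $\lambda_1^D(d_1,a_1,r,(0,x_1))$, so $G\ge\lambda_1^D$; the inequality is strict because a function vanishing on $(x_1,L)$ cannot coincide with the strictly positive principal eigenfunction of the full Robin problem. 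Since $x_1\ge L^*$ gives $\lambda_1^D\ge 0$ by Proposition~\ref{propB}, we obtain $G>0$ for all $h,a_2>0$, i.e. instability.

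For part $(\romannumeral 2)$ I would first produce $h^{**}$ using the constant-potential eigenvalue $g(h):=\lambda_1(d_1,a_1,r-\tfrac{mh}{q}\chi_{(x_1,L]},(0,L))$. By Lemma~\ref{lemma2.4}$(\romannumeral 5)$, $g$ is strictly decreasing; moreover $g(0^+)=\lambda_1(d_1,a_1,r,(0,L))>0$ because $L>L_1^*$, while Lemma~\ref{lemma3.1} gives $g(h)\to\lambda_1^D(d_1,a_1,r,(0,x_1))<0$ as $h\to+\infty$ because $x_1<L^*$. Continuity and strict monotonicity then yield a unique $h^{**}>0$ with $g(h^{**})=0$. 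For $(\romannumeral 2.1)$, Lemma~\ref{lemma 2.2}$(\romannumeral 2)$ gives $\tilde w<h/q$, so $r-m\chi_{(x_1,L]}\tilde w>r-\tfrac{mh}{q}\chi_{(x_1,L]}$, and Lemma~\ref{lemma2.4}$(\romannumeral 5)$ forces $G(h,a_2)>g(h)>0$ for every $h<h^{**}$ and every $a_2$, hence instability.

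For $(\romannumeral 2.2)$, with $h>h^{**}$ fixed, I would exploit the strict $a_2$-monotonicity of $\tilde w$ (Lemma~\ref{lemma 2.2}$(\romannumeral 4)$): then $r-m\chi_{(x_1,L]}\tilde w$ is strictly increasing in $a_2$, so $G(h,\cdot)$ is strictly increasing by Lemma~\ref{lemma2.4}$(\romannumeral 5)$ and continuous in $a_2$ (immediate from \eqref{eq2.13}, since $\tilde w$ depends continuously on $a_2$ through Lemma~\ref{lemma2.1}). The endpoint limits come from Lemma~\ref{lemma 2.2}$(\romannumeral 6)$: as $a_2\to 0$, $\tilde w\to h/q$ uniformly, so $G\to g(h)<0$ (stable), and as $a_2\to+\infty$, $\tilde w\to 0$ uniformly, so $G\to\lambda_1(d_1,a_1,r,(0,L))>0$ (unstable). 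Strict monotonicity and the intermediate value theorem then give the unique threshold $a_2^*(h)$ with the asserted dichotomy. For the ``Moreover'' claim, note $\tilde w$ is strictly increasing in $h$ (Lemma~\ref{lemma 2.2}$(\romannumeral 3)$), so $G(\cdot,a_2)$ is strictly decreasing in $h$; combining this with strict increase in $a_2$, a direct comparison ($G(h_2,a_2^*(h_1))<G(h_1,a_2^*(h_1))=0$ forces $a_2^*(h_2)>a_2^*(h_1)$) shows $a_2^*$ is continuous and strictly increasing. The limit $\lim_{h\to h^{**+}}a_2^*(h)=0$ follows since, for fixed $\varepsilon>0$, $G(h^{**},\varepsilon)>g(h^{**})=0$, so $G(h,\varepsilon)>0$ for $h$ near $h^{**}$, whence $a_2^*(h)<\varepsilon$.

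The main obstacle is the remaining limit $\lim_{h\to+\infty}a_2^*(h)=+\infty$, since one must rule out $a_2^*(h)$ staying bounded. If $a_2^*(h)\le M$ along a sequence $h\to+\infty$, I would invoke the $a_2$-uniform lower bound extracted from the proof of Lemma~\ref{lemma 2.2}$(\romannumeral 2)$, namely $\tilde w\ge\delta_0\,h/q$ on $[x_1,L]$ with $\delta_0=\frac{q}{M/(L-x_1)+q}\,e^{-M(L-x_1)/d_2}>0$ valid for all $a_2\in[0,M]$. Then $m\chi_{(x_1,L]}\tilde w\ge(m\delta_0 h/q)\chi_{(x_1,L]}\to+\infty$ uniformly in $a_2\in[0,M]$, and Lemma~\ref{lemma2.4}$(\romannumeral 5)$ together with Lemma~\ref{lemma3.1} gives $G(h,a_2)\le\lambda_1(d_1,a_1,r-\tfrac{m\delta_0 h}{q}\chi_{(x_1,L]},(0,L))\to\lambda_1^D(d_1,a_1,r,(0,x_1))<0$, contradicting $G(h,a_2^*(h))=0$. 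Making this lower bound genuinely uniform over the whole range $a_2\in[0,M]$ is the delicate point of the argument.
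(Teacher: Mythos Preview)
Your argument is correct and follows the same skeleton as the paper: monotonicity of $\tilde w$ in $h,a_2$ (Lemma~\ref{lemma 2.2}), monotonicity of $\lambda_1$ in its potential (Lemma~\ref{lemma2.4}$(\romannumeral5)$), the limits from Lemma~\ref{lemma 2.2}$(\romannumeral6)$ and Lemma~\ref{lemma3.1}, and the intermediate value theorem. The only real difference is in the final limit $a_2^*(h)\to+\infty$, which you flag as delicate. In fact your uniform bound is not delicate at all: the factor $\bigl(a_2/(L-x_1)+q\bigr)^{-1}e^{-a_2(L-x_1)/d_2}$ from the proof of Lemma~\ref{lemma 2.2}$(\romannumeral2)$ is decreasing in $a_2$, so its infimum over $[0,M]$ is just its value at $a_2=M$, and your $\delta_0$ is exactly that. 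The paper, however, bypasses uniformity entirely: since $a_2^*$ is increasing, a finite limit $\beta$ would give $a_2^*(h)<\beta+1$ for large $h$, hence $G(h,\beta+1)>G(h,a_2^*(h))=0$; but at the \emph{fixed} value $a_2=\beta+1$, Lemma~\ref{lemma 2.2}$(\romannumeral5)$ gives $\tilde w\to+\infty$ and thus $G(h,\beta+1)\to\lambda_1^D<0$, a contradiction. Fixing a single $a_2$ above the supposed supremum is cleaner than your uniform estimate, though both work.
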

\begin{proof}
	From Lemmas \ref{lemma 2.2}, \ref{lemma2.3} and \ref{lemma2.4}, $\lambda _1\text{(}d_1,a_1,r-m\chi _{\left( x_1,L \right]}\tilde{w},\left( 0,L \right) \text{)}$ is continuous with respect to $a_2$ and $h$, respectively, and  is strictly increasing with respect to $a_2$ and decreasing with respect to $h$.
	
	From \cite[Lemma 2.2]{lou2015evolution}, we known that if $0< a_1<\sqrt{4d_1r}$ and $L>L_1^*$, then $\lambda _1\left( d_1,a_1,r,\left( 0,L \right) \right)>0$. From Lemma~\ref{lemma 2.2} $(\romannumeral 6)$ one can obtain that
	\begin{align}\label{eq3.10}
			&\lim\limits_{a_2\to0}\lambda _1\text{(}d_1,a_1,r-m\chi _{\left( x_1,L \right]}\tilde{w},\left( 0,L \right) \text{)}=\lambda _1\text{(}d_1,a_1,r-\frac{mh}{q}\chi _{\left( x_1,L \right]},\left( 0,L \right) \text{)},\\
		&\lim\limits_{a_2\to{+\infty}}\lambda _1\text{(}d_1,a_1,r-m\chi _{\left( x_1,L \right]}\tilde{w},\left( 0,L \right) \text{)}=\lambda _1\left( d_1,a_1,r,\left( 0,L \right) \right)>0.
	\end{align}

	 Next, to study the stability distribution of $(0,\tilde{w})$ in the $h-a_2$ plane, we examine the sign of $\lambda _1\text{(}d_1,a_1,r-\frac{mh}{q}\chi _{\left( x_1,L \right]},\left( 0,L \right) \text{)}.$
	
	 From Lemma \ref{lemma3.1},
	 \begin{equation}\label{3.11}
	 	\lim\limits_{h\to{+\infty}}\lambda _1\text{(}d_1,a_1,r-\frac{mh}{q}\chi _{\left( x_1,L \right]},\left( 0,L \right) \text{)}=\lambda _{1}^{D}\left( d_1,a_1,r,\left( 0,x_1 \right) \right).
	 \end{equation}
	 This, combined with Proposition~\ref{propB} and the monotonic decreasing property of $\lambda _1\text{(}d_1,a_1,r-\frac{mh}{q}\chi _{\left( x_1,L \right]},\left( 0,L \right) \text{)}$ with respect to $h$, implies that $\lambda _1\text{(}d_1,a_1,r-\frac{mh}{q}\chi _{\left( x_1,L \right]},\left( 0,L \right) \text{)}>0$ for any $h\in(0,+\infty)$ when $x_1\ge L^*$. Therefore, $\lambda _1\text{(}d_1,a_1,r-m\chi _{\left( x_1,L \right]}\tilde{w},\left( 0,L \right) \text{)}>0$ for any $a_2>0$  when $x_1\ge L^*$. Consequently, it follows that $(0,\tilde{w})$ is unstable when $x_1\ge L^*$.
	
	Note that
	\begin{equation}\label{3.12}
		\lim\limits_{h\to0}\lambda _1\text{(}d_1,a_1,r-\frac{mh}{q}\chi _{\left( x_1,L \right]},\left( 0,L \right) \text{)}=\lambda _1\left( d_1,a_1,r,\left( 0,L \right) \right)>0.
	\end{equation}
	Again from Proposition~\ref{propB},  $\lambda _{1}^{D}\left( d_1,a_1,r,\left( 0,x_1 \right) \right)<0$ when $x_1<L^*$. Then from \eqref{3.11} and \eqref{3.12}, there exists $h^{**}>0$ satisfying $\lambda _1\text{(}d_1,a_1,r-\frac{mh^{**}}{q}\chi _{\left( x_1,L \right]},\left( 0,L \right) \text{)}=0$, such that
\begin{equation}\label{hl}
	\lambda _1\text{(}d_1,a_1,r-\frac{mh}{q}\chi _{\left( x_1,L \right]}\tilde{w},\left( 0,L \right) \text{)}
		\begin{cases}
			<0,&for\quad h>h^{**},\\
            =0,&for\quad h=h^{**},\\
			>0,&for\quad 0<h<h^{**}.
		\end{cases}
	\end{equation}
This, combined with  \eqref{eq3.10}, implies that
	\begin{equation*}
		\lim\limits_{a_2\to0}\lambda _1\text{(}d_1,a_1,r-m\chi _{\left( x_1,L \right]}\tilde{w},\left( 0,L \right) \text{)}
		\begin{cases}
			<0,&for\quad h>h^{**},\\
			>0,&for\quad 0<h<h^{**}.
		\end{cases}
	\end{equation*}
By the monotonicity of $\lambda _1\text{(}d_1,a_1,r-m\chi _{\left( x_1,L \right]}\tilde{w},\left( 0,L \right) \text{)}$ with respect to $a_2$, we deduce that if $0<h<h^{**}$, then $\lambda _1\text{(}d_1,a_1,r-m\chi _{\left( x_1,L \right]}\tilde{w},\left( 0,L \right) \text{)}>0$ for all $a_2\leq 0$. So $(0,\tilde{w})$ is unstable if $0<h<h^{**}$. If $h>h^{**}$, then there exists $a_2^*(h)>0$ such that $\lambda _1\text{(}d_1,a_1,r-m\chi _{\left( x_1,L \right]}\tilde{w},\left( 0,L \right) \text{)}=0$. Consequently, $(0,\tilde{w})$ is linearly stable for $0\leq a_2<a_2^*(h)$ and unstable for $a_2>a_2^*(h)$.
	
	Combining the continuity and monotonicity of $\lambda _1\text{(}d_1,a_1,r-m\chi _{\left( x_1,L \right]}\tilde{w},\left( 0,L \right) \text{)}$ with respect to $a_2$ and $h$, and using the implicit function theorem, we concluded that $a_2^*(h)$ is continuous and monotonically increasing with respect to $h$.

 To  study  the limits of $a_2^*(h)$, denote $\lambda _1\text{(}d_1,a_1,r-m\chi _{\left( x_1,L \right]}\tilde{w},\left( 0,L \right) \text{)}=:\lambda_1(a_2,h)$. From \eqref{eq3.10}, \eqref{hl} and the definition of $h^{**}$, we have $\lim\limits_{h\to h^{**}}\lim\limits_{a_2\to 0}\lambda_1(a_2,h)=0$ and $\lim\limits_{h\to h^{**}}\lambda_1(a_2^*(h),h)=0$. Then by Lemma~\ref{eq2.4} and the monotonicity of $\lambda_1(a_2,h)$ with respect to $a_2$, we get that $\lim\limits_{h\to h^{**}}a_2^*(h)=0$.
	
	Suppose $\lim\limits_{h\to{+\infty}}a_2^*(h):=\beta\in (0,+\infty)$. Then there exists $\zeta_1>0$  such that $\lvert a_2^*(h)-\beta \rvert<1$ for $h>\zeta_1$. On the other hand, $\lim\limits_{h\to{+\infty}}\lambda_1(\beta+1,h)=\lambda _{1}^{D}\left( d_1,a_1,r,\left( 0,x_1 \right) \right):= M<0$. Then there exists $\zeta_2>0$ such that $\lambda_1(\beta+1,h)<\frac{3}{2}M<0$  for $h>\zeta_2$. Set $\zeta=max\{\zeta_1\text{,}\zeta_2\}$.  Combining $\lambda_1(a^*_2(\zeta+1),\zeta+1)=0$ and considering that $\lambda_1(a_2,h)$ is monotonically increasing with respect to $a_2$, we obtain $\lambda_1(\beta+1,\zeta+1))>0$, which leads to a contradiction. Hence $\beta=+\infty$. The proof is completed.
\end{proof}

Following an analogous proof of Lemma~\ref{lemma3.4}, we have the following results.
\begin{lemma}\label{lemma3.5}
 Assume that $0< a_1<\sqrt{4d_1r}$ and $L>L_1^*$, and that all parameters except for $q$ and $a_2$ are fixed. Then we have the following results:

	$(\romannumeral 1)$ If $x_1\ge L^*$, then $(0,\tilde{w})$ is unstable;

	$(\romannumeral 2)$ If $x_1<L^*$, then there exists $q^{**}>0$ satisfying $\lambda _1(d_1,a_1,r-\frac{mh}{q^{**}}\chi _{\left( x_1,L \right]},\left( 0,L \right)) =0$ such that the following statements hold:

	$(\romannumeral 2.1)$ If $q>q^{**}$, then $(0,\tilde{w})$ is unstable;

	$(\romannumeral 2.2)$ If $0<q<q^{**}$, then there exists $\bar{a}_2(q)>0$ satisfying $\lambda _1(d_1,a_1,r-m\chi _{\left( x_1,L \right]}\tilde{w},\left( 0,L \right)) =0$ such that $(0,\tilde{w})$ is stable for $a_2<\bar{a}_2(q)$ and unstable for $a_2>\bar{a}_2(q)$.
	
	Moreover, $\bar{a}_2(q)$ is  continuous and strictly decreases with respect to $q\in(0,q^{**})$ with $\lim\limits_{q\to q^{**}}\bar{a}_2(q)=0$ and $\lim\limits_{q\to0}\bar{a}_2(q)=+\infty$.
	\end{lemma}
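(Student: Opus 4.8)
The plan is to run the proof of Lemma~\ref{lemma3.4} essentially verbatim, interchanging the role of $h$ with that of $q$, and tracking the one crucial sign change: whereas $\lambda_1(d_1,a_1,r-m\chi_{(x_1,L]}\tilde{w},(0,L))$ is \emph{decreasing} in $h$, it is \emph{increasing} in $q$. Indeed, Lemma~\ref{lemma 2.2}$(\romannumeral 7)$ gives that $\tilde{w}$ is strictly decreasing in $q$, so $r-m\chi_{(x_1,L]}\tilde{w}$ is strictly increasing in $q$, and Lemma~\ref{lemma2.4}$(\romannumeral 5)$ then yields strict monotone increase of $\lambda_1$ in $q$; the increase in $a_2$ is inherited from Lemma~\ref{lemma 2.2}$(\romannumeral 4)$ exactly as before. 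This sign flip is precisely what exchanges the two regimes relative to Lemma~\ref{lemma3.4}: here unconditional instability attaches to \emph{large} $q$ rather than to small $h$.

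First I would reduce the two-parameter problem to a one-parameter one by sending $a_2\to 0$. Using Lemma~\ref{lemma 2.2}$(\romannumeral 6)$ ($\tilde{w}\to h/q$) together with the continuity in Lemma~\ref{lemma2.4}, I obtain
$$\lim_{a_2\to 0}\lambda_1(d_1,a_1,r-m\chi_{(x_1,L]}\tilde{w},(0,L))=\lambda_1\Big(d_1,a_1,r-\tfrac{mh}{q}\chi_{(x_1,L]},(0,L)\Big),$$
while $\lim_{a_2\to+\infty}\lambda_1(\cdots)=\lambda_1(d_1,a_1,r,(0,L))>0$ since $\tilde{w}\to 0$. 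The $q$-behavior of the reduced eigenvalue $\lambda_1(d_1,a_1,r-\frac{mh}{q}\chi_{(x_1,L]},(0,L))$ is then read off from Lemma~\ref{lemma3.1}: as $q\to 0$ one has $\frac{mh}{q}\to+\infty$, so the limit is $\lambda_1^D(d_1,a_1,r,(0,x_1))$, whereas as $q\to+\infty$ the potential tends to $r$ and the limit is $\lambda_1(d_1,a_1,r,(0,L))>0$.

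The dichotomy on $x_1$ then comes from Proposition~\ref{propB}. When $x_1\ge L^*$ we have $\lambda_1^D(d_1,a_1,r,(0,x_1))\ge 0$, so monotone increase in $q$ keeps the reduced eigenvalue strictly positive for every $q>0$; since $\lambda_1(\cdots m\tilde{w}\cdots)$ exceeds its $a_2\to 0$ limit, this forces $\lambda_1>0$ for all $a_2,q>0$, i.e. instability. When $x_1<L^*$ we have $\lambda_1^D<0$, so the reduced eigenvalue is negative as $q\to 0$ and positive as $q\to+\infty$; strict monotonicity and continuity produce a unique $q^{**}$ with $\lambda_1(d_1,a_1,r-\frac{mh}{q^{**}}\chi_{(x_1,L]},(0,L))=0$, negative for $q<q^{**}$ and positive for $q>q^{**}$. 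For $q>q^{**}$ the $a_2\to 0$ limit is already positive, so monotonicity in $a_2$ gives instability for all $a_2$; for $0<q<q^{**}$ the $a_2\to 0$ limit is negative while the $a_2\to+\infty$ limit is positive, so monotonicity in $a_2$ yields a unique threshold $\bar{a}_2(q)$ with stability for $a_2<\bar{a}_2(q)$ and instability for $a_2>\bar{a}_2(q)$.

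Finally I would establish the qualitative properties of $\bar{a}_2(q)$. Continuity follows from the implicit function theorem applied to $\lambda_1(a_2,q)=0$, using strict monotonicity of $\lambda_1$ in both arguments; implicit differentiation gives $\bar{a}_2'(q)=-(\partial_q\lambda_1)/(\partial_{a_2}\lambda_1)<0$, so $\bar{a}_2$ strictly decreases (equivalently by the direct comparison argument used for $a_2^*(h)$). The limit $\lim_{q\to q^{**}}\bar{a}_2(q)=0$ holds because the $a_2\to 0$ limit of $\lambda_1$ vanishes exactly at $q^{**}$. For $\lim_{q\to 0}\bar{a}_2(q)=+\infty$ I would argue by contradiction as in Lemma~\ref{lemma3.4}: if $\bar{a}_2(q)\to\beta<+\infty$, then, since $\tilde{w}\to+\infty$ as $q\to 0$ (Lemma~\ref{lemma 2.2}$(\romannumeral 7)$) so that $\lim_{q\to 0}\lambda_1(\beta+1,q)=\lambda_1^D(d_1,a_1,r,(0,x_1))<0$ by Lemma~\ref{lemma3.1}, one contradicts $\lambda_1(\beta+1,q)>\lambda_1(\bar{a}_2(q),q)=0$ for small $q$. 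The main obstacle is entirely bookkeeping: propagating the flipped monotonicity in $q$ consistently through every comparison so that the instability region attaches to $q>q^{**}$; all analytic ingredients are already supplied by Lemmas~\ref{lemma 2.2}, \ref{lemma2.4}, \ref{lemma3.1} and Proposition~\ref{propB}.
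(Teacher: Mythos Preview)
Your proposal is correct and follows precisely the route the paper indicates: the paper gives no separate proof of Lemma~\ref{lemma3.5} but simply states that it follows by an argument analogous to Lemma~\ref{lemma3.4}, and your write-up carries out exactly that analogy, correctly identifying and propagating the one sign flip (monotone \emph{increase} of $\lambda_1$ in $q$, via Lemma~\ref{lemma 2.2}$(\romannumeral 7)$ and Lemma~\ref{lemma2.4}$(\romannumeral 5)$) through the limits, the dichotomy from Proposition~\ref{propB}, the implicit-function step, and the contradiction argument for $\lim_{q\to 0}\bar a_2(q)=+\infty$.
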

\begin{lemma}\label{lemma3.6}
	Assume that $0< a_1<\sqrt{4d_1r}$ and $L>L_1^*$, and that all parameters except $m$ and $h$ are fixed. Then the following results are true:

	$(\romannumeral 1)$ If $x_1\ge L^*$, then $(0,\tilde{w})$ is unstable;

	$(\romannumeral 2)$ If $x_1<L^*$, then there exists $\hat{h}(m)>0$ satisfying $
	\lambda _1(d_1,a_1,r-m\chi _{\left( x_1,L \right]}\tilde{w},\left( 0,L \right)) =0$ such that $(0,\tilde{w})$ is stable for $h>\hat{h}(m)$ and unstable for $0<h<\hat{h}(m)$.
	
	Furthermore, $\hat{h}(m)$ is continuous and  strictly decreases with respect to $m\in(0,+\infty)$ with $\lim\limits_{m\to 0}\hat{h}(m)=+\infty$ and $\lim\limits_{m\to{+\infty} }\hat{h}(m)=0$.
\end{lemma}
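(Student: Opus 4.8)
The plan is to reduce everything, as in Lemma~\ref{lemma3.4}, to the sign and monotonicity of the principal eigenvalue that governs the stability of $(0,\tilde w)$. Writing $\tilde w=\tilde w(h)$ to record the dependence on $h$, I set $F(m,h):=\lambda_1(d_1,a_1,r-m\chi_{(x_1,L]}\tilde w(h),(0,L))$, so that $(0,\tilde w)$ is linearly stable iff $F<0$ and unstable iff $F>0$. Two structural facts drive the argument. First, $F$ is continuous and strictly decreasing in each of $m$ and $h$: strict decrease in $m$ is immediate from Lemma~\ref{lemma2.4}$(\romannumeral 5)$, while strict decrease in $h$ combines Lemma~\ref{lemma 2.2}$(\romannumeral 3)$ (that $\tilde w$ is strictly increasing in $h$) with the same monotonicity of $\lambda_1$ in its weight. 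Second, I would establish a limiting identity: whenever $m\tilde w\to+\infty$ uniformly on $[x_1,L]$, one has $F\to\lambda_1^{D}(d_1,a_1,r,(0,x_1))$. This follows by using the two-sided bound $\delta h/q\le\tilde w<h/q$ from Lemma~\ref{lemma 2.2}$(\romannumeral 2)$ to sandwich $F$ between $\lambda_1(d_1,a_1,r-(mh/q)\chi_{(x_1,L]},(0,L))$ and $\lambda_1(d_1,a_1,r-(m\delta h/q)\chi_{(x_1,L]},(0,L))$ via Lemma~\ref{lemma2.4}$(\romannumeral 5)$, then applying Lemma~\ref{lemma3.1} to each constant-weight end (both constants tending to $+\infty$) and invoking the squeeze theorem.

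For part $(\romannumeral 1)$, if $x_1\ge L^*$ then Proposition~\ref{propB} gives $\lambda_1^{D}(d_1,a_1,r,(0,x_1))\ge0$. Since $F$ is strictly decreasing in $m$ with $\lim_{m\to+\infty}F=\lambda_1^{D}\ge0$ (the limiting identity with $h$ fixed), we get $F>\lambda_1^{D}\ge0$ for every finite $m,h>0$, so $(0,\tilde w)$ is unstable irrespective of $m$ and $h$. For part $(\romannumeral 2)$, $x_1<L^*$ gives $\lambda_1^{D}<0$ by Proposition~\ref{propB}. Fixing $m>0$ and viewing $F$ as a function of $h$, Lemma~\ref{lemma 2.2}$(\romannumeral 5)$ gives $\tilde w\to0$ as $h\to0$, so $F\to\lambda_1(d_1,a_1,r,(0,L))>0$ (positive because $0<a_1<\sqrt{4d_1r}$ and $L>L_1^*$, by \cite{lou2015evolution}), while as $h\to+\infty$ the limiting identity gives $F\to\lambda_1^{D}<0$. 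By the intermediate value theorem and strict monotonicity of $F$ in $h$, there is a unique $\hat h(m)$ with $F(m,\hat h(m))=0$, $F>0$ (instability) for $0<h<\hat h(m)$, and $F<0$ (stability) for $h>\hat h(m)$.

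It remains to analyze $\hat h(m)$. Strict monotonicity I would get by direct comparison: for $m_1<m_2$ we have $F(m_2,\hat h(m_1))<F(m_1,\hat h(m_1))=0$, and since $F(m_2,\cdot)$ decreases strictly in $h$ with root $\hat h(m_2)$, this forces $\hat h(m_2)<\hat h(m_1)$, i.e. $\hat h$ is strictly decreasing; continuity then follows from the implicit function theorem using $\partial_h F<0$. For the limits I would argue by contradiction. If $\hat h(m)\to h_0<+\infty$ as $m\to0$, then continuity of $F$ would give $0=F(m,\hat h(m))\to F(0,h_0)=\lambda_1(d_1,a_1,r,(0,L))>0$, a contradiction, so $\hat h(m)\to+\infty$. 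If $\hat h(m)\to h_0>0$ as $m\to+\infty$, then $\tilde w(\hat h(m))$ stays bounded below by a fixed positive constant (Lemma~\ref{lemma 2.2}$(\romannumeral 2)$, with $\delta$ independent of $h$), hence $m\tilde w\to+\infty$ and the limiting identity forces $0=F(m,\hat h(m))\to\lambda_1^{D}<0$, again impossible; thus $\hat h(m)\to0$.

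I expect the only genuinely delicate step to be the limiting identity $F\to\lambda_1^{D}$ under $m\tilde w\to+\infty$, and in particular its use in the $m\to+\infty$ limit where $h=\hat h(m)$ varies simultaneously toward $0$. The crucial point is that the lower bound $\tilde w\ge\delta h/q$ of Lemma~\ref{lemma 2.2}$(\romannumeral 2)$ has $\delta$ independent of $h$, so a positive lower bound on $h$ alone keeps $m\tilde w\to+\infty$ and makes the squeeze go through; once this is secured, the remainder is a routine transcription of the monotone-eigenvalue bookkeeping already carried out in Lemma~\ref{lemma3.4}.
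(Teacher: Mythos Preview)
Your proposal is correct and follows essentially the same route as the paper: both reduce everything to the sign of $F(m,h)=\lambda_1(d_1,a_1,r-m\chi_{(x_1,L]}\tilde w,(0,L))$, use monotonicity in $m$ and $h$ together with the endpoint limits $F\to\lambda_1(d_1,a_1,r,(0,L))>0$ and $F\to\lambda_1^D(d_1,a_1,r,(0,x_1))$, and finish the limits of $\hat h(m)$ by contradiction. Your treatment is actually slightly more careful than the paper's in one respect: Lemma~\ref{lemma3.1} is stated for a constant weight $\xi\chi_{(x_1,L]}$, and you explicitly bridge to the non-constant weight $m\tilde w$ via the two-sided bound $\delta h/q\le\tilde w<h/q$ of Lemma~\ref{lemma 2.2}$(\romannumeral 2)$ and a squeeze, whereas the paper simply writes $\lim_{h\to+\infty}\lambda_1(d_1,a_1,r-m\chi\tilde w,(0,L))=\lambda_1^D$ and, in the $m\to+\infty$ step, asserts $\lim_{m\to+\infty}\lambda_1(m,\hat h(m))=\lim_{m\to+\infty}\lambda_1(m,\tau_1)$ without spelling out the uniformity; your observation that $\delta$ is independent of $h$ is exactly what makes this rigorous.
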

\begin{proof}
	From  Lemma~\ref{lemma 2.2} and Lemma~\ref{lemma2.4},
	$	\lambda _1\text{(}d_1,a_1,r-m\chi _{\left( x_1,L \right]}\tilde{w},\left( 0,L \right) \text{)}	$ is strictly decreasing with respect to $h$ and
	\begin{align*}
		&\lim\limits_{h\to0}\lambda _1\text{(}d_1,a_1,r-m\chi _{\left( x_1,L \right]}\tilde{w},\left( 0,L \right) \text{)}=\lambda _1\left( d_1,a_1,r,\left( 0,L \right) \right),
		\\
		&\lim\limits_{h\to{+\infty}}\lambda _1\text{(}d_1,a_1,r-m\chi _{\left( x_1,L \right]}\tilde{w},\left( 0,L \right) \text{)}=\lambda _{1}^{D}\left( d_1,a_1,r,\left( 0,x_1 \right) \right).
      	\end{align*}
     The results of $(\romannumeral 1)$ and $(\romannumeral 2)$ can be proven in a similar manner to the proof of Lemma~\ref{lemma3.4}. Combining the continuity and monotonicity of $\lambda _1\text{(}d_1,a_1,r-m\chi _{\left( x_1,L \right]}\tilde{w},\left( 0,L \right) \text{)}$ with respect to $m$ and $h$, and using the implicit function theorem, we can  conclude that $\hat{h}(m)$ is continuous and strictly decreases with respect to $m$.

      Next, we show that $\lim\limits_{m\to 0}\hat{h}(m)=+\infty$ and $\lim\limits_{m\to{+\infty} }\hat{h}(m)=0$. For convenience, let $\lambda _1(d_1,a_1,r-m\chi _{\left( x_1,L \right]}\tilde{w},\left( 0,L \right))=:\lambda_1(m,h)$. Consequently,  $$0\le \tau_1:=\lim\limits_{m\to{+\infty} }\hat{h}(m) <\lim\limits_{m\to 0}\hat{h}(m)=:\tau_2\le+\infty.$$
     For $x_1<L^*$, combining \eqref{eq3.9} with Proposition~\ref{propB} yields
     \begin{equation}\label{e1}
     	\lim\limits_{m\to{+\infty}}\lambda_1(m,h)<0, \quad(\forall h>0).
     \end{equation}
     Based on the definition of $\hat{h}(m)$, we have
     \begin{equation}\label{e2}
     	\lim\limits_{m\to{+\infty}}\lambda_1(m,\hat{h}(m))=\lim\limits_{m\to{+\infty}}\lambda_1(m,\tau_1)=0.
     \end{equation}
     If $\tau_1>0$, \eqref{e1} and \eqref{e2} lead to a contradiction. Then we conclude that $\tau_1=0$.

     To show $\lim\limits_{m\to 0}\hat{h}(m)=+\infty$,  suppose to the contrary that $\hat{h}(m)$ is bounded above as $m\to 0$, then $\tau_2<+\infty$. Then there exists $\delta_1>0$, such that $|\hat{h}(m)-\tau_2|<\frac{1}{3}$ for all $m\in(0,\delta_1)$. From \eqref{eq2.12}, $\lim\limits_{m\to 0}\lambda_1(m,\tau_2+\frac{1}{3})=\lambda _1\left( d_1,a_1,r,\left( 0,L \right) \right) =: M_1>0$. Then there exists $\delta_2>0$, such that $\lambda_1(m,\tau_2+\frac{1}{3})>\frac{1}{2}M_1>0$ for all $m\in(0,\delta_2)$. Let $\delta=min\{\frac{1}{3}\delta_1\text{,}\frac{1}{3}\delta_2\}$. Then $\lambda_1(\delta,\hat{h}(\delta))=0$. Considering that $\lambda_1(m,h)$ is monotonically decreasing with respect to $h$, it follow that $\lambda_1(\delta,\tau_2+\frac{1}{3})<0$, which lead to a contradiction. The proof is completed.
\end{proof}
With the help of Lemmas~\ref{l1}, \ref{lemma3.4}, \ref{lemma3.5} and \ref{lemma3.6}, we have Theorem~\ref{th1.2}.
\begin{proof}[Proof of Theorem~$\ref{th1.3}$]
	By Lemma~\ref{lemma2.4}, we obtain that $\lambda _1\text{(}d_1,a_1,r-m\chi _{\left( x_1,L \right]}\tilde{w},\left( 0,L \right) \text{)}$ is strictly decreasing with respect to $a_1>0$ and $\lim\limits_{a_1\to{+\infty}}\lambda _1\text{(}d_1,a_1,r-m\chi _{\left( x_1,L \right]}\tilde{w},\left( 0,L \right) \text{)}=-\infty$. Subsequently, we only need to examine the sign of $\lambda _1\text{(}d_1,0,r-m\chi _{\left( x_1,L \right]}\tilde{w},\left( 0,L \right) \text{)}$.
	Again using Lemma~\ref{eq2.4}, one obtains that $\lambda _1\text{(}d_1,0,r-m\chi _{\left( x_1,L \right]}\tilde{w},\left( 0,L \right) \text{)}
	$ is strictly decreasing with respect to $d_1$. Meanwhile, we have
	$$\lim\limits_{d_1\to 0}\lambda _1\text{(}d_1,0,r-m\chi _{\left( x_1,L \right]}\tilde{w},\left( 0,L \right) \text{)}=r>0,$$ and $$
	\lim\limits_{d_1\to{+\infty}}\lambda _1\text{(}d_1,0,r-m\chi _{\left( x_1,L \right]}\tilde{w},\left( 0,L \right) \text{)}=r-\frac{m\int_{x_1}^{L}\tilde{w}dx}{L}.$$
	From the above facts, one can deduce that parts $(\romannumeral 1)$ and $(\romannumeral 2)$) hold, and $d_1^*(r)$ is increasing  with respect to $r$.
	
	Next, we show that $\lim\limits_{r\to{0}}d_1^*(r)=0$ and $\lim\limits_{r\to{r^*}}d_1^*(r)=+\infty$.
	
	 Suppose that $d_1^*(r)$ is bounded below as $r\to 0^+$. By the monotonicity of $d_1^*(r)$ with respect to $r$, there exists a constant $K_1>0$ such that $\lim\limits_{r\to{0}}d_1^*(r)=K_1$. From Lemma~\ref{lemma2.4}, we have
	\begin{align*}
	0=&\lim\limits_{r\to{0}}\lambda _1\text{(}d_1^*(r),0,r-m\chi _{\left( x_1,L \right]}\tilde{w},\left( 0,L \right) \text{)}=\lambda _1\text{(}K_1,0,-m\chi _{\left( x_1,L \right]}\tilde{w},\left( 0,L \right) \text{)}\\
	&<\lim\limits_{d_1\to{0}}\lambda _1\text{(}d_1,0,-m\chi _{\left( x_1,L \right]}\tilde{w},\left( 0,L \right) \text{)}=\max\limits_{x\in[0,L]}(-m\chi _{\left( x_1,L \right]}\tilde{w})=0,
	\end{align*}
	which is a contradiction. Hence, $\lim\limits_{r\to{0}}d_1^*(r)=0$.
	
	 Suppose that $d_1^*(r)$ is bounded above as $r\to r^*$. By the monotonicity of $d_1^*(r)$ with respect to $r$, there exists a constant $K_2>0$ such that $\lim\limits_{r\to{r^*}}d_1^*(r)=K_2<+\infty$. From Lemma~\ref{lemma2.4}, we have
	\begin{align*}
		0=&\lim\limits_{r\to{r^*}}\lambda _1\text{(}d_1^*(r),0,r-m\chi _{\left( x_1,L \right]}\tilde{w},\left( 0,L \right) \text{)}=\lambda _1\text{(}K_2,0,r^*-m\chi _{\left( x_1,L \right]}\tilde{w},\left( 0,L \right) \text{)}\\
		&>\lim\limits_{d_1\to{+\infty}}\lambda _1\text{(}d_1,0,r^*-m\chi _{\left( x_1,L \right]}\tilde{w},\left( 0,L \right) \text{)}=\frac{\int_{0}^{L}r^*-m\chi _{\left( x_1,L \right]}\tilde{w}dx}{L}=0,
	\end{align*}
	which is a contraction. Thus, we have $\lim\limits_{r\to{r^*}}d_1^*(r)=+\infty$. The proof is finished.
\end{proof}

Next, we investigate stability of the coexistence steady state  (if it exists) of system (\ref{eq1.1}).
Linearizing \eqref{eq1.1} at the coexistence steady state $(u,w)$, we obtain
 \begin{equation}\label{eq3.1}
 	\begin{cases}
 		d_1\phi_{xx}-a_1\phi_x+(r-2u-m\chi _{\left( x_1,L \right]}w)\phi-m\chi _{\left( x_1,L \right]}u\psi=\lambda\phi,&x\in (0,L),\\
 		d_2\psi_{xx}-a_2\psi_x-pw\phi-(q+pu)\psi=\lambda\psi,&x\in (x_1,L),\\
 		d_1\phi_x(0)-a_1\phi(0)=\phi_x(L)=0,\\
 		d_2\psi_x(x_1)-a_2\psi(x_1)=\psi_x(L)=0.
 	\end{cases}
 \end{equation}
  From Krein-Rutman Theorem \cite{krein1948linear}, problem (\ref{eq3.1}) admits a principal eigenvalue, denoted by $\lambda_1^*$, and its corresponding principal eigenfunction $(\phi_1,\psi_1)$ can be chosen such that $\phi_1>0$ in $[0,L]$ and $\psi_1<0$ in $[x_1,L]$. Furthermore,
 \begin{equation*}
 	if \quad \lambda_1^*
 	\begin{cases}
 		<0,\\
 		=0,\\
 		>0,
 	\end{cases}
 	\quad
 	\text{then }(u,w) \text{ is}
 	\begin{cases}
 		\text{linearly stable},\\
 		\text{neutrally stable},\\
 		\text{linearly unstable}.
 	\end{cases}
 \end{equation*}

\begin{lemma}\label{lemma3.7}
	Assume that $0< a_1<\sqrt{4d_1r}$, $L>L_1^*$,  $h\ge\tilde{w}^2(L)$ and $pme^{|\frac{a_1}{d_1}-\frac{a_2}{d_2}|L}\le 1$. If system $(\ref{eq1.1})$ admits a coexistence steady state $(u,w)$, then the principal eigenvalue of \eqref{eq3.1} satisfies $\lambda_1^*<0$, i.e., $(u,w)$ is linearly stable.
\end{lemma}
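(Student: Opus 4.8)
The plan is to exploit the cooperative structure of the linearized problem \eqref{eq3.1} together with the two elliptic equations in \eqref{3.13} satisfied by the coexistence state $(u,w)$, and to argue by contradiction assuming $\lambda_1^*\ge 0$. By the Krein--Rutman structure recorded after \eqref{eq3.1}, $\phi_1>0$ on $[0,L]$ and $\psi_1<0$ on $[x_1,L]$; setting $\eta:=-\psi_1>0$ turns \eqref{eq3.1} into a genuinely cooperative system in $(\phi_1,\eta)$, both positive, with nonnegative coupling coefficients $m\chi_{(x_1,L]}u$ and $pw$. The crucial observation is that $(u,w)$ furnishes strict supersolutions of the two diagonal operators: writing $L_1:=d_1\partial_{xx}-a_1\partial_x+(r-2u-m\chi_{(x_1,L]}w)$ and $L_2:=d_2\partial_{xx}-a_2\partial_x-(q+pu)$, equations \eqref{3.13} give $L_1u=-u^2<0$ and $L_2w=-h<0$.

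Next I would carry out the ground-state substitutions $\omega:=\phi_1/u$ on $[0,L]$ and $v:=\eta/w$ on $[x_1,L]$. Inserting $\phi_1=u\omega$, $\eta=wv$ into the two equations of \eqref{eq3.1} and using $L_1u=-u^2$, $L_2w=-h$ to eliminate the zeroth-order terms, $\omega$ and $v$ solve $d_1u\,\omega_{xx}+(2d_1u_x-a_1u)\omega_x-u^2\omega+m\chi_{(x_1,L]}u\,\eta=\lambda_1^*u\omega$ and $d_2w\,v_{xx}+(2d_2w_x-a_2w)v_x-hv+pw\,\phi_1=\lambda_1^*wv$. Evaluating each at an interior maximum — $y_*$ for $\omega$, $x_*$ for $v$ — where the first-order term vanishes, the second-order term is $\le0$, and using $\lambda_1^*\ge0$, I obtain the two extremal inequalities $\phi_1(y_*)\le m\,\eta(y_*)$ and $\eta(x_*)\le p\,w^2(x_*)\,\phi_1(x_*)/h$. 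Here the hypothesis $h\ge\tilde w^2(L)$ enters: Lemma \ref{l1} gives $(0,\tilde w)\le_{K_1}(u,w)$, hence $w\le\tilde w$, and $\tilde w$ is increasing by Lemma \ref{lemma 2.2}$(\romannumeral1)$, so $w^2(x_*)\le\tilde w^2(L)\le h$ and the second inequality sharpens to $\eta(x_*)\le p\,\phi_1(x_*)$.

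Writing these as $\max\omega\le \frac{m\,w(y_*)}{u(y_*)}\max v$ and $\max v\le \frac{p\,u(x_*)}{w(x_*)}\max\omega$ and multiplying, I would reach $1\le pm\,\dfrac{u(x_*)\,w(y_*)}{u(y_*)\,w(x_*)}$. The residual ratio must then be dominated by $e^{|\frac{a_1}{d_1}-\frac{a_2}{d_2}|L}$, using that $u$ and $w$ are governed by the two advection--diffusion weights $e^{a_1x/d_1}$ and $e^{a_2x/d_2}$ of $L_1$ and $L_2$ (so that the logarithmic derivative of $u/w$ is controlled by the rate mismatch $\frac{a_1}{d_1}-\frac{a_2}{d_2}$); together with $pm\,e^{|\frac{a_1}{d_1}-\frac{a_2}{d_2}|L}\le1$ this forces $1\le1$ with every intermediate inequality an equality, contradicting the strict supersolution property $L_1u=-u^2<0$. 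Hence $\lambda_1^*<0$ and $(u,w)$ is linearly stable.

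The main obstacle is precisely this last cross-coupling estimate: one must simultaneously absorb the quadratic toxicant feedback (controlled by $w^2\le\tilde w^2(L)\le h$) and the mismatch between the two operators' exponential weights (controlled by $e^{|\frac{a_1}{d_1}-\frac{a_2}{d_2}|L}$), and bound the residual ratio sharply rather than through the cruder separate rates $a_1/d_1$ and $a_2/d_2$; this is where the exact form of the two hypotheses is indispensable. A secondary technical point is the case in which $\omega$ or $v$ attains its maximum at an endpoint, where the Robin condition at $x_1$ (resp. the no-flux condition at $L$) must replace the interior first-order test. I would also note an equivalent route, obtained by pairing the two equations of \eqref{eq3.1} with $u\,e^{-a_1x/d_1}$ and $w\,e^{-a_2x/d_2}$ and comparing the resulting weighted integral identities, in which the factor $e^{|\frac{a_1}{d_1}-\frac{a_2}{d_2}|L}$ appears directly as the supremum of the weight ratio $e^{-(\frac{a_1}{d_1}-\frac{a_2}{d_2})x}$.
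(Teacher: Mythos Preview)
Your maximum-principle reduction to the quotients $\omega=\phi_1/u$ and $v=-\psi_1/w$ is a natural idea, and your two extremal inequalities $\phi_1(y_*)\le m\eta(y_*)$ and $\eta(x_*)\le p\phi_1(x_*)$ (the latter using $w^2\le\tilde w^2(L)\le h$) are correctly derived. The genuine gap is the final cross-ratio estimate. After multiplying you arrive at
\[
1\ \le\ pm\;\frac{u(x_*)}{u(y_*)}\cdot\frac{w(y_*)}{w(x_*)},
\]
and you then assert that this ratio is dominated by $e^{|\frac{a_1}{d_1}-\frac{a_2}{d_2}|L}$ because ``the logarithmic derivative of $u/w$ is controlled by the rate mismatch''. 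That claim is neither proved nor true in the generality you need: even for the toxicant-only state, Lemma~\ref{lemma 2.2}(i) only gives $0<\tilde w_x/\tilde w<a_2/d_2$, and no analogous two-sided bound for $u_x/u$ is available for the coexistence $u$. At best one obtains $\frac{u_x}{u}-\frac{w_x}{w}\in(-\tfrac{a_2}{d_2},\tfrac{a_1}{d_1})$, which yields a bound by $e^{\max(\frac{a_1}{d_1},\frac{a_2}{d_2})L}$, not by $e^{|\frac{a_1}{d_1}-\frac{a_2}{d_2}|L}$. Since the hypothesis is precisely $pm\,e^{|\frac{a_1}{d_1}-\frac{a_2}{d_2}|L}\le 1$, your pointwise argument falls short of the contradiction unless $\frac{a_1}{d_1}=\frac{a_2}{d_2}$.

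The paper proceeds by the integral route you mention only in passing, and the two routes are \emph{not} equivalent: the paper multiplies \eqref{eq3.12} by $\frac{\phi_1^2}{u^2}e^{-\frac{a_1}{d_1}x}$ and \eqref{eq3.13} by $\frac{\psi_1^2}{w^2}e^{-\frac{a_2}{d_2}x}$, obtaining sign-definite ``kinetic'' terms $I_1\le0$, $J_1\ge0$, and then takes the combination $m^3$(second)$-\,$(first) (or $p^3$ in the other case). The decisive step has no analogue at isolated extremal points: it is the cubic algebraic identity
\[
(m\psi_1)^3+\phi_1(m\psi_1)^2-\phi_1^3-\phi_1^2(m\psi_1)=(m\psi_1-\phi_1)(m\psi_1+\phi_1)^2\le 0,
\]
applied under the integral after the factor $e^{(\frac{a_1}{d_1}-\frac{a_2}{d_2})x}$ has been bounded by $e^{|\frac{a_1}{d_1}-\frac{a_2}{d_2}|L}$ and the factor $h/w^2$ by $1$. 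In other words, the exponential mismatch enters as a pointwise weight ratio inside a single integral, not as a ratio of values of $u$ and $w$ at two unrelated points $x_*$, $y_*$; this is exactly what your maximum-point argument cannot reproduce. If you want to salvage your approach, you would need an a~priori Harnack-type comparison for $u/w$ with the \emph{difference} of the advective rates, and no such estimate is available here.
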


\begin{proof}
	If system $(\ref{eq1.1})$ admits a coexistence steady state $(u,w)$, then $(u,w)$ satisfies \eqref{3.13}.
	Recall that $\lambda_1^*$  and $(\phi_1,\psi_1)$ are the principal eigenvalue and principal eigenfunction  associated with $(u,w)$, where $(\lambda_1^*,\phi_1,\psi_1)$ satisfy \eqref{eq3.1}, and $\phi_1>0$ on $[0,L]$ and $\psi_1>0$ on $[x_1,L]$.
	
	 Multiplying the first equation of (\ref{eq3.1}) by $u$ and the first equation of (\ref{3.13}) by $\phi_1$, and subtracting the resulting equations, one gets
	\begin{equation}\label{eq3.12}
		\lambda_1^*\phi_1u=(d_1\phi_{1,xx}-a_1\phi_{1,x})u-(d_1u_{xx}-a_1u_x)\phi-u^2(\phi_1+m\chi _{\left( x_1,L \right]}\psi_1).
	\end{equation}
	Similarly, one can derive from the second equations in (\ref{eq3.1}) and (\ref{3.13}) that
	\begin{equation}\label{eq3.13}
		\lambda_1^*\psi_1w=(d_2\psi_{1,xx}-a_2\psi_{1,x})w-(d_2w_{xx}-a_2w_x)\psi_1-(pw^2\phi_1+h\psi_1).
	\end{equation}
	Then multiplying (\ref{eq3.12}) with $\frac{\phi_1^2}{u^2}e^{-\frac{a_1}{d_1}x}$ and integrating the resulting equation over $(0,L)$, one get
	\begin{equation}
	\begin{aligned}\label{eq3.14}
		\lambda_1^*\int_{0}^{L}e^{-\frac{a_1}{d_1}x}\frac{\phi_1^3}{u}dx=&\int_{0}^{L}e^{-\frac{a_1}{d_1}x}\frac{\phi_1^2}{u^2}((d_1\phi_{1,xx}-a_1\phi_{1,x})u-(d_1u_{xx}-a_1u_x)\phi_1)dx\\
		&-\int_{0}^{L}e^{-\frac{a_1}{d_1}x}\phi_1^2(\phi_1+m\chi _{\left( x_1,L \right]}\psi_1)dx\\
		=:&I_1-I_2.
	\end{aligned}
	\end{equation}
Similarly to \cite{huang2022}, we have
$$I_1=-\int_{0}^{L}2d_1e^{-\frac{a_1}{d_1}x}\frac{\phi_1^3}{u}(\frac{u_x}{u}-\frac{\phi_{1,x}}{\phi_1})^2dx\le 0.$$
In a similar way, multipling (\ref{eq3.13})  with $\frac{\psi_1^2}{w^2}e^{-\frac{a_2}{d_2}x}$ and integrating the resulting equation over $(x_1,L)$, we get
\begin{equation}\label{eq3.15}
\begin{aligned} \lambda_1^*\int_{x_1}^{L}e^{-\frac{a_2}{d_2}x}\frac{\psi_1^3}{w}dx=&\int_{x_1}^{L}e^{-\frac{a_2}{d_2}x}\frac{\psi_1^2}{w^2}((d_2\psi_{1,xx}-a_2\psi_{1,x})w-(d_2w_{xx}-a_2w_x)\psi_1)dx\\
	&-\int_{x_1}^{L}e^{-\frac{a_2}{d_2}x}\frac{\psi_1^2}{w^2}(pw^2\phi_1+h\psi_1)dx\\
	=:&J_1-J_2.
\end{aligned}
\end{equation}
 Here, $$J_1=-\int_{x_1}^{L}2d_2e^{-\frac{a_2}{d_2}x}\frac{\psi_1^3}{w}(\frac{w_x}{w}-\frac{\psi_{1,x}}{\psi_1})^2dx\ge 0.$$

Next, we consider two cases to complete the proof.

 \textbf{cases 1:} $\frac{a_1}{d_1}\ge \frac{a_2}{d_2}$. Multiplying (\ref{eq3.15}) by $m^3$ and subtracting the resulting equation from (\ref{eq3.14}), one arrives at
 \begin{equation}\label{eq3.16}
 	\begin{aligned}
 		\lambda_1^* \large[\int_{0}^{L} e^{-\frac{a_1}{d_1}x} \frac{\phi_1^3}{u} dx
 		- \int_{x_1}^{L}e^{-\frac{a_2}{d_2}x}\frac{(m\psi_1)^3}{w}dx\large] &= I_1 - I_2 - m^3 J_1 + m^3J_2.\\
 		&\le m^3J_2-I_2,
 	\end{aligned}
 \end{equation}
with
\begin{equation}\label{eq3.170}
\begin{aligned}
	m^3J_2-I_2=&\int_{x_1}^{L}e^{-\frac{a_2}{d_2}x}\frac{\psi_1^2}{w^2}(m^3pw^2\phi_1+m^3h\psi_1)dx\\
	&-\int_{0}^{L}e^{-\frac{a_1}{d_1}x}\phi_1^2(\phi_1+m\chi _{\left( x_1,L \right]}\psi_1)dx\\	<&\int_{x_1}^{L}e^{-\frac{a_1}{d_1}x}\frac{h}{w^2}(m\psi_1)^3dx+\int_{x_1}^{L}e^{-\frac{a_1}{d_1}x}\phi_1(m\psi_1)^2[mpe^{\frac{a_1}{d_1}x-\frac{a_2}{d_2}x}]dx\\
	&-\int_{x_1}^{L}\phi_1^3e^{-\frac{a_1}{d_1}x}dx-\int_{x_1}^{L}\phi_1^2(m\psi_1)e^{-\frac{a_1}{d_1}x}dx\\
\end{aligned}
\end{equation}
Assume that $mpe^{\frac{a_1}{d_1}L-\frac{a_2}{d_2}L}\le1$ and $\underset{x\in \left[ x_1,L \right]}{\min}\frac{h}{w^2}\ge1$. With the facts $\phi_1>0$ on $[0,L]$, $\psi_1<0$ on $[x_1,L]$, we have
\begin{equation}\label{eq3.17}
\begin{aligned}
	m^3J_2-I_2\le&\int_{x_1}^{L}e^{-\frac{a_1}{d_1}x}(m\psi_1)^3dx+\int_{x_1}^{L}e^{-\frac{a_1}{d_1}x}\phi_1(m\psi_1)^2dx\\
	&-\int_{x_1}^{L}\phi_1^3e^{-\frac{a_1}{d_1}x}dx-\int_{x_1}^{L}\phi_1^2(m\psi_1)e^{-\frac{a_1}{d_1}x}dx\\
	=&\int_{x_1}^{L}e^{-\frac{a_1}{d_1}x}(m\psi_1-\phi_1)(m\psi_1+\phi_1)^2dx\\
	\le&0.
\end{aligned}
\end{equation}
Combining \eqref{eq3.16}, \eqref{eq3.170} and \eqref{eq3.17}, we conclude that $\lambda_1^*<0$.

\textbf{cases 2:} $\frac{a_1}{d_1}< \frac{a_2}{d_2}$. In this case, multiplying (\ref{eq3.14}) by $p^3$ and subtracting to (\ref{eq3.15}), one obtains
\begin{equation}\label{eq3.18}
	 \begin{aligned}
		\lambda_1^* [\int_{0}^{L} e^{-\frac{a_1}{d_1}x} \frac{(p\phi_1)^3}{u} dx
		- \int_{x_1}^{L}e^{-\frac{a_2}{d_2}x}\frac{\psi_1^3}{w}dx] &= p^3I_1 - p^3I_2 - J_1+J_2\\
		&\le J_2-p^3I_2.
	\end{aligned}
\end{equation}

  Similarly to \eqref{eq3.17}, one can show that
  \begin{equation}\label{eq3.190}
  	\begin{aligned}
  		J_2-p^3I_2=&\int_{x_1}^{L}e^{-\frac{a_2}{d_2}x}\frac{\psi_1^2}{w^2}(pw^2\phi_1+h\psi_1)dx\\
  		&-\int_{0}^{L}e^{-\frac{a_1}{d_1}x}\phi_1^2(p^3\phi_1+p^3m\chi _{\left( x_1,L \right]}\psi_1)dx\\
  		<&\int_{x_1}^{L}e^{-\frac{a_2}{d_2}}p\phi_1\psi_1^2dx+\int_{x_1}^{L}e^{-\frac{a_2}{d_2}}\frac{h}{w^2}\psi_1^3dx-\int_{x_1}^{L}e^{-\frac{a_2}{d_2}}(p\phi_1)^3dx\\
  		&-\int_{x_1}^{L}e^{-\frac{a_2}{d_2}}\psi_1(p\phi_1)^2[mpe^{\frac{a_2}{d_2}x-\frac{a_1}{d_1}x}]dx\\
  	\end{aligned}
\end{equation}
Assume that $mpe^{\frac{a_1}{d_1}L-\frac{a_2}{d_2}L}\le1$ and $\underset{x\in \left[ x_1,L \right]}{\min}\frac{h}{w^2}\ge1$. With the facts $\phi_1>0$ on $[0,L]$, $\psi_1<0$ on $[x_1,L]$, we have
\begin{equation}\label{eq3.19}
\begin{aligned}
	J_2-p^3I_2	\le&\int_{x_1}^{L}e^{-\frac{a_2}{d_2}}p\phi_1\psi_1^2dx+\int_{x_1}^{L}e^{-\frac{a_2}{d_2}}\psi_1^3dx-\int_{x_1}^{L}e^{-\frac{a_2}{d_2}}(p\phi_1)^3dx\\
  		&-\int_{x_1}^{L}e^{-\frac{a_2}{d_2}}\psi_1(p\phi_1)^2\\
  		=&\int_{x_1}^{L}e^{-\frac{a_2}{d_2}}(\psi_1-p\phi_1)(\psi_1+p\phi_1)^2dx\\
  		\le&0.
  	\end{aligned}
  \end{equation}

  Combining \eqref{eq3.18}, \eqref{eq3.190} and \eqref{eq3.19}, we have $\lambda_1^*<0$.

 From Lemma~\ref{l1}$(\romannumeral 2)$, we have $\underaccent{\sim}{w}<w(x)<\tilde{w}(x)$ in $[x_1,L]$. Combining with Lemma~\ref{lemma 2.2}$(\romannumeral 1)$, we conclude that if $h\ge \tilde{w}^2(L)$, then $\underset{x\in \left[ x_1,L \right]}{\min}\frac{h}{w^2}\ge1$.
Then the  above analysis shows $\lambda_1^*<0$ provided that $mpe^{|\frac{a_1}{d_1}L-\frac{a_2}{d_2}L|}\le1$ and $h\ge \tilde{w}^2(L)$, which implies that the coexistence steady state $(u,w)$ is linearly stable.
\end{proof}
Thus, Theorem~\ref{th1.4} can be derived from Lemma~\ref{lemma3.7}, Proposition~\ref{propA} and Theorem~\ref{th1.1}.
\subsection{Dynamics of system (\ref{eq1.2})}
In this subsection, we analyze the dynamics of system~\eqref{eq1.2}.
It is well-known that the linear stability of $(0,\bar{w})$ is determined by the following eigenvalue
problem
\begin{equation*}
	\begin{cases}
		d_1\phi_{xx}-a_1\phi_x+(r-m\chi _{\left[ 0,x_2 \right)}\bar{w})\phi=\lambda\phi,&x\in (0,L),\\
		d_1\phi_x(0)-a_1\phi(0)=\phi_x(L)=0.
	\end{cases}
\end{equation*}
 Similarly to \cite[Corollary 2.10]{sweers1992strong}, we obtain that
\begin{equation*}
	if \quad \lambda_1(d_1,a_1,r-m\chi _{\left[ 0,x_2 \right)}\bar{w},(0,L))
	\begin{cases}
		<0,\\
		=0,\\
		>0,
	\end{cases}
	\quad
	\text{then }(0,\bar{w}) \text{ is}
	\begin{cases}
		\text{linearly stable},\\
		\text{neutrally stable},\\
		\text{linearly unstable}.
	\end{cases}
\end{equation*}

Let $\lambda_{1}^{N}(d, a, r, (x_2, L))$ represent the principal eigenvalue of the following characteristic equation
	\begin{equation}\label{eq4.3}
		\begin{cases}
			d\varphi_{xx}-a\varphi_x+r\varphi=\lambda^N\varphi,&x_2<x<L,\\
			\varphi(x_2)=\varphi_x(L)=0.
		\end{cases}
	\end{equation}
Then we have the following results.
\begin{proposition}\label{propC}
	Assume that $a<\sqrt{4dr}$.
	\begin{equation}\label{eq4.2}
	If\quad	L-x_2
		\begin{cases}
			< L^*, \\
			= L^* ,\\
			> L^*,
		\end{cases}
		\quad
		\text{then }\quad \lambda_{1}^{N}(d, a, r, (x_2, L))
		\begin{cases}
			< 0, \\
			= 0, \\
			> 0.
		\end{cases}
	\end{equation}	
\end{proposition}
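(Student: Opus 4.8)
The plan is to exploit the constant-coefficient structure of \eqref{eq4.3} together with the weighted self-adjoint variational characterization of the principal eigenvalue. First I would observe that, since $d,a,r$ are constants and the boundary conditions (Dirichlet at the left endpoint, Neumann at the right) are translation-invariant, $\lambda_1^N(d,a,r,(x_2,L))$ depends on $x_2,L$ only through the length $\ell:=L-x_2$. After the shift $x\mapsto x-x_2$ the problem lives on $(0,\ell)$, so I may write $\Lambda(\ell):=\lambda_1^N(d,a,r,(x_2,L))$, and it then suffices to show that $\Lambda$ is strictly increasing in $\ell$ with $\Lambda(L^*)=0$; the trichotomy \eqref{eq4.2} follows immediately.

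Second, I would pin down the zero $\Lambda(L^*)=0$ by solving the ODE at $\lambda=0$ explicitly. Since $a<\sqrt{4dr}$, the characteristic roots of $d\varphi''-a\varphi'+r\varphi=0$ are complex, $\tfrac{a}{2d}\pm i\beta$ with $\beta=\tfrac{\sqrt{4dr-a^2}}{2d}$, so every solution has the form $\varphi(y)=e^{\alpha y}(A\cos\beta y+B\sin\beta y)$ with $\alpha=\tfrac{a}{2d}$. The Dirichlet condition $\varphi(0)=0$ forces $A=0$, and the Neumann condition $\varphi'(\ell)=0$ reduces to the transcendental relation $\tan(\beta\ell)=-\beta/\alpha$. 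Using the explicit formula for $L^*$ I would verify $\beta L^*=\pi-\arctan(\beta/\alpha)\in(\tfrac{\pi}{2},\pi)$, which both satisfies this relation and guarantees $\sin(\beta y)>0$ on $(0,\ell)$; hence $\varphi(y)=e^{\alpha y}\sin(\beta y)$ is positive, confirming that $0$ is indeed the \emph{principal} eigenvalue precisely when $\ell=L^*$.

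Third, for the monotonicity I would pass to the weighted self-adjoint form $(d\,e^{-\frac{a}{d}x}\varphi')'+re^{-\frac{a}{d}x}\varphi=\lambda e^{-\frac{a}{d}x}\varphi$, whose principal eigenvalue (with the essential Dirichlet condition at $0$ and the natural Neumann condition at $\ell$) admits the Rayleigh characterization
\[
\Lambda(\ell)=\sup_{\varphi\in H^1(0,\ell),\,\varphi(0)=0,\,\varphi\not\equiv0}\frac{\int_0^\ell(-d\varphi_x^2+r\varphi^2)e^{-\frac{a}{d}x}\,dx}{\int_0^\ell\varphi^2e^{-\frac{a}{d}x}\,dx}.
\]
The constraint $\varphi(0)=0$ forces $\varphi_x\not\equiv0$ on every admissible competitor, so the numerator is strictly below $r$ times the denominator; evaluating at the principal eigenfunction yields the key auxiliary bound $\Lambda(\ell)<r$. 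To compare $\ell_1<\ell_2$, I would take the principal eigenfunction $\varphi_1$ on $(0,\ell_1)$ and extend it to $(0,\ell_2)$ by the constant $\varphi_1(\ell_1)>0$. This extension adds exactly $r$ times an incremental mass to the numerator and that same incremental mass to the denominator; since the added ratio $r$ strictly exceeds $\Lambda(\ell_1)$, the Rayleigh quotient of the extension strictly exceeds $\Lambda(\ell_1)$, whence $\Lambda(\ell_2)>\Lambda(\ell_1)$.

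Combining the three steps, $\Lambda$ is strictly increasing in $\ell=L-x_2$ and vanishes exactly at $\ell=L^*$, giving \eqref{eq4.2}. The main obstacle I anticipate is the monotonicity step: because the zeroth-order term $r\varphi^2$ enters the numerator with a positive sign, naive domain monotonicity fails, and the argument hinges both on the strict bound $\Lambda(\ell)<r$ and on choosing the \emph{constant} (rather than zero) extension, so that the appended numerator-to-denominator ratio is precisely $r$.
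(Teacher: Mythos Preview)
Your argument is correct. The paper itself does not supply a proof of Proposition~\ref{propC}; it simply states the result, in parallel with Proposition~\ref{propB}, which is attributed to \cite{speirs2001population}. Your route is therefore a genuine addition rather than a paraphrase: you reduce by translation invariance to a one-parameter family $\Lambda(\ell)$, pin down the zero $\Lambda(L^*)=0$ by solving the constant-coefficient ODE explicitly and checking that the resulting eigenfunction $e^{\alpha y}\sin(\beta y)$ stays positive on $(0,L^*)$ (so $0$ is indeed the \emph{principal} eigenvalue there), and then obtain strict monotonicity from the weighted Rayleigh quotient. The monotonicity step is the delicate one and you handle it correctly: the Dirichlet condition at the left endpoint forces any admissible $\varphi$ to be nonconstant, so the achieved quotient satisfies $\Lambda(\ell)<r$; extending the principal eigenfunction on $(0,\ell_1)$ by its boundary value $\varphi_1(\ell_1)$ on $[\ell_1,\ell_2]$ then appends numerator and denominator in ratio exactly $r>\Lambda(\ell_1)$, and the mediant strictly increases. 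Two small points worth making explicit in a write-up are that $\varphi_1(\ell_1)>0$ (immediate from $\varphi_1'(\ell_1)=0$ and ODE uniqueness, else $\varphi_1\equiv0$) and that the Rayleigh supremum is attained, so the strict bound $\Lambda(\ell)<r$ is legitimate rather than merely $\Lambda(\ell)\le r$.
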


 For system~\eqref{eq1.2}, Theorem \ref{th1.5} and \ref{th1.6} can be proved by similar method  to those in Lemma~\ref{lemma3.6} and Theorem~\ref{th1.3}.

Linearizing
system~\eqref{eq1.2} at the coexistence steady state $(u,w)$, we obtain
\begin{equation}\label{eq4.1}
	\begin{cases}
		d_1\phi_{xx}-a_1\phi_x+(r-2u-m\chi _{\left[ 0,x_2 \right)}w)\phi-m\chi _{\left[ 0,x_2 \right)}u\psi=\lambda\phi,&x\in (0,L),\\
		d_2\psi_{xx}-a_2\psi_x-pw\phi-(q+pu)\psi=\lambda\psi,&x\in (0,x_2),\\
		d_1\phi_x(0)-a_1\phi(0)=\phi_x(L)=0,\\
		d_2\psi_x(0)-a_2\psi(0)=d_2\psi_x(x_2)-a_2\psi(x_2)=0.
	\end{cases}
\end{equation}
 From Krein-Rutman Theorem \cite{krein1948linear},  problem~\eqref{eq4.1} admits a principal eigenvalue, denoted by $\bar{\lambda}_1$, with the corresponding principal eigenfunction $(\bar{\phi}_1,\bar{\psi}_1)$, which can be chosen such that $\bar{\phi}_1>0$ in $[0,L]$ and $\bar{\psi}_1<0$ in $[0,x_2]$. Moreover,
 \begin{equation*}
 	If \quad \bar{\lambda}_1
 	\begin{cases}
 		<0,\\
 		=0,\\
 		>0,
 	\end{cases}
 	\quad
 	\text{then }(u,w) \text{ is}
 	\begin{cases}
 		\text{linearly stable},\\
 		\text{neutrally stable},\\
 		\text{linearly unstable}.
 	\end{cases}
 \end{equation*}
Similar to Lemma~\ref{lemma3.7}, we have the following results.
\begin{lemma}\label{lemma4.1}
Assume that $0< a_1<\sqrt{4d_1r}$, $L>L_1^*$,  $h\ge\bar{w}^2(x_2)$ and $pme^{|\frac{a_1}{d_1}-\frac{a_2}{d_2}|x_2}\le 1$. If system $(\ref{eq1.2})$ admits a coexistence steady state $(u,w)$,  then the principal eigenvalue of \eqref{eq4.1} satisfies  $\bar{\lambda}_1<0$, i.e.,  $(u,w)$ is linearly stable.
\end{lemma}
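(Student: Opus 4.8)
The plan is to reproduce the argument of Lemma~\ref{lemma3.7} with the roles of the toxicant habitat $(x_1,L)$ and the protected segment $(0,x_1)$ interchanged: here $\chi_{(x_1,L]}$ is replaced by $\chi_{[0,x_2)}$, the toxicant lives on $(0,x_2)$, and $\bar{w}$ (Lemma~\ref{lemma2.3}) takes the place of $\tilde{w}$. First I would record that a coexistence state $(u,w)$ solves the stationary system attached to~\eqref{eq1.2}, and that the principal eigenpair $(\bar{\lambda}_1,\bar{\phi}_1,\bar{\psi}_1)$ of~\eqref{eq4.1} is normalized so that $\bar{\phi}_1>0$ on $[0,L]$ and $\bar{\psi}_1<0$ on $[0,x_2]$. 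Cross-multiplying the $\phi$-line of~\eqref{eq4.1} by $u$ against the $u$-line of the stationary system tested by $\bar{\phi}_1$ and subtracting, the reaction terms reduce to $-u^2(\bar{\phi}_1+m\chi_{[0,x_2)}\bar{\psi}_1)$, giving the exact analogue of~\eqref{eq3.12}; the same manipulation on the $w$-lines yields the counterpart of~\eqref{eq3.13} carrying the term $-(pw^2\bar{\phi}_1+h\bar{\psi}_1)$.

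Next I would test these two identities against the weights $\frac{\bar{\phi}_1^2}{u^2}e^{-\frac{a_1}{d_1}x}$ on $(0,L)$ and $\frac{\bar{\psi}_1^2}{w^2}e^{-\frac{a_2}{d_2}x}$ on $(0,x_2)$ and integrate by parts, so that the second-order parts become the sign-definite quadratics
$$I_1=-2d_1\int_0^L e^{-\frac{a_1}{d_1}x}\frac{\bar{\phi}_1^3}{u}\Big(\frac{u_x}{u}-\frac{\bar{\phi}_{1,x}}{\bar{\phi}_1}\Big)^2dx\le0,\qquad J_1=-2d_2\int_0^{x_2} e^{-\frac{a_2}{d_2}x}\frac{\bar{\psi}_1^3}{w}\Big(\frac{w_x}{w}-\frac{\bar{\psi}_{1,x}}{\bar{\psi}_1}\Big)^2dx\ge0,$$
the sign of $J_1$ being forced by $\bar{\psi}_1<0$; here $I_2$ and $J_2$ denote the remaining zero-order integrals, as in the proof of Lemma~\ref{lemma3.7}. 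I would then distinguish the cases $\frac{a_1}{d_1}\ge\frac{a_2}{d_2}$ and $\frac{a_1}{d_1}<\frac{a_2}{d_2}$, multiplying the $w$-identity by $m^3$ (resp.\ the $u$-identity by $p^3$) before subtracting; in each case the coefficient of $\bar{\lambda}_1$ on the left is strictly positive (since $\bar{\phi}_1>0$, $u,w>0$ and $\bar{\psi}_1^3<0$), so it suffices to bound the remainder $m^3J_2-I_2$ (resp.\ $J_2-p^3I_2$) above by $0$.

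The decisive step is this remainder estimate. On $(0,x_2)$ I would write $e^{-\frac{a_2}{d_2}x}=e^{-\frac{a_1}{d_1}x}e^{(\frac{a_1}{d_1}-\frac{a_2}{d_2})x}$; because the toxicant is supported on $[0,x_2]$, the exponent is maximized at $x_2$, and the hypothesis $pm\,e^{|\frac{a_1}{d_1}-\frac{a_2}{d_2}|x_2}\le1$ is precisely what absorbs the resulting cross term. The contribution of the protected segment $(x_2,L)$ to $I_2$, namely $\int_{x_2}^L e^{-\frac{a_1}{d_1}x}\bar{\phi}_1^3\,dx>0$, may be discarded (sharpening the inequality), leaving integrals over $(0,x_2)$ that, combined with $\frac{h}{w^2}\ge1$, collapse into the perfect-square factorization $\int_0^{x_2}e^{-\frac{a_1}{d_1}x}(m\bar{\psi}_1-\bar{\phi}_1)(m\bar{\psi}_1+\bar{\phi}_1)^2dx\le0$ in the first case (and the symmetric $(\bar{\psi}_1-p\bar{\phi}_1)(\bar{\psi}_1+p\bar{\phi}_1)^2$ in the second), the sign coming from $m\bar{\psi}_1<0<\bar{\phi}_1$. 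I expect this bookkeeping---aligning every exponential and cubic factor so that a perfect square emerges---to be the one delicate point; it is entirely computational once the hypotheses are imposed.

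Finally I would supply the standing inequality $\frac{h}{w^2}\ge1$ on $[0,x_2]$. By the upper-and-lower-solution argument of Lemma~\ref{l1}(ii), carried out now with $\bar{w}$ in place of $\tilde{w}$, any coexistence state satisfies $0<w<\bar{w}$ on $[0,x_2]$; since $\bar{w}$ is increasing there by Lemma~\ref{lemma2.3}(i), we get $w<\bar{w}\le\bar{w}(x_2)$, whence $w^2\le\bar{w}^2(x_2)\le h$ under the hypothesis $h\ge\bar{w}^2(x_2)$. Assembling the two cases then gives $\bar{\lambda}_1<0$, i.e.\ the linear stability of $(u,w)$.
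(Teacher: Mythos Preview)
Your proposal is correct and follows exactly the route the paper indicates (it simply writes ``Similar to Lemma~\ref{lemma3.7}''): you have correctly transported the cross-multiplication identities, the weighted integrations producing the sign-definite $I_1,J_1$, the $m^3$/$p^3$ scalings in the two advection-ratio cases, the perfect-square factorization, and the bound $w<\bar w\le\bar w(x_2)$ from Lemma~\ref{lemma2.3}(i) that converts $h\ge\bar w^2(x_2)$ into $h/w^2\ge1$. The only adaptations---integrating the $\psi$-identity over $(0,x_2)$ with the Robin condition at both ends, discarding the positive $(x_2,L)$ piece of $I_2$, and bounding $mpe^{(\frac{a_1}{d_1}-\frac{a_2}{d_2})x}$ by its value at $x_2$---are exactly as you describe.
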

 Theorem \ref{th1.7} follows from Lemma~\ref{lemma4.1}, Theorem~\ref{th1.5}, and Proposition~\ref{propA}.

\section{Numerical simulations}
%For systems~\eqref{eq1.1} and \eqref{eq1.2}, in Section 2, the stability distributions of the semi-trivial steady state $(0,\tilde{w})$ and $(0,\bar{w})$ are analyzed in different parameter plane.
In this section, in order to investigate the combined effects of toxicant advection and protected zone size
interactions on population persistence, we will numerically analyzes the stability regions of the coexistence steady state  in the $a_2\text{-}x_1$ plane for system~(\ref{eq1.1}) and  in the $a_2\text{-}z$ ($z:=L-x_2$) plane for system~(\ref{eq1.2}).

\begin{figure}[h]
	\centering
	% 三张图片并列
	\subfigure[]{\includegraphics[width=5cm,height=5cm]{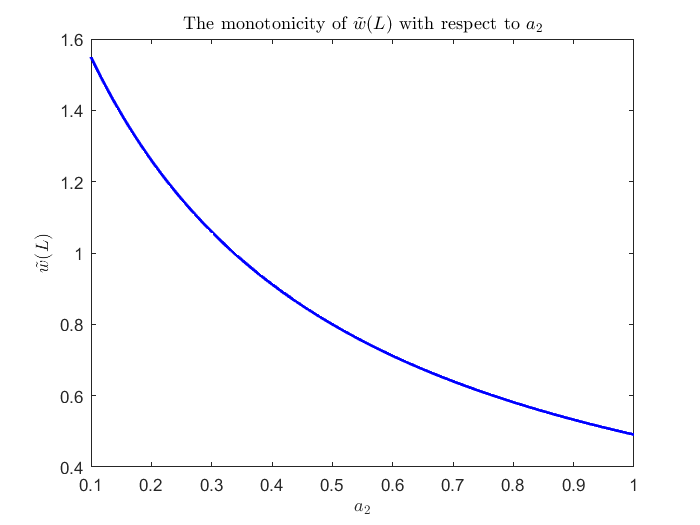}} \hfill
	\subfigure[]{\includegraphics[width=5cm,height=5cm]{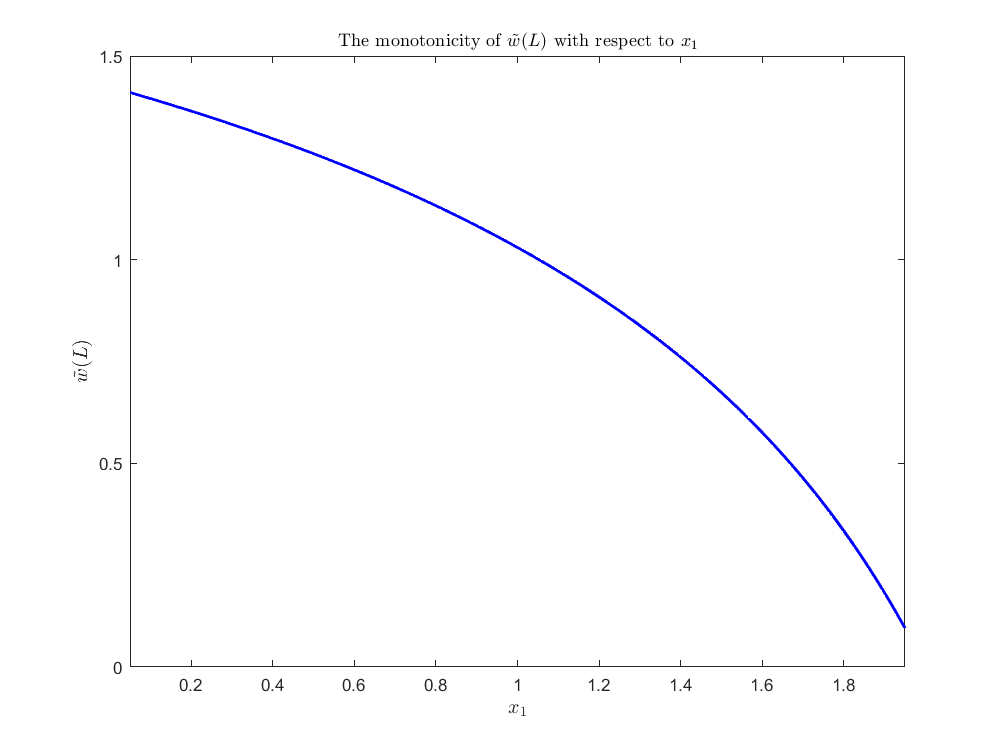}}
\subfigure[]{\includegraphics[width=5cm,height=5cm]{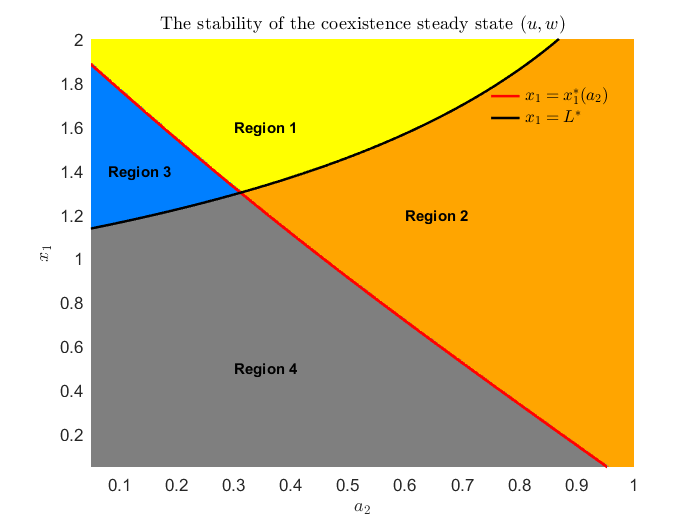}} \hfill
	\caption{ (a): $\tilde{w}(L)$ decreases monotonically with respect to $a_2$ for $x_1=0.5$. (b): $\tilde{w}(L)$ decreases monotonically with respect to $x_1$ for $a_2=0.2$. (c): Under the condition $\frac{a_1}{d_1}=\frac{a_2}{d_2}$, the stability of the coexistence steady state \((u, w)\) is divided into four regions in the \(a_2\text{-}x_1\) parameter plane. Other parameters: $d_1=d_2=0.5, L=2, h=0.4, q=0.2, r=1$.}
	\label{fig:6}
\end{figure}
Theorem~\ref{th1.4} provides a sufficient condition related to $\tilde{w}(L)$ to characterize the regions of population persistence in the $x_1\text{-}a_2$ parameter plane via the stability analysis of the coexistence steady state of system~\eqref{eq1.1}. 
Fig. \ref{fig:6}(a) and Fig. \ref{fig:6}(b) demonstrate the monotonicity of $\tilde{w}(L)$ with respect to $a_2$ and $x_1$, respectively. For the function $h=\tilde{w}^2(L)$, fixing all parameters except $a_2$ and $x_1$, and using the implicit function theorem together with $\tilde{w}(L)$ being monotonically decreasing in $x_1$, we conclude $x_1=x_1^*(a_2)$ is monotonic, which is consistent with the trend shown by the red solid line in Fig. \ref{fig:6}(c).

 Fig. \ref{fig:6}(c) shows that, in the $a_2\text{-}x_1$ parameter plane, under the setting where all parameters are fixed except $a_2$ and $x_1$, and $a_1<\sqrt{4d_1r}$ and $L>L_1^*$, the stability of the coexistence steady state \((u, w)\) is divided into four regions; selecting an appropriate value of $m$ in each region can ensure that \((u, w)\) is stable. The black solid line indicates that, with parameters \(d_1\) and \(r\) fixed, the critical value \(L^{\ast}\) of the protected area increases as the toxin convection velocity \(a_2\) increases.
  When \( \frac{a_1}{d_1} =\frac{a_2}{d_2} \), the stability conclusions for each region in Fig. \ref{fig:6}(c) are as follows:
 \begin{itemize}
 	\item Region 1 (\( h \ge \tilde{w}^2(L), x_1 \ge L^* \)): If \( m < p^{-1} \), the coexistence steady state \( (u, w) \) is globally asymptotically stable.
 	\item Region 2 (\( h \ge \tilde{w}^2(L), x_1 < L^* \)): If \( m < \min\{p^{-1}, m^*\} \), the coexistence steady state \( (u, w) \) is globally asymptotically stable.
 	\item Region 3 (\( h < \tilde{w}^2(L), x_1 \ge L^* \)): The coexistence steady state \( (u, w) \) is linearly stable.
 	\item Region 4 (\( h < \tilde{w}^2(L), x_1 < L^* \)): If \( m < m^* \), the coexistence steady state \( (u, w) \) is linearly stable.
 \end{itemize}
 %It demonstrates that the populations $(u,w)$ are stable under large protection zones, which is consistent with the assertion of instability for $(0,\tilde{w})$ in Theorem 1. For small protection zones, when the convective rate of the toxicant exceeds a critical threshold, population persistence is maintained; Below this threshold, persistence becomes dependent on the protection zone’s area, with the minimum required reserve area increasing as $a_2$ decreases.

\begin{figure}[h]
	\centering
	% 三张图片并列
	\subfigure[]{\includegraphics[width=5cm,height=5cm]{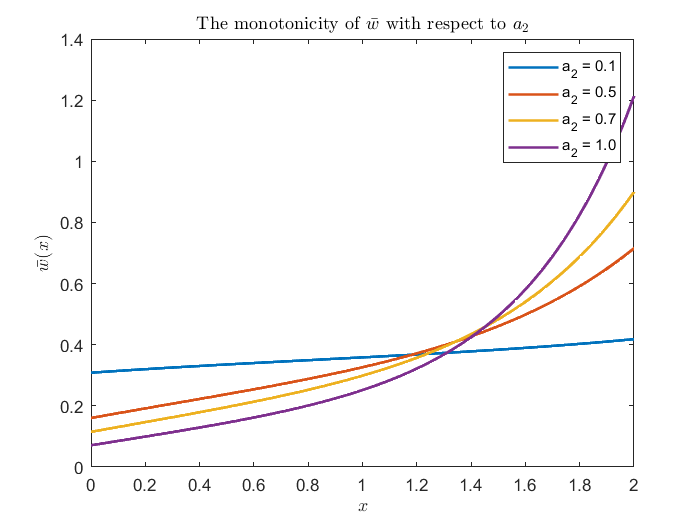}} \hfill
	\subfigure[]{\includegraphics[width=5cm,height=5cm]{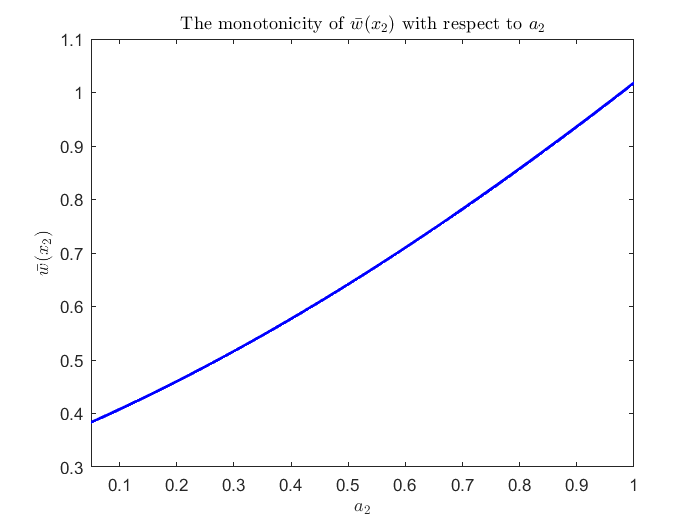}}
	\subfigure[]{\includegraphics[width=5cm,height=5cm]{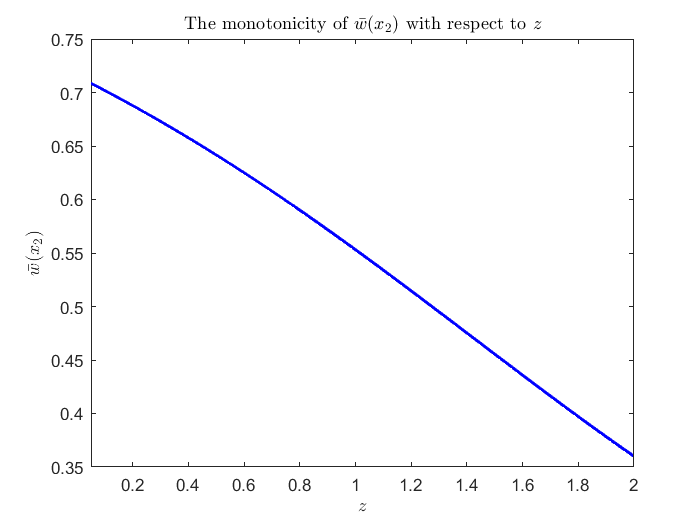}} \hfill
	\caption{(a): The dependence of $\bar{w}$ on the advection rate $a_2$ when only toxicant is present. (b): The monotonicity of $\bar{w}(x_2)$ with respect to $a_2$ for $z=0.5$. (c): The monotonicity of $\bar{w}(x_2)$ with respect to $z$ for $a_2=0.5$. Other parameters: $d_1=d_2=0.5, L=2, h=0.18, q=0.5$.}
	\label{fig:7}
\end{figure}

 For system \eqref{eq1.2}
 Fig. \ref{fig:7}(a) shows that $\bar{w}$ lacks monotonicity with respect to $a_2$ for $x\in(0,x_2)$. Consequently, it is impossible to obtain conclusions similar to Theorem~\ref{th1.2} through theoretical analysis—specifically, analyzing the stability region of $(0,\bar{w})$ in the parameter plane involving $a_2$. 
 while Fig. \ref{fig:7}(b) and Fig. \ref{fig:7}(c) demonstrate the monotonicity of $\bar{w}(x_2)$ with respect to $a_2$ and $z$, respectively. Thus, we can analyze the stability of the coexistence steady state \((u, w)\) of system~\eqref{eq1.2} based on the sufficient condition related to \(\bar{w}(x_2)\) provided by Theorem~\ref{th1.7}.
 For the function $h=\bar{w}^2(x_2)$, fixing all parameters except $a_2$ and $z$, and using the implicit function theorem together with $\bar{w}(x_2)$ being monotonically decreasing in $z$, we conclude $z=z^*(a_2)$ is monotonic,
 which is consistent with the trend shown by the red solid line in Fig. \ref{fig:8}.
 
  Fig. \ref{fig:8} shows that, in the $a_2\text{-}z$ parameter plane, under the setting where all parameters are fixed except $a_2$ and $z$, and $a_1<\sqrt{4d_1r}$ and $L>L_1^*$, the stability of the coexistence steady state \((u, w)\) is divided into four regions; selecting an appropriate value of $m$ in each region can ensure that \((u, w)\) is stable. The black solid line indicates that, with parameters \(d_1\) and \(r\) fixed, the critical value \(L^{\ast}\) of the protected area increases as the toxin convection velocity \(a_2\) increases.
 When \( \frac{a_1}{d_1} =\frac{a_2}{d_2} \), the stability conclusions for each region in Fig. \ref{fig:8} are as follows:
 \begin{itemize}
 	\item Region 1 (\( h \ge \bar{w}^2(L), z \ge L^* \)): If \( m < p^{-1} \), the coexistence steady state \( (u, w) \) is globally asymptotically stable.
 	\item Region 2 (\( h \ge \bar{w}^2(L), z < L^* \)): If \( m < \min\{p^{-1}, \bar{m}\} \), the coexistence steady state \( (u, w) \) is globally asymptotically stable.
 	\item Region 3 (\( h < \bar{w}^2(L), z \ge L^* \)): The coexistence steady state \( (u, w) \) is linearly stable.
 	\item Region 4 (\( h < \bar{w}^2(L), z < L^* \)): If \( m < \bar{m} \), the coexistence steady state \( (u, w) \) is linearly stable.
 \end{itemize}
\begin{figure}[h]
	\centering
	% 三张图片并列
	\includegraphics[width=6cm,height=5cm]{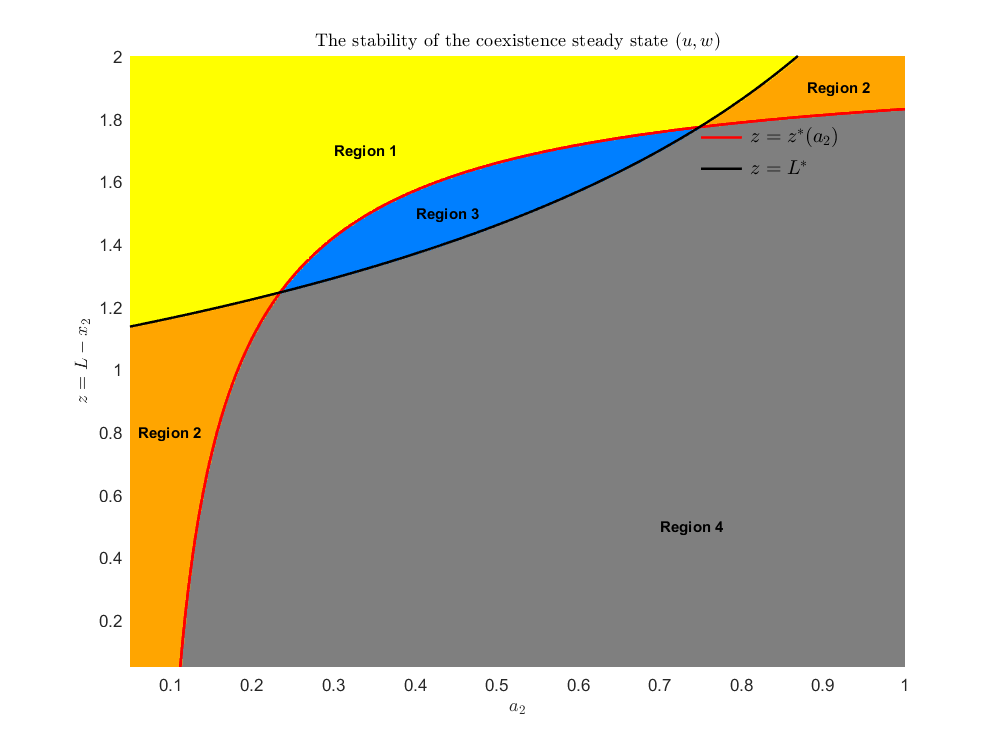}
	\caption{Under the condition $\frac{a_1}{d_1}=\frac{a_2}{d_2}$, the stability of the coexistence steady state \((u, w)\) is divided into four regions in the \(a_2\text{-}z\) parameter plane. Parameter values: $d_1=d_2=0.5, L=2, h=0.18, q=0.5, r=1$.}
	\label{fig:8}
\end{figure}
%This indicates that the coexistence steady state $(u,w)$ is asymptotically stable in large protection zones, consistent with Theorem~\ref{th1.5}, which states that $(0,\bar{w})$ is unstable under the same conditions. For small protection zones, the population persists when the convective rate of the toxicant $a_2$ remains below a critical threshold. Beyond this threshold, persistence becomes dependent on the protection zone’s area, with the minimal required area increasing as $a_2$ increases.
\section{Conclusion}

To protect aquatic species in polluted rivers, this paper develops a reaction-diffusion-advection model for population-toxicant interactions. The model incorporates Danckwerts boundary conditions and two types of boundary protection zones. For both upstream
and downstream protection zone configurations, analysis of toxicant-only steady states yields conditions for population persistence. Furthermore,  applying monotone
dynamical system theory and eigenvalue analysis,  we derive a sufficient condition for the global asymptotic stability of the coexistence steady state.

%To protect aquatic organisms residing in polluted rivers, this paper investigates a reaction-diffusion-advection model for populations interacting with toxicant. The model incorporates Danckwerts boundary conditions and two different types of boundary protection zones. By analyzing the dynamics of toxicant-only steady states, we derive conditions under which the population can persist. Furthermore, by constructing a priori estimates and utilizing the theory of monotone dynamical systems, we establish a sufficient condition for the global asymptotic stability of the coexistence steady state.

Specifically, this paper focuses on the comprehensive effects of the size of the protection zone ($x_1$/$L-x_2$), the population diffusion rate ($d_1$), the population/toxicant advection rate ($a_1$/$a_2$), the intrinsic growth rate of the population ($r$), the toxicant input rate ($h$), the toxicant effect coefficient
on population growth ($m$), and the contaminant discharge rate per unit ($q$) on the persistence of the population.

Our study shows that the population face extinction due to its large advection rate ($a_1\ge\sqrt{4d_1r}$) or a small  advection rate while a short habitat length ($0< a_1<\sqrt{4d_1r}$, $L<L^*_1$) no matter the protected zone is situated at the upstream or the downstream of the river.  When the advection rate of the population is small  and the  river
 length is large $(0< a_1<\sqrt{4d_1r}, L>L_1^*)$, the persistence of the population is determined by the combined interaction of the protection zone length and other biological parameters. Specifically,
\begin{itemize}
  \item When the protected area exceeds a critical spatial threshold ($x_1\ge L^*$, resp. $L-x_2\ge L^*$), population persistence occurs irrespective of toxicant exposure levels.
  \item Below this threshold ($x_1< L^*$, resp. $L-x_2< L^*$), persistence becomes dependent on additional factors. Specifically, if any of the following occurs: the toxicant effect coefficient on population growth ($m$) is sufficiently high, the contaminant discharge rate per unit ($q$) is sufficiently low, or the toxicant input rate ($h$) is sufficiently high, then a larger critical protection zone size is required to ensure persistence.
\end{itemize}

For a fixed protection  zone,
	\begin{itemize}
  \item When the toxicant input $h$ or the effect of toxicant on population growth $m$ is low, or the contaminant export rate $q$ is high,  population $u$ persists regardless of toxicant advection. Conversely, high $h$ or $m$, or low $q$, induces habitat-dependent survival: the population persists at high $a_2$ but goes extinct at low  $a_2$.
  	%\item If the contaminant export rate $q$ is large, then the population $u$ achieves persistent survival for any advection rate of the toxicant; If  $q$ is small, then the persistence of population $u$ becomes habitat-dependent, that is, the population $u$ can persist for large $a_2$, while extinct when rare for small $a_2$.
 	\item  For sufficiently small  $m, h>0$, the population $u$ can always persist. The critical contamination threshold  for the persistence of population $u$ decreases monotonically with increasing $m$.
  	\item  For fixed $m>0$,  the population $u$ can always persist when $q$ is sufficiently large. The critical detoxification threshold increases monotonically with  $m$.
 \item   If the intrinsic growth rate $r$ is sufficiently large, then a low advection rate $a_1$ promotes population persistence; If the intrinsic growth rate $r$ is small, then both a low diffusion rate $d_1$ and a low advection rate $a_1$
are required to ensure population persistence.
 \end{itemize}

Some of our future work will focus on extending the mathematical modeling and analysis to address the following issues:
(1) The impact of the protection zone on population dynamics under spatially heterogeneous exogenous toxicant input rates into the river;
(2) The influence of the protection zone on population dynamics when considering alternative boundary conditions.
A deeper investigation into these questions will provide enhanced scientific and theoretical support for water resource and environmental management authorities. This will facilitate improved monitoring and management of species diversity within aquatic ecosystems, ultimately contributing to the optimization of ecological conservation strategies.

\bibliographystyle{plain}
\bibliography{references}
\end{document}